\theoremstyle{theorem}
\newtheorem{theorem}{Theorem}
\newtheorem{proposition}{Proposition}
\newtheorem{lemma}{Lemma}
\newtheorem{corollary}{Corollary}
\theoremstyle{definition}
\newtheorem{example}{Example}
\newtheorem{definition}{Definition}
\newtheorem{claim}{Claim}
\newtheorem{remark}{Remark}
\newcommand\underrel[2]{\mathrel{\mathop{#2}\limits_{#1}}}
\DeclareMathOperator{\argmax}{arg max}
\DeclareMathOperator{\bin}{bin}
\DeclareMathOperator{\supp}{supp}
\DeclareMathOperator{\w}{w}
\DeclareMathOperator{\dist}{d}
\newcommand{\wm}{w_{\min}}
\newcommand{\Awm}{A_{w_{\min}}(\I)}
\newcommand{\Aiw}{A_{i,w}(\I)}
\newcommand{\Aiwm}{A_{i,w_{\min}}(\I)}
\newcommand{\Ki}{\mathcal{K}_i}
\newcommand{\I}{\mathcal{I}}
\newcommand{\IRM}{\mathcal{I}_{\text{RM}}}
\renewcommand{\H}{\mathcal{H}}
\renewcommand{\S}{\mathcal{S}}
\newcommand{\Si}{\mathcal{S}_i}
\newcommand{\Sj}{\mathcal{S}_j}
\newcommand{\Sm}{\mathcal{S}_m}
\newcommand{\Sc}{\mathcal{S}_c}
\newcommand{\Sf}{\mathcal{S}_f}
\newcommand{\T}{\mathcal{T}}
\newcommand{\Ti}{\mathcal{T}_i}
\newcommand{\mJp}{m_{\mathcal{J}'}}
\newcommand{\JsS}{\mathcal{J}^*(\mathcal{S})}
\newcommand{\J}{\mathcal{J}}
\newcommand{\Jp}{\mathcal{J'}}
\newcommand{\F}{\mathcal{F}}
\newcommand{\M}{\mathcal{M}}
\newcommand{\MJ}{\mathcal{M}(\mathcal{J})}
\newcommand{\C}{\mathcal{C}}
\newcommand{\Ci}{\mathcal{C}_i}
\newcommand{\CI}{\mathcal{C}(\mathcal{I})}
\newcommand{\CIp}{\mathcal{C}(\mathcal{I}')}
\newcommand{\CiI}{\mathcal{C}_i(\mathcal{I})}
\newcommand{\G}{\mathcal{G}}
\newcommand{\B}{\mathcal{B}(\mathcal{I})}
\newcommand{\Bc}{\mathcal{B}(\mathcal{I}^c)}
\newcommand{\Bs}{\mathcal{B}^*(\mathcal{I}^c)}
\newcommand{\D}{\mathcal{D}}
\newcommand{\Dj}{\mathcal{D}_j}
\newcommand{\E}{\mathcal{E}}
\newcommand{\Ej}{\mathcal{E}_j}
\newcommand{\K}{\mathcal{K}}
\newcommand{\Q}{\mathcal{Q}}
\newcommand{\R}{\mathcal{R}}
\renewcommand{\O}{\mathcal{O}}
\newcommand{\JJ}{\mathfrak{J}(\mathcal{J})}
\newcommand{\SmJp}{\mathcal{S}_{m_{\mathcal{J}'}}}
\newcommand{\dm}{d_{\min}}
\newcommand{\Adm}{A_{d_{\min}}}
\newcommand{\bgi}{\mathbf{g}_i}
\newcommand{\bgj}{\mathbf{g}_j}
\newcommand{\bgm}{\mathbf{g}_m}
\newcommand{\bgf}{\mathbf{g}_f}
\newcommand{\bgh}{\mathbf{g}_h}
\newcommand{\bg}{\mathbf{g}}
\newcommand{\bGN}{\boldsymbol{G}_N}
\newcommand{\bc}{\boldsymbol{c}}
\newcommand{\bb}{\boldsymbol{b}}
\newcommand{\bv}{\boldsymbol{v}}
\newcommand{\bx}{\boldsymbol{x}}
\newcommand{\bG}{\boldsymbol{G}}
\newcommand{\bA}{\boldsymbol{A}}
\newcommand{\ff}{\mathbb{F}}
\newcommand{\ft}{\mathbb{F}_2}
\newcommand{\fq}{\mathbb{F}_q}
\newcommand{\cC}{{\mathscr{C}}}
\newcommand{\hg}[1]{\textcolor{black}{#1}}
\title{
On the Formation of Min-weight Codewords of Polar/PAC Codes and Its Applications
}
\author{
\IEEEauthorblockN{Mohammad Rowshan, {\em Member, IEEE}, Son Hoang Dau, {\em Member, IEEE}, and Emanuele Viterbo, {\em Fellow, IEEE}}
\thanks{Mohammad Rowshan is currently with the  School of Electrical Engineering and Telecommunications, The Univerity of New South Wales (UNSW), Sydney, NSW 2052, Australia. E-mail: m.rowshan@unsw.edu.au. This research was carried out during his Ph.D. program at Monash University.}
\thanks{Emanuele Viterbo is with the  Department of Electrical and Computer Systems Engineering (ECSE), Monash University, Melbourne, VIC 3800, Australia. E-mail: emanuele.viterbo@monash.edu.}
\thanks{Son Hoang Dau is with the School of Computing Technologies, RMIT University, Melbourne, VIC 3000, Australia. E-mail: sonhoang.dau@rmit.edu.au.}
\thanks{This work was supported by the Australian Research Council under Discovery Project ARC DP200100731 and DECRA Project DE180100768.}
\thanks{This paper was presented in part at the 2022 IEEE Information Theory Workshop (ITW), Mumbai, India \cite{rowshan2022errcoeff}.}
}
\begin{document}

\maketitle
\thispagestyle{empty}
\pagestyle{empty}
\thispagestyle{fancy}
\lhead{\centering This paper, previously titled ``{\color{blue}Error Coefficient-reduced Polar/PAC Codes}",  will appear in {\em IEEE Trans. on Inf. Theory}. } 
\cfoot{}

\begin{abstract}
Minimum weight codewords play a crucial role in the error correction performance of a linear block code. In this work, we establish an explicit construction for these codewords of polar codes as a sum of the generator matrix rows, which can then be used as a foundation for two applications. 
In the first application, we obtain a lower bound for the number of minimum-weight codewords (a.k.a. the error coefficient), which matches the exact number established previously in the literature.
In the second application, we derive a novel method that modifies the information set (a.k.a. rate profile) of polar codes and PAC codes in order to reduce the error coefficient, hence improving their performance.
More specifically, by analyzing the structure of minimum-weight codewords of polar codes (as special sums of the rows in the polar transform matrix), we can identify rows (corresponding to \textit{information} bits) that contribute the most to the formation of such codewords and then replace them with other rows (corresponding to \textit{frozen} bits) that bring in few minimum-weight codewords.
A similar process can also be applied to PAC codes.
Our approach deviates from the traditional constructions of polar codes, which mostly focus on the reliability of the sub-channels, by taking into account another important factor - the weight distribution. 
Extensive numerical results show that the modified codes outperform PAC codes and CRC-Polar codes at the practical block error rate of $10^{-2}$-$10^{-3}$.
\end{abstract}

\begin{IEEEkeywords}
Polarization-adjusted convolutional codes, PAC Codes, polar codes, minimum Hamming distance, weight distribution, list decoding, code construction, rate profile.
\end{IEEEkeywords}

\section{Introduction}
\label{sec:intro}
Polar codes \cite{arikan} are the first class of constructive channel codes that was proven to achieve the symmetric (Shannon) capacity of a binary-input discrete memoryless channel (BI-DMC) using a low-complexity successive cancellation (SC) decoder. 
However, the error correction performance of polar codes under SC decoding is not competitive. To address this issue, successive cancelation list (SCL) decoding was proposed in \cite{tal} which yields an error correction performance comparable to maximum-likelihood (ML) decoding at high SNR. Further improvement was obtained by concatenation of polar codes and cyclic redundancy check (CRC) bits \cite{tal} or parity check (PC) bits \cite{trifonov2,zhang}, and by convolutional pre-transformation, a.k.a. polarization-adjusted convolutional (PAC) codes \cite{arikan2}. 

The error correction performance of linear codes under ML decoding can be estimated by the Union bound \cite[Sect. 10.1]{lin_costello} based on the weight distribution. As the truncated Union bound, in particular at high SNR regimes, suggests, the number of minimum Hamming weight codewords (a.k.a. error coefficient) has the largest contribution to the calculation of this bound. Given the importance of the number of minimum-weight codewords, several attempts pursuing the enumeration of weight distribution, and in particular the minimum-weight codewords of polar codes, have been undertaken in the past. 

In \cite{liu_analys}, the authors proposed sending the all-zero codeword over a channel with low noise, or receiving at very high SNR, and counting the re-encoded messages with certain weights at the output of a successive cancellation list decoder with a very large list size. The method presented in \cite{valipour} suggests efficient computation of a probabilistic weight distribution expression.  
In \cite{bardet,bardet_arxiv}, a closed-form expression was proposed for the enumeration of
min-weight codewords of decreasing monomial codes, a large family of codes that includes polar codes and Reed-Muller codes. This work was recently extended in \cite{rowshan23closed} to the structure and enumeration of weights less than twice the minimum weight, in particular 1.5 times the minimum weight for polar codes. 
The authors in \cite{zhang_prob} proposed a way to obtain an approximate distance spectrum of polar codes with long lengths using the spectrum of short codes and a probabilistic assumption on the appearance of ones in codewords. Based on the weight distribution of $|u|u+v|$ constructed codes in \cite{fossorier}, the weight distribution of the words generated by the polar transform was found recursively in \cite{polyanskaya}. Note that this work does not count the codewords of a specific code where a subset of the rows of the polar transform is frozen, that is, it is not involved in the codeword formation. This shortcoming was addressed in \cite{yao} by proposing a recursive algorithm that counts all codewords from polar codes with any weight based on a specific definition of cosets. The authors of \cite{yao} also exploited the properties of monomial codes from \cite{bardet} to reduce the complexity of the proposed algorithm. Nevertheless, their algorithm cannot be used for medium and long block lengths. 

From a different perspective, the error coefficient of a code depends on the code construction. Polar codes are constructed by selecting good synthetic channels based on the reliability of the sub-channels in the polarized vector channel. Note that the vector channel is obtained from combining independent channels recursively, which results in polarized sub-channels. Bad synthetic channels are used for the transmission of known values (usually 0). The mapping of information bits to good sub-channels is performed based on a rate profile. Good sub-channels are selected based on various methods for the evaluation of sub-channels' reliability. In \cite{arikan}, a method based on the evolution of the Bhattacharyya parameters was used, and the Bhattacharyya parameters evolved through the channel combining process were the reliability metrics for binary erasure channels (BEC). This method does not provide an accurate reliability metric for low-reliability sub-channels under additive white Gaussian noise (AWGN) channels. Density evolution (DE) was proposed in \cite{mori} for a more accurate reliability evaluation. However, it suffers from excessive complexity. To reduce the complexity of DE, a method based on the upper bound and the lower bound on the error probability of the sub-channels was proposed in \cite{tal2}.  To further reduce the computational complexity of DE, the Gaussian approximation (GA) to evolve the mean log-likelihood ratios (LLR) throughout the decoding process in \cite{trifonov} which was based on \cite{chung}. There are also SNR-independent low-complexity methods for reliability evaluation. In \cite{schurch}, a partial ordering of sub-channels was proposed based on their indices. A method for ordering all the sub-channels was suggested in \cite{he} based on the binary expansion of the sub-channels indices. This method is known as the polarization weight (PW) method.

The aforementioned code construction methods estimate the reliability of sub-channels with different precision levels and various levels of computational complexity. However, selecting only good channels, i.e. sub-channels with the highest reliability, may result in poor weight distribution. Hence, to obtain a good error correction performance, one may not rely only on the reliability of the individual sub-channels. In \cite{rowshan-how}, an approach was proposed for constructing codes for list decoding in which the probability of elimination of the correct sequence in different sub-blocks of a code is balanced. In this scheme, a code obtained from traditional code construction methods is modified. A different method was suggested in \cite{trifonov-rand} to construct randomized polar subcodes that rely on the explicit enumeration of low-weight codewords in a polar code and the construction of dynamic freezing constraints (DFC) to eliminate most of these codewords. The numerical results have shown a significant performance gain for 1 kb code-length in high-SNR regimes and block error rate (BLER) below $10^{-4}$ and $10^{-5}$.
However, the DFCs are optimized and compared to non-optimized polar subcodes and CRC-polar codes.
Some other approaches such as in \cite{coskun,miloslavskaya} were proposed for designing improved polar-like codes for list decoding as well, although they do not provide explicit procedures for constructing a code. 

In this work, we first study the properties of the polar transform, a matrix resulting from the Kronecker power of the 2x2 binary Hadamard matrix, and characterize the rows involved in the generation of minimum-weight codewords. 
Although our characterization rediscovers a known formula for the number of minimum-weight codewords of polar codes developed in Bardet \textit{et al.}~\cite{bardet, bardet_arxiv}, it offers a different perspective that facilitates a novel approach for code modifications. 
Based on this, we propose a simple, low-complexity, and explicit method to modify polar and PAC codes to reduce the error coefficient $\Adm$, i.e. the number of minimum-weight codewords. Our method seeks to balance the competing effects of reducing the error coefficient and using some less reliable sub-channels to improve the error performance. 
The codes designed by this approach outperform polar codes and PAC codes in terms of block error rate (BLER) in certain regimes. 
More specifically, as demonstrated by the numerical results, the proposed codes have an edge at low and medium SNR regimes (where the gain is usually harder to achieve), and the BLER of $10^{-2}$-$10^{-3}$ over polar codes and their well-known variants. This BLER level is commonly used in many use cases, except in ultra-reliability low-latency communications (URLLC). Furthermore, we compare our results with the BLER lower bound for finite-length codes as a reference. In summary, our contributions are given below. 
\begin{itemize}
 \item We establish a construction of minimum-weight codewords in a polar code as a sum of a row~$i$ of minimum weight $\wm$, a set of \textit{core} rows (rows that are at distance $\wm$ from the row $i$), and a set of \textit{balancing} rows, which brings the weight of the sum back to $\wm$. This construction (see Theorem~1) immediately leads to a lower bound on the number of minimum-weight codewords in a polar code. 
\item We provide an analysis of error coefficient improvement in the convolutional precoding process in PAC coding.
\item Based on our new understanding of the structures of minimum-weight codewords, we develop a code modification procedure to improve the error coefficient of polar and PAC codes, targeting the low SNR regimes and BLER of $10^{-2}$-$10^{-3}$.
\end{itemize}

\textbf{Paper Outline:}
The rest of the paper is organized as follows. 
We provide in Section II basic concepts and notations in coding theory, as well as introduce Reed-Muller codes and polar codes and the relationship between them. 
In Section III, we study the special formation of minimum-weight codewords in polar codes. 
In Sections IV and V, leveraging the new insight regarding such formation, we propose a method to improve the error coefficient of polar codes by carefully modifying existing codes. 
We discuss in Section VI the impact of precoding on the error coefficient of existing polar codes and modified ones. In Section VII, we analyze the trade-off between the improvement of the error coefficient and the overall reliability at different SNR regimes.
The numerical results of the proposed construction are provided in Section VIII, while concluding remarks are given in Section IX. The Appendix contains several parts, which provide a MATLAB script for the enumeration of minimum-weight codewords (Appendix A), the relation between the error coefficient and block error probability (Appendix B), fundamental properties of polar transform (Appendix C), and a full proof for Theorem~\ref{thm:decomposition}, which is about the formation of the minimum-weight codewords in polar codes (Appendix D).

\section{Preliminaries} 
\label{sec:pre}

\subsection{Basic Concepts in Coding Theory}
\label{subsec:basic}

We denote by $\fq$ the finite field with $q$ elements. In this work we concentrate only on binary codes, that is, $q = 2$. 
The cardinality of a set is denoted by $|\cdot|$. The notation $\mathcal{V}_i^j$ represents a vector $V_i,V_{i+1},\cdots,V_j$.
We define in the following standard notions from coding theory (for instance, see \cite{lin_costello}).
The \emph{support} of a vector $\bc = (c_0,\ldots,c_{N-1}) \in \fq^N$ is the set of indices where $\bc$
has a non-zero coordinate, that is, $\supp(\bc) \triangleq \{i \in [0,N-1] \colon c_i \neq 0\}$. 
The (Hamming) \emph{weight} of a vector $\bc \in \fq^N$ is $\w(\bc)\triangleq |\supp(\bc)|$, which is the number of non-zero coordinates of $\bc$. 
For the two vectors $\bc = (c_0, c_1, \ldots, c_{N-1})$ and $\bc' = (c'_0, c'_1, \ldots, c'_{N-1})$ in $\fq^N$, 
the (Hamming) \emph{distance} between $\bc$ and $\bc'$ is defined to be the number of coordinates 
where $\bc$ and $\bc'$ differ, namely, 
\[
\dist(\bc,\bc') = |\{i \in [0,N-1] \colon c_i \ne c'_i\}|. 
\]
A $K$-dimensional subspace $\cC$ of $\fq^N$ is called a linear $(N,K,d)_q$ \emph{code} over $\fq$ 
if the minimum distance of $\C$, 
\[
\dist(\C) \triangleq \min_{\bc,\bc' \in \C, \bc \neq \bc'} \dist(\bc,\bc'), 
\]
is equal to $d$. Sometimes we use the notation $(N,K,d)$ or just $(N,K)$ for brevity. We refer to $N$ and $K$ as the \emph{length} and the \emph{dimension} of the code. The vectors in $\C$ are called \emph{codewords}. It is easy to see that the minimum-weight of a no-nzero codeword in a linear code $\C$ is equal to its minimum distance $\dist(\C)$. 
A \emph{generator matrix} $\G$ of an $(N,K)_q$ code $\cC$ is a $K \times N$ matrix in $\fq^{K\times N}$ whose rows are $\fq$-linearly independent codewords of $\C$. Then $\C = \{\bv \G \colon \bv \in \fq^K\}$.
We denote the number of codewords in $\cC$ with weight $w$ by $A_{w}(\cC)$. For brevity, we may drop $\cC$ and simply write $A_w$.

Let $[\ell,u]$ denote the range $\{\ell,\ell+1,\ldots,u\}$.
The binary representation of $i \in [0,2^n-1]$ is defined as  $\bin(i)=i_{n-1}...i_1i_0$, where $i_0$ is the least significant bit, that is $i = \sum_{a=0}^{n-1}i_a 2^a$. 
For $i \in [0,2^n-1]$, let $\Si$ denote the support of $\bin(i)$, that is, 
\[
\Si\triangleq\supp(\bin(i)) = \{a\in [0,n-1]\colon i_a = 1\}\subseteq [0,n-1].
\]
This is an important notation that we will use throughout this work.  
For instance, for $i=6=(00110)_2$, $\Si=\{1,2\}$. 
Note that the Hamming weight of $\bin(i)$ is $\w(\bin(i)) = |\Si|$.
We will use interchangeably $i\in [0,2^n-1]$ and $\Si$ as the index subscript of a codeword coordinate, i.e. $c_i=c_{\Si}$.
For example, when $n = 5$, we may use $\Si=\{1,3\}$ to refer to the index $i = 10$, which has $\bin(i)=01010$, and write $c_{\{1,3\}}$ instead of $c_{10}$.
We also define $\Si$'s complement $\mathcal{T}_i$ as 
$\Ti\triangleq [0,n-1]\setminus\Si$.
For instance, when $n = 5$ and $i=10$, we have $\Ti=\{0,2,4\}$.

\subsection{Reed-Muller Codes and Polar Codes}
\label{subsec:RMPolar}
Reed-Muller (RM) codes and polar codes of length $N=2^n$ are constructed based on the $n$-th Kronecker power of binary Walsh-Hadamard matrix  
$\mathbf{G}_2 = 
{\footnotesize \begin{bmatrix}
1 & 0 \\
1 & 1
\end{bmatrix} }$, that is, $\bGN=\mathbf{G}_2^{\otimes n}$, which is referred to as {\em polar transform} throughout this paper. We denote polar transform by rows as
\begin{equation}
    \bGN= \begin{bmatrix}
        \bg_0\\
        \bg_1\\
        \vdots\\
        \bg_{N-1}
    \end{bmatrix}.
\end{equation}
A generator matrix of RM code or polar code is formed by selecting a set of rows of $\bGN$. We use $\I$ to denote the set of indices of these rows and $\CI$ to denote the linear code generated by the set of rows of $\bGN$ indexed by $\I$. Note that $\I \subseteq [0,N-1]=[0,2^n-1]$. We describe below how to select the information sets $\I$ for RM and polar codes, respectively.

\textbf{Reed-Muller Codes}. The generator matrix of RM code of length $2^n$ and order $r$, denoted RM$(r,n)$, is formed by the set of all rows $\bgi, i\in[0,N-1]$, of weight $\w(\bgi)\geq 2^{n-r}$, which is the minimum-weight $\wm$ of the code. Therefore, the information set $\IRM$ of RM$(r,n)$ is created as follows. 
\[
\IRM = \{i\in [0,2^n-1]\colon \w(\bgi)\geq 2^{n-r}\}.
\]
The dimension of RM$(r,n)$ is $K=|\IRM|=\sum_{\ell=0}^r \binom{n}{\ell}$. 
The concept of order $r$ in the RM($r, n$) code comes from the wedge products of $N$-tuple $\mathbf{v}_{i+1}=\mathbf{g}_{N-2^i-1}, i\in[0,n-1]$ up to degree $r$, where $\bgj$ is the $j$-th rows of $\bGN$. By default, $\mathbf{v}_0 = \mathbf{g}_{N-1} = [1\;1\;...\;1]$. For instance, when $n=3$ we obtain 
\begin{gather*}
    \mathbf{v}_0 = \mathbf{g}_7 = [1\;1\;1\;1\;1\;1\;1\;1],\\    \mathbf{v}_1 = \mathbf{g}_6 = [1\;0\;1\;0\;1\;0\;1\;0],\\
    \mathbf{v}_2 = \mathbf{g}_5 = [1\;1\;0\;0\;1\;1\;0\;0],\\
    \mathbf{v}_3 = \mathbf{g}_3 = [1\;1\;1\;1\;0\;0\;0\;0].
\end{gather*}
The vectors $\mathbf{v}_i, i\in[0,3]$ in the example above form the generator matrix of RM(1,3). As an example for order 2, the generator matrix for RM$(2,3)$ is given by
\begin{equation}\label{eq:G_RM2_3}
\mathbf{G}_{\text{RM}(2,3)}\!=\! 
\begin{bmatrix}
\mathbf{v}_2{\tiny\wedge}\mathbf{v}_3 \\
\mathbf{v}_1{\tiny\wedge}\mathbf{v}_3 \\
\mathbf{v}_3 \\
\mathbf{v}_1{\tiny\wedge}\mathbf{v}_2 \\
\mathbf{v}_2 \\
\mathbf{v}_1 \\
\mathbf{v}_0 \\
\end{bmatrix}\!=\!
\begin{bmatrix}
1 & 1 & 0 & 0 & 0 & 0 & 0 & 0\\
1 & 0 & 1 & 0 & 0 & 0 & 0 & 0\\
1 & 1 & 1 & 1 & 0 & 0 & 0 & 0\\
1 & 0 & 0 & 0 & 1 & 0 & 0 & 0\\
1 & 1 & 0 & 0 & 1 & 1 & 0 & 0\\
1 & 0 & 1 & 0 & 1 & 0 & 1 & 0\\
1 & 1 & 1 & 1 & 1 & 1 & 1 & 1
\end{bmatrix}
\end{equation}
which has rows with minimum Hamming weight $2^{n-r}=2^{3-2}=2$. One can observe that the generator matrix of RM$(n,n)$ is $\bGN$. In the example above, only $\mathbf{g}_0$ of $\mathbf{G}_8$ which has weight 1 is not included in \eqref{eq:G_RM2_3}. 

\textbf{Polar Codes}. The characterisation of the information set $\I$ for polar codes is more cumbersome, relying on the concept of \textit{bit-channel reliability}. 
We discuss this in detail in the next few paragraphs.

The key idea of polar codes of length $N=2^n$ lies in using a polarization transformation that converts $N$ identical and independent copies of any given binary-input discrete memoryless channel (BI-DMC) $W$  into $N$ synthetic channels $\{W^{(i)}_N, 0\leq i\leq N-1\}$ which are either better or worse than the original channel $W$ 
\cite{arikan}. 
We define $W_N(y^{N-1}_0|u^{N-1}_0)=W_N(y^{N-1}_0|u^{N-1}_0\bGN)$ as the polarized vector channel from the transmitted bits $u^{N-1}_0$ where $y^{N-1}_0$ are the received signals from the $N$ copies of the physical channel $W$. The bit-channel $W^{(i)}_N, i\in[0,N-1]$ is implicitly defined as
\begin{equation*} 
W_{N}^{(i)}\left ({y_{1}^{N}, u_{1}^{i-1}|u_{i}}\right ) = \sum _{u_{i+1}^{N}} \frac {1}{2^{N-i}}W_{N}\left ({y_{1}^{N}|u_{1}^{N}}\right )\!. 
\end{equation*}
The channel polarization theorem \cite{arikan} states that the symmetric capacity of the bit-channel $W^{(i)}_N$, denoted $I(W^{(i)}_N)$, converges to either 0 or 1 as $N$ approaches infinity. It can also be shown that the fraction of the channels that become perfect converges to the capacity of the original channel $W$, i.e., $I(W)$, meaning that polar codes are {\em capacity achieving} while the fraction of extremely bad channels approaches to ($1-I(W)$). 

Hence, a polar code of length $N=2^n$ is constructed by selecting a set $\I$ of indices $i\in[0,N-1]$ with the highest $I(W^{(i)}_N)$. The indices in $\I$ are dedicated to information bits, while the rest of the bit-channels with indices in $\mathcal{I}^c \triangleq [0,N-1]\setminus \I$ are used to transmit a known value, `0' by default, which are called \emph{frozen bits}. 
Regardless of the method we use for forming the set $\mathcal{I}$ for a polar code, the bit-channels with indices in the set $\mathcal{I}$ must be more reliable than any bit-channels in $\mathcal{I}^c$. The notation $W^{(i)}_N\preceq W^{(j)}_N$ is used to say that the bit-channel $j$ is more reliable than bit-channel $i$. 
In summary, a polar code can be defined by any set $\mathcal{I} \subseteq [0,N-1]$ satisfying $W^{(i)}_N\preceq W^{(j)}_N$ for every $j\in\mathcal{I}, i\in\mathcal{I}^c$. Such a code has dimension $K = |\I|$.

\subsection{Partial Order Property and a Generalization of Reed-Muller and Polar Codes}
\label{subsec:POP}

In the first part of this work, we 
identify the minimum-weight codewords for a more general family of linear codes $\CI$ that includes both RM codes and polar codes as special cases. 
This family of codes is defined based on the partial orders introduced in the literature of polar codes (\cite{mori, schurch, wu}), which are based on the binary representations of the bit-channel indices and conveniently abstracts away the cumbersome notion of bit-channel reliability. We first define in Definition~\ref{def:PO} these partial orders, combined as a single partial order, and then the so-called \textit{Partial Order Property} that the information set $\I$ of these codes needs to satisfy in Definition~\ref{def:POP}.
We came to know when writing that the same family of code had been also investigated in the previous work by Bardet \textit{et al.}~\cite{bardet, bardet_arxiv} under the name of \textit{decreasing monomial codes}. 

\begin{definition}[Partial Order]
\label{def:PO}
Given $i,j\in [0,2^n-1]$, we denote $i \preceq j$ or $j \succeq i$ if they satisfy one of the following conditions:
\begin{itemize}
    \item $\Si \subseteq \Sj$, 
    \item $\Sj = (\Si\setminus \{a\})\cup \{b\}$ for some $a\in \Si$, $b \notin \Si$ and $a < b$ (i.e., $\bin(j)$ is obtained from $\bin(i)$ by swapping a `1' in  $\bin(i)$ and a `0' at a higher index),
    \item there exists $k\in [0,2^n-1]$ satisfying $i\preceq k$ and $k\preceq j$,
\end{itemize}
where $\Si \triangleq \supp(\bin(i))\subseteq [0,n-1]$, which consists of the indices where $i$ has a `1' in its binary representation.
Note that $i \preceq j$ implies that $i \leq j$ but not vice versa.
\end{definition}

It is straightforward to verify that Definition~\ref{def:PO} defines a partial order, i.e. a binary relation on the set $[0,2^n-1]$ satisfying reflexivity, antisymmetry, and transitivity.

It turns out that the relative reliability of some pairs of the bit-channels with indices in $[0,N-1]$ can be determined using the partial order defined in Definition~\ref{def:PO} as follows.

\begin{proposition}[\cite{mori, schurch, wu, bardet, bardet_arxiv}]
\label{pro:PO}
If $j \succeq i$ then the bit-channel $W^{(j)}_N$ is more reliable than the bit-channel $W^{(i)}_N$.

\end{proposition}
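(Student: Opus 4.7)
The proof naturally decomposes along the three generating rules of the partial order in Definition~\ref{def:PO}. Since the reliability relation on bit-channels is itself transitive, the transitive-closure clause is automatic, and it suffices to handle the subset case $\Si \subseteq \Sj$ and the single-swap case $\Sj=(\Si\setminus\{a\})\cup\{b\}$ with $a<b$ separately. The backbone of the argument is the recursive polarisation identity
\[
W_{2M}^{(2\ell)} = (W_M^{(\ell)})^{-}, \qquad W_{2M}^{(2\ell+1)} = (W_M^{(\ell)})^{+},
\]
which, unrolled $n$ times, realises each $W_N^{(i)}$ as the outcome of a length-$n$ sequence of elementary polar transforms applied to $W$: at level $k$ one applies $(\cdot)^{+}$ if the $k$-th bit of $\bin(i)$ is $1$ and $(\cdot)^{-}$ otherwise.

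For the subset case I would invoke two standard facts about the polarisation operators: (i) the pointwise ordering $V^{-} \preceq V \preceq V^{+}$ for every BI-DMC $V$, and (ii) the monotonicity $V_1 \preceq V_2 \Rightarrow V_1^{\pm} \preceq V_2^{\pm}$. An induction on $|\Sj \setminus \Si|$ then suffices: at each step one level of the recursion flips from $(\cdot)^{-}$ to $(\cdot)^{+}$, raising the intermediate channel by~(i), and (ii) propagates the inequality through all remaining levels.

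The swap case is the main obstacle. By chaining adjacent transpositions and invoking transitivity, one reduces the general swap to the consecutive-levels case $b=a+1$; monotonicity then ``freezes'' the remaining $n-2$ levels, and the full inequality $W_N^{(i)} \preceq W_N^{(j)}$ collapses to the single channel-level statement
\[
(V^{-})^{+} \;\preceq\; (V^{+})^{-} \qquad \text{for every BI-DMC } V.
\]
On the binary erasure channel this is immediate from the identities $Z(V^{+}) = Z(V)^{2}$ and $Z(V^{-}) = 2Z(V)-Z(V)^{2}$, which give the explicit equality
\[
Z\bigl((V^{-})^{+}\bigr) - Z\bigl((V^{+})^{-}\bigr) \;=\; 2Z(V)^{2}\bigl(1 - Z(V)\bigr)^{2} \;\ge\; 0.
\]
For an arbitrary symmetric BI-DMC this is the content of Sch\"urch's theorem \cite{schurch}, which I would prove by exhibiting an explicit stochastic-degradation map $P$ such that $(V^{-})^{+}$ factors as $(V^{+})^{-} \circ P$. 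Constructing $P$---tracking how the $+/-$ operations reshape the output alphabet of the intermediate channels---is where the technical work lies. Once $P$ is in hand, the ordering propagates to any reasonable reliability measure (error probability, Bhattacharyya parameter, or symmetric capacity), and combining the two base cases with transitivity yields the proposition.
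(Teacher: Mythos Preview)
The paper does not supply a proof of this proposition at all: it is stated as a known result, attributed to \cite{mori, schurch, wu}, and used as a black box thereafter. There is therefore nothing in the paper to compare your argument against.

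That said, your sketch is a faithful reconstruction of the argument found in the cited references. The subset case via the monotonicity relations $V^{-}\preceq V\preceq V^{+}$ and $V_1\preceq V_2\Rightarrow V_1^{\pm}\preceq V_2^{\pm}$ is essentially the content of \cite{mori}, and the swap case---reducing to the two-level inequality $(V^{-})^{+}\preceq (V^{+})^{-}$ and exhibiting an explicit degrading map---is precisely Sch\"urch's contribution in \cite{schurch}. One point you glide over: when there are $1$-bits strictly between positions $a$ and $b$, the reduction to \emph{adjacent} transpositions is not completely immediate, since you cannot simply push the $1$ at position $a$ upward through another $1$. The standard fix is a short majorisation argument on binary strings (for every prefix length, $\bin(i)$ has at least as many ones as $\bin(j)$, hence $\bin(j)$ is reachable from $\bin(i)$ by a chain of adjacent $(1,0)\mapsto(0,1)$ moves). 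This is routine and does not constitute a genuine gap, but it is worth stating explicitly if you intend the write-up to be self-contained.
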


\begin{definition}[Partial Order Property]
\label{def:POP}
A set $\I\subseteq [0,2^n-1]$ is said to satisfy the \textit{Partial Order Property} if $i \not\preceq i^c$ for every $i \in \I$ and $i^c\in \I^c$. In other words, none of the indices in $\I$ is smaller than or equal to another index in $\I^c$ according to the partial order defined in Definition~\ref{def:PO}. Equivalently, for every $i\in \I$ and $j\in [0,N-1]$, if $j \succeq i$ then $j\in \I$.
\end{definition}

\begin{corollary}
\label{cr:POP}
The information sets $\I$ of Reed-Mular codes and polar codes satisfy the Partial Order Property. 
\end{corollary}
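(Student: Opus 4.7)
The plan is to verify the Partial Order Property separately for RM codes and for polar codes, since the two families have structurally different definitions of the information set.

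For the RM case, I would first record the well-known fact that each row $\bgi$ of the Kronecker product $\bGN=\mathbf{G}_2^{\otimes n}$ has Hamming weight $\w(\bgi)=2^{|\Si|}$; this follows directly from the tensor structure, because each `1' in $\bin(i)$ contributes a factor of the row $(1,1)$ and each `0' contributes a factor of the row $(1,0)$. Consequently, $i\in \IRM$ if and only if $|\Si|\ge n-r$. Then I would verify that the three clauses in Definition~\ref{def:PO} all preserve or increase the cardinality $|\Si|$: in clause~(i), $\Si\subseteq \Sj$ gives $|\Sj|\ge |\Si|$; in clause~(ii), swapping an element of $\Si$ for one outside it leaves $|\Sj|=|\Si|$; clause~(iii) extends this by transitivity. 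Hence $i\in\IRM$ and $j\succeq i$ imply $j\in\IRM$, which is exactly the Partial Order Property.

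For polar codes, I would argue directly from Proposition~\ref{pro:PO} together with the defining reliability condition for $\I$. Fix $i\in\I$ and suppose $j\succeq i$. By Proposition~\ref{pro:PO}, $W^{(i)}_N\preceq W^{(j)}_N$. Assume, for contradiction, that $j\in\I^c$; then the defining property of $\I$ as the set of most reliable bit-channels yields $W^{(j)}_N\preceq W^{(i)}_N$, making the two bit-channels equally reliable. In this case one may simply choose the set $\I$, among the various valid realizations allowed by the defining condition, so that the swap $i\leftrightarrow j$ is made, producing an $\I$ that respects $\preceq$. This observation is sufficient since the paper reserves the freedom to pick any $\I$ satisfying $W^{(j)}_N\preceq W^{(i)}_N$ for all $i\in\I$, $j\in\I^c$; the natural construction always selects representatives consistent with the partial order.

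The RM part is the routine calculation and poses no difficulty. The expected friction is the polar part, because Proposition~\ref{pro:PO} only asserts relative reliability without ruling out ties, so the argument must handle (or assume away) equal-reliability pairs. I would address this either by invoking a tie-breaking convention aligned with $\preceq$ (which is always feasible since the defining condition is closed under such swaps), or by appealing to the fact that in standard channel models no two non-identical bit-channels are exactly equally reliable, so the inclusion is strict and the contradiction is clean.
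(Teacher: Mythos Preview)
Your proposal is correct and follows essentially the same approach as the paper: for RM codes you use $\w(\bgi)=2^{|\Si|}$ to reduce membership in $\IRM$ to a cardinality condition and then observe that $i\preceq j$ implies $|\Si|\le|\Sj|$, and for polar codes you invoke Proposition~\ref{pro:PO} to derive a reliability contradiction---exactly as the paper does. The only difference is that the paper silently assumes strict reliability ordering and applies the contrapositive of Proposition~\ref{pro:PO} directly, whereas you explicitly flag the possibility of ties and propose a tie-breaking convention; this extra care is not wrong but is not needed for the paper's purposes.
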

\begin{proof}
For the $\text{RM}(r,n)$, we have 
\begin{equation}\label{eq:I_RM}
\begin{split}
    \IRM &= \{i\in [0,2^n-1]\colon \w(\bgi)\geq 2^{n-r}\}\\
    &=\{i\in [0,2^n-1]\colon |\Si|\geq n-r\},
\end{split}
\end{equation}
where the second equality is due to Corollary~\ref{cr:weight} (Appendix~\ref{app:polar_transform}), 
which states that $\w(\bgi) = 2^{|\Si|}$. Clearly, if $i^c \in \I^c$ then $|\S_{i^c}| < n-r \leq |\Si|$, which implies that $i \not\preceq i^c$. Therefore, RM codes satisfy the Partial Order Property.

For a polar code, its information set $\I$ must satisfy the condition that the bit-channel $W^{(i)}_N$ is more reliable than the bit-channel $W^{(i^c)}_N$ for every $i\in \I$ and $i^c \in \I^c$. By Proposition~\ref{pro:PO}, such $i$ and $i^c$ must satisfy $i \not\preceq i^c$. Therefore, the information set $\I$ of a polar code $\CI$ satisfies the Partial Order Property. 
\end{proof}

It is a simple fact that the linear codes in the general family we are considering are subcodes of RM codes.

Note that the selected rows in the polar codes of length $N = 2^n$ and minimum row weight $\wm= \min \w(\bgi), i\in\mathcal{I}$ are a subset of the rows of the generator matrix for RM$(r,n$) where $2^{n-r}=\wm$. As a result, any polar code is a subcode of some RM code with common minimum distance which results in  $\mathcal{I}\subseteq \mathcal{I}_{RM}$.

\section{The Formation of Minimum-Weight Codewords of Reed-Muller and Polar Codes}
\label{sec:decomposition}

\subsection{The Minimum-Weight Codewords Formation}

To determine the minimum-weight codewords of a RM code or a polar code $\C(\I)$ generated by a set of rows $\{\bgi \colon i \in \I\}$ of $\bGN$, our strategy is to partition the code into $|\I|$ disjoint cosets $\CiI=\bgi+\C(\I \setminus [0,i])$ of its subcodes $\C(\I \setminus [0,i])$ for $i\in \I$, and identify minimum-weight codewords in each of such cosets. 
We came to realize when writing this paper and its conference version that another approach based on permutation groups of Reed-Mular/polar codes had been proposed before by Bardet \textit{et al.}~\cite{bardet, bardet_arxiv}. Our work is a set-theoretic approach and achieves only the lower bound on the number of minimum-weight codewords of $\CI$, while both the (same) lower bound and a matching upper bound were established in~\cite{bardet, bardet_arxiv}. However, the formation of minimum-weight codewords proposed in our approach makes it more convenient to make a modification to the codes to reduce the number of minimum-weight codewords and achieve a better performance. We discuss the connection between our approach and that of Bardet \textit{et al.} in detail at the end of this section.

\begin{definition} 
\label{def:CI}
Given a set $\I \subseteq [0,N-1]$, we define the set of codewords $\CiI\subseteq \CI$ for each $i\in \I$ as follows. 
\begin{equation}
    \CiI \triangleq \left\{\bgi\oplus\bigoplus_{h\in \H} \bgh \colon \H \subseteq \I \setminus [0,i]\right\}\subseteq \C(\I).
\end{equation}
In other words, $\CiI$ is a coset of the subcode $\C(\I \setminus [0,i])$ of $\C(\I)$ generated by $\{\bgh\colon h \in \I\setminus [0,i]\}$ with the coset leader $\bgi$, where $\bgi$ is the $i$-th row of the polar transform $\bGN$.
It is clear that the sets $\CiI$, $i \in \I$, partition the code $\CI$. 

\end{definition}

\begin{lemma}
Let $\Awm$ denote the number of codewords of minimum-weight $\wm$ of the RM/polar code $\C(\I)$, and $\Aiw$ denote the number of codewords of weight $w$ in the coset $\CiI$, $i\in \I$. Then the following formula holds.
\begin{equation}
\Awm = \sum_{i \in \I\colon \w(\bgi) = \wm} \Aiwm.
\end{equation}
\end{lemma}
\begin{proof}
Since $\CiI$, $i \in \I$, partition the code $\CI$, we have
\[
\Awm = \sum_{i \in \I} \Aiwm = \sum_{i \in \I\colon \w(\bgi) = \wm} \Aiwm,
\]
where the second equality holds because due to Corollary~\ref{cor:geq_wi} (Appendix~\ref{app:polar_transform}), 
if $\w(\bgi)>\wm$ then all codewords in $\CiI$ have weights greater than $\wm$, that is, $\Aiwm = 0$. \qedhere
\end{proof}

We now define the set of indices $\Ki$, which plays an essential role in the formation of minimum-weight codewords in $\CiI$. If $\I$ satisfies the Partial Order Property then $|\Ki|$ is the number of minimum-weight codewords in $\CiI$ 
of the form $\bc = \bgi + \bgj$, $j \in \I\setminus [0,i]$.
Surprisingly, $\Ki$ also leads to the formation of all other minimum-weight codewords in $\CiI$ (see Theorem~\ref{thm:decomposition}).

\begin{definition}
\label{def:Ki}
For each index $i \in [0,N-1]$ we define
\[
\Ki \triangleq \{j \in [i+1,N-1]\colon \w(\bgj)\geq \w(\bgi+\bgj)=\w(\bgi)\}
\]
as the set of indices $j \in [i+1,N-1]$ so that $\bgj$ is at distance $\w(\bgi)$ away from $\bgi$ and has weight at least $\w(\bgi)$.
\end{definition}
We call the rows indexed by elements in $\Ki$ the \emph{core rows} of $i$. As we will see later, 
the core rows allow one to form all minimum-weight codewords in $\CiI$.  
The properties of $\Ki$ are listed in Lemma~\ref{lma:Ki}.
Note that the definition of $\Ki$ is code independent. However, thank to Lemma~\ref{lma:Ki}~c), if $\I$ satisfies the Partial Order Property and $i \in \I$ then $\Ki\subseteq \I$. Hence, $|\Ki|$ is indeed the number of minimum-weight codewords in $\CiI$ that are the sums of $\bgi$ and another row $\bgj$, $j \in \I\setminus [0,i]$.

\begin{lemma}
\label{lma:Ki}
The set $\Ki$ defined in Definition~\ref{def:Ki} satisfies the following properties.
\begin{enumerate}
    \item[a)] $\Ki = \{j \in [i+1,N-1]\colon |\Sj\setminus\Si|=1, |\Sj|=|\Si| \text{ or }|\Si|+1 \}$. 
    \item[b)] For every $j\in\Ki$, we have 
        \begin{equation}
        \Sj\cap\Si=
        \begin{cases}
            \Si, & \text{if } \w(\bgj)=2\w(\bgi),\\
            \Si\setminus\{k\}, k\in\Si, & \text{if } \w(\bgj)=\w(\bgi).\\
        \end{cases}
        \end{equation}
    \item[c)] If $\I \subseteq [0,N-1]$ satisfies the Partial Order Property then for every $i\in\mathcal{I}$, 
    we have $\mathcal{K}_i\subseteq\mathcal{I}$. 
    \item[d)] The size of $\Ki$ is (recalling that $\bin(i)=i_{n-1}...i_1i_0$)
    \begin{equation}\label{eq:Ki_size}
        |\mathcal{K}_i|=|\Ti|+\sum_{k\in \Si}\sum_{\ell>k} \bar{i}_\ell,
    \end{equation}
    where $\Ti\triangleq [0,n-1]\setminus \Si$ and $\bar{i_\ell}\triangleq i_\ell\oplus 1$.
\end{enumerate}
\end{lemma}
\begin{proof}
The proof for each part is given below.
\begin{enumerate}
    \item[a)] According to Corollary~\ref{cr:weight} (Appendix~\ref{app:polar_transform}), $\w(\bgi+\bgj)=\w(\bgi)$ if and only if $|\Sj| = 1+|\Si\cap \Sj|$. Taking into account the condition that $\w(\bgj)\geq\w(\bgi)$, or equivalently, $|\Sj|\geq |\Si|$, we conclude that $j\in \Ki$ if and only if $|\Sj\setminus \Si| = 1$ and $|\Si|\leq |\Sj| \leq |\Si|+1$.

    \item[b)] Following properties of $\Ki$ in part (a), when $|\Sj|=|\Si|$, then according to Corollary \ref{cr:weight} we have $\w(\bgj)=\w(\bgi)$. In this case since $|\Sj\setminus\Si|=1$,  then $|\Sj\cap\Si|=|\Si|-1$ which implies that $\Sj\cap\Si=\Si\setminus\{k\}$ for some $k\in\Si$. Also, when $|\Sj|=|\Si|+1$, then we have $\w(\bgj)=2\w(\bgi)$. In this case since $|\Sj\setminus\Si|=1$, then $|\Sj\cap\Si|=|\Si|$ which implies that $\Sj\cap\Si=\Si$.
    \item[c)] According to Part b), if $j \in \Ki$ then either $\Sj \supseteq \Si$ or $\Sj$ is obtained from $\Si$ by replacing an index $k\in \Si$ with another index $h > k$ (because $j>i$). By Definition~\ref{def:PO}, $j \succeq i$. Since $\I$ satisfies the Partial Order Property, $j \in \I$. Therefore, $\Ki \subseteq \I$.    
    \item[d)] To count the elements of $\Ki$, we consider two cases in part (b) in addition to the condition $|\Sj\setminus\Si|=1$:
    \begin{itemize}
        \item If $\Sj\cap\Si=\Si$, then we count any $j$ where there exists some $\ell\in\Sj$ and $\ell\in\Ti$. That is, by \emph{addition} of one $\ell\in\Ti$ at a time to $\Si$, we can obtain all such $j$ rows. Thus, we have $|\Ti|$ such $j$ rows in total. 
        \item If $\Sj\cap\Si=\Si\setminus\{k\},k\in\Si$, then we count any $j$ where there exists some $\ell\in\Sj$ and $\ell\in\Ti$ in which $\ell$ is swapped with some $k\in\Si$ to retain $|\Sj|=|\Si|$. Since $j>i$, this swap should be \emph{left-swap} as the right-swap gives $j<i$.  Hence to count all such  $j$ rows, for every $k\in\Si$ we count all $\ell\in\Ti$ such that $\ell>k$. This operation can be implemented by $\sum_{k\in \Si}\sum_{\ell>k} \bar{i}_\ell$ where $\bar{i_\ell}=i_\ell\oplus 1$.\qedhere
    \end{itemize}
\end{enumerate}
\end{proof}

\begin{remark}\label{rmk:add_swap}
From the proof of Lemma \ref{lma:Ki} - part (c), 
note that the elements of $\Ki$ can be obtained by applying the \emph{addition} and \emph{left-swap} operations on $\bin(i)$: 
\begin{itemize}
    \item Addition: if we flip every `0' in $\bin(i)$ one at time, we get all $j\in\Ki$ which have weight $\w(\bgj)=2\w(\bgi)$.
    \item Left-swap: if we swap every `1' in $\bin(i)$ with every `0' on the left, one at time, we get all $j\in\Ki$ which have weight $\w(\bgj)=\w(\bgi)$.
\end{itemize}
\end{remark}

\begin{example} Suppose $i=(13)_{10}=(01101)_2$ for $\mathbf{G}_{32}$ where $n=5$. To find the set $\mathcal{K}_i$, we follow Remark \ref{rmk:add_swap}. First we find all $j>i$ with weight $2\w(\bgi)$ by addition operation. These rows  are $\{(01111)_2,(11101)_2\}=\{15,29\}\subset \mathcal{K}_i$. The size of this subset can be found even without listing them by $|\Ti|=n-|\Si|=5-\w(01101)=2$. We are actually counting the number of zero-value positions in $\bin(i)$. Then, we find all $j>i$ with weight $\w(\bgi)$ by left-swap operation over $\bin(i)$. These rows are $\{(01110)_2,(11100)_2,(11001)_2,(10101)_2\}=\{14,28,25,21\}\subset \mathcal{K}_i$. The size of this subset also can be found without listing them by $\big(5-\w(0110\underline{1})\big)+\big(3-\w(01\underline{1})\big)+\big(2-\w(0\underline{1})\big)=4$. We are actually counting the number of zero-value positions in $\bin(i)$ at the positions larger than $k$, positions $k$ are underlined. Hence, $|\mathcal{K}_i|=6$ and 
\begin{equation*}
    \Ki=\{14,15,21,25,28,29\}.
\end{equation*}
\end{example}

We show in Theorem~\ref{thm:decomposition} that if the information set $\I$ satisfies the Partial Order Property then the set $\Ki$, although defined to capture \textit{some} specific minimum-weight codewords in $\CiI$, allows us to identify \textit{all} minimum-weight codewords of $\CI$ lying in $\CiI$ for every $i \in \I$. \hg{Note that by its own right, the theorem only implies a lower bound on the number of minimum-weight codewords (see Corollary~\ref{trm:2}). However, given the work of Bardet \emph{et al.}~\cite{bardet}, we know that this bound is exact. 
The theorem applies to both RM and polar codes thanks to Corollary~\ref{cr:POP}.}

\begin{theorem}\label{thm:decomposition}
Suppose that $\I\subseteq [0,N-1]$ satisfies the Partial Order Property and $i\in \I$ is such that $\w(\bgi)=\wm$. Then for any set $\mathcal{J}\subseteq \mathcal{K}_i$, there exists a set $\M(\J) \subseteq \I\setminus \Ki$ such that
\begin{equation}\label{eq:decomposition}
\w\big(\bgi\oplus\underbrace{\bigoplus_{j\in\J}\bgj}_{\textnormal{core rows}} \oplus\underbrace{ \bigoplus_{m\in\M(\J)}\bgm}_{\textnormal{balancing rows}}\big) =  \wm.
\end{equation}
Moreover, such a set $\MJ$ can be constructed by the $\M$-Construction (see below).
Note that the rows in $\M(\J)$ are called balancing rows as their inclusion brings the weight of the sum down to $\wm$ if the sum of the coset leader and a subset of core rows has weight exceeding $\wm$.
\end{theorem}
\begin{proof}
The theorem is proved in Appendix~\ref{app:construction}.
\end{proof}



\textbf{$\M$-Construction.} Suppose that $\I\subseteq [0,N-1]$ satisfies the Partial Order Property and $i\in \I$ satisfying $\w(\bgi)=\wm$. For any $\varnothing \neq \J\subseteq \Ki$, we aim to construct a set $\M(\J) \subseteq \I\setminus (\Ki\cup [0,i])$ satisfying \eqref{eq:decomposition}\footnote{Note that while showing that $\MJ \subseteq \I\setminus [0,i]$ requires Lemma~\ref{lma:M_in_I} in Appendix~\ref{app:construction}, the fact that $\MJ \cap \Ki=\varnothing$ can be seen from the $\M$-Construction itself because $|\SmJp\setminus \Si|\geq 2$, and hence $\mJp$ doesn't satisfy Lemma~\ref{lma:Ki}~(a).}. First, let
\[
\JJ\triangleq \left\{\J'\subseteq \J\colon |\J'|\geq 2, \Sj\setminus \Si, j\in \J' \text{ are disjoint}\right\},
\]
noting that $|\Sj\setminus \Si|=1$ due to Lemma~\ref{lma:Ki}~a). 
Next, for every such $\J'\in\JJ$, let $\mJp\in [0,2^n-1]$ such that
\begin{equation}\label{eq:S_mJ}
\SmJp \triangleq \bigcup_{j\in \J'}\big(\Sj\setminus \Si\big) \cup\Big(\Si\cap\big(\bigcap_{j\in\J'}\Sj\big)\Big),
\end{equation}
The set $\MJ$ consists of all such $\mJp$ indices with odd multiplicities. More specifically, 
\begin{equation}\label{eq:set_M}
\MJ \triangleq \left\{h\in [0,N-1]\colon \left|\mathfrak{J}_{h}(\mathcal{J})\right| \text { is odd}\right\},
\end{equation}
where
\begin{equation}\label{eq:set_J_h}
\mathfrak{J}_{h}(\J)\triangleq\left\{\J' \in \JJ\colon \mJp=h\right\}.
\end{equation}

\begin{remark}
\hg{An equivalent way to define $\JJ$ and $\mathcal{S}_{m_{\J'}}$ in the $\M$-Construction is as follows. First, let
\begin{equation*}
\R=\bigcup_{j\in\mathcal{J}}\big(\Sj\setminus\Si\big) \subseteq \Ti \triangleq [0,n-1]\setminus \Si.
\end{equation*}
Then, we can verify that 
\[
\JJ\hspace{-2pt}=\hspace{-2pt}\left\{\J' \subseteq\hspace{-2pt} \mathcal{J}\colon\left|\J'\right| \hspace{-2pt}\geq\hspace{-2pt} 2, \left|\left\{j \in \J'\colon j_{k}\hspace{-2pt}=\hspace{-2pt}1\right\}\right| \hspace{-2pt}\leq\hspace{-2pt} 1, \forall k \in \R\right\}\hspace{-2pt},
\]
and
\begin{equation*}
\SmJp =\left\{k\in\R\colon \exists j\in\J', j_k=1\right\}\cup\Big(\Si\cap\big(\bigcap_{j\in\J'}\Sj\big)\Big).
\end{equation*}}
\end{remark}

\begin{remark}
    \label{rm:M_Construction}
\hg{Note that in the $\M$-Construction, for $\J\subseteq \Ki$, $|\J|\leq 1$, we have $\MJ=\varnothing$ because there are no $\J'\subseteq \J$ with $|\J'|\geq 2$ and hence, $\JJ = \varnothing$. This is consistent with our goal to form codewords of the minimum-weight: if $\J=\varnothing$ then $\bc=\bgi$ itself has weight $\wm$; if $\J=\{j\}$, then $\bc = \bgi\oplus \bgj$ also has weight $\wm$ due to the definition of $\Ki$.}
\end{remark}

Fig. \ref{fig:venn_diag} demonstrates the $\M$-construction, in particular, how to find $\mJp$ for every $\Jp$. 
\begin{figure}[ht] 
    \centering
    \includegraphics[width=0.9\columnwidth]{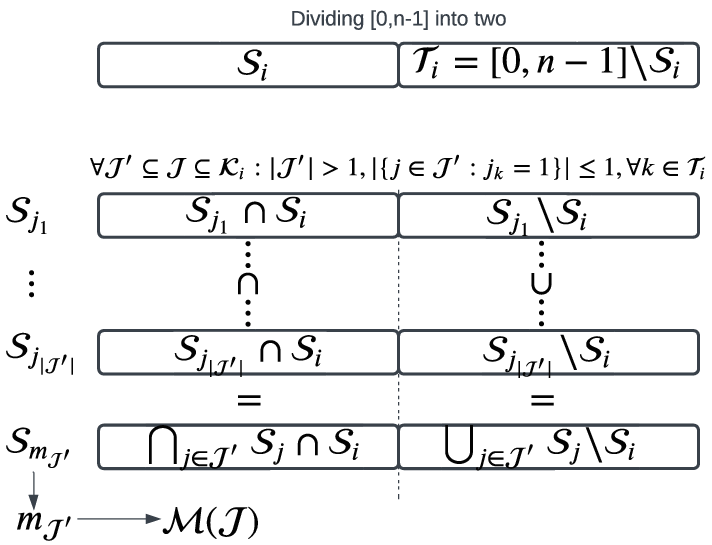}
    \caption{Illustration of how $\mJp$ is obtained for every $\Jp$.} 
    \label{fig:venn_diag}
\end{figure}

Fig. \ref{fig:venn_diag2} shows the Venn diagram associated with various sets defined in relation to the formation of minimum weight codewords of polar codes. 
\begin{figure}[ht] 
    \centering
    \includegraphics[width=0.9\columnwidth]{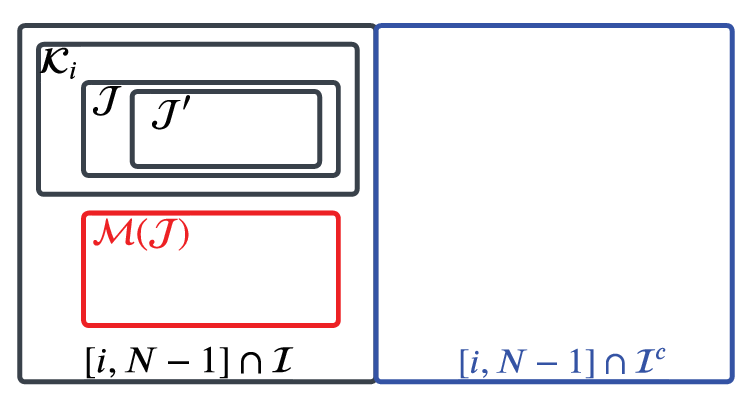}
    \caption{Venn diagram of the sets defined for indices in $[i,N-1]$.} 
    \label{fig:venn_diag2}
\end{figure}

We provide below a few examples to demonstrate the $\M$-Construction.

\begin{example}\label{ex:MJ1}
\hg{Let $n = 4$, $N = 2^n=16$, and $i=3=(0011)_2$. Then, $\Si=\{0,1\}$, $\Ti = [0,3]\setminus \Si=\{2,3\}$, and 
\[
\begin{split}
\Ki &= \{5,6,7,9,10,11\}\\
&= \left\{(0101)_2,(0110)_2,(0111)_2,(1001)_2,(1010)_2,(1011)_2\right\}\hspace{-2pt}.
\end{split}
\]
Take
\[
\begin{split}
\J &= \{5,6,7,9,10\}\\
&= \left\{(0101)_2,(0110)_2,(0111)_2,(1001)_2,(1010)_2\right\}\subset \Ki.
\end{split}
\]
We have $\S_5=\{0,2\}$, 
$\S_6=\{1,2\}$, 
$\S_7=\{0,1,2\}$, 
$\S_9=\{0,3\}$, 
$\S_{10}=\{1,3\}$.
Therefore, $\S_5\setminus \S_3 = \{2\}$, 
$\S_6\setminus \S_3 = \{2\}$,
$\S_7\setminus \S_3 = \{2\}$,
$\S_9\setminus \S_3 = \{3\}$,
$\S_{10}\setminus \S_3 = \{3\}$. As $\JJ$ consists of the subsets $\J'\subseteq \J$, $|\J'|\geq 2$, that satisfy that $\Sj\setminus \Si$, $j \in \J'$, are all disjoint, we have
\[
\JJ = \{\{5,9\}, \{5,10\}, \{6,9\}, \{6,10\},\{7,9\},\{7,10\}\}.
\]
From \eqref{eq:S_mJ}, we obtain $\SmJp$ for all $\J'\in \JJ$ as follows.
\[
\S_{m_{\{5,9\}}} = \big((\S_5\setminus \S_3)\cup (\S_9\setminus \S_3)\big) \cup \big(\S_3 \cap \S_5 \cap S_9\big) = \{0,2,3\}.
\]
\[
\S_{m_{\{5,10\}}} = \big((\S_5\setminus \S_3)\cup (\S_{10}\setminus \S_3)\big) \cup \big(\S_3 \cap \S_5 \cap S_{10}\big) = \{2,3\}.
\]
\[
\S_{m_{\{6,9\}}} = \big((\S_6\setminus \S_3)\cup (\S_9\setminus \S_3)\big) \cup \big(\S_3 \cap \S_6 \cap S_9\big) = \{2,3\}.
\]
\[
\S_{m_{\{6,10\}}} = \big((\S_6\setminus \S_3)\cup (\S_{10}\setminus \S_3)\big) \cup \big(\S_3 \cap \S_6 \cap S_{10}\big) = \{1,2,3\}.
\]
\[
\S_{m_{\{7,9\}}} = \big((\S_7\setminus \S_3)\cup (\S_9\setminus \S_3)\big) \cup \big(\S_3 \cap \S_7 \cap S_9\big) = \{0,2,3\}.
\]
\[
\S_{m_{\{7,10\}}} = \big((\S_7\setminus \S_3)\cup (\S_{10}\setminus \S_3)\big) \cup \big(\S_3 \cap \S_7 \cap S_{10}\big) = \{1,2,3\}.
\]
These supports correspond to $\mJp = 13, 12, 12, 14, 13, 14$. Therefore, according to \eqref{eq:set_J_h}, $|\mathfrak{J}_h(\J)|=2$ for $h=12, 13, 14$. As the cardinalities of $\mathfrak{J}_h(\J)$ are even for all $h=12,13,14$, according to \eqref{eq:set_M}, $\MJ=\varnothing$.}
\end{example}

\begin{example}
\label{ex:MJ2}
\hg{
We assume the same parameters $n=4$ and $i=3$ as in Example~\ref{ex:MJ1} but pick
 $\J= \{5,6,9,10\}$. Then $\JJ = \{\{5,9\},\{5,10\},\{6,9\},\{6,10\}\}$. As already computed in Example~\ref{ex:MJ1}, we have $
\S_{m_{\{5,9\}}} = \{0,2,3\}$, 
$\S_{m_{\{5,10\}}} = \{2,3\}$, $\S_{m_{\{6,9\}}} = \{2,3\}$, and $\S_{m_{\{6,10\}}} = \{1,2,3\}$.
These sets correspond to $\mJp = 13, 12, 12, 14$. Therefore, according to \eqref{eq:set_J_h}, $|\mathfrak{J}_h(\J)|=1$ (odd) for $h = 13, 14$ and $|\mathfrak{J}_h(\J)|=2$ (even) for $h = 12$. By \eqref{eq:set_M}, $\MJ=\{13, 14\}$.}
\end{example}

As a corollary of Theorem~\ref{thm:decomposition}, we can provide a lower bound on the number of minimum-weight codewords of a code $\CI$ (including RM and polar codes). It was established earlier in~\cite{bardet}, by analyzing the permutation group of polar codes, that this bound is the exact number of minimum-weight codewords. We provide a MATLAB script in Appendix~\ref{app:matlab} that computes this number (i.e., the error coefficient).
We discuss in detail the implicit connection between our work and the work in~\cite{bardet} at the end of this section. 

\begin{corollary} \label{trm:2}
If $\I\subseteq [0,N-1]$ satisfies the Partial Order Property then for every $i \in \B$ where $\B=\{i \in \I\colon \w(\bgi) = \wm\}$, the number of minimum-weight codewords of the code $\CI$ lying in the coset $\CiI$ satisfies
\begin{equation}\label{eq:bound_Admin_coset}
\Aiwm \geq 2^{|\mathcal{K}_i|},
\end{equation}
where $\Ki$ is given in Definition~\ref{def:Ki}.  
As a consequence, 
\begin{equation}\label{eq:bound_Admin}
\Awm = \sum_{i \in \B} \Aiwm
\geq \sum_{i \in \B} 2^{|\Ki|}.
\end{equation}
\end{corollary}
\begin{proof}
From Theorem~\ref{thm:decomposition}, we know that for every $i\in\B$ and $\J\subseteq\Ki$, there exists a set $\M(\J)\subseteq([i+1,N-1]\cap\I)\setminus\Ki$ such that \eqref{eq:decomposition} holds. 
In other words, any combination of row $i$ and rows in a subset $\J\subseteq\Ki$ gives a $\wm$-weight codeword. As $\Ki$ has $2^{|\Ki}|$ subsets, the $\M$-Construction provides $2^{|\Ki}|$ distinct minimum-weight codewords for $\CI$. Thus, $\Aiwm\geq 2^{|\Ki|}$ as claimed.
\end{proof}

We observe that the upper bound on the number of minimum-weight codewords proved by Bardet \textit{et al.}~\cite{bardet_arxiv} doesn't require that $\I$ must satisfy the Partial Order Property (referred to as decreasing monomial codes in their work). We restate the upper bound part of their result (see the proof of~\cite[Proposition 12]{bardet_arxiv}) using our terminology below.

\begin{proposition}[\cite{bardet_arxiv}]
\label{pro:upper_bound}
For an arbitrary set $\I\subseteq [0,N-1]$, let $\wm$ be the minimum weight of $\CI$, and $i\in I$ such that $\w(\bgi)=\wm$. Then the number of minimum-weight codewords in $\CiI\subseteq \CI$ (see Definition~\ref{def:CI}) satisfies $\Aiwm \leq 2^{|\Ki|}$.
\end{proposition}

\begin{figure*}
    \centering
    \setlength\fboxsep{5mm}
    \shadowbox{
        \parbox{0.92\textwidth}{ 
        Every row $\bgi,i\in\I$ of the transform matrix $\bGN$, where $\w(\bgi)=\wm$, can form a minimum-weight codeword in combination with the rows in every subset $\mathcal{J}\subseteq \mathcal{K}_i$ and the corresponding set $\M(\J) \subseteq (\I\cap[i+1,N-1])\setminus \Ki$ as
        \begin{equation*}
        \w\big(\bgi\oplus\underbrace{\bigoplus_{j\in\J}\bgj}_{\textnormal{core rows}} \oplus\underbrace{ \bigoplus_{m\in\M(\J)}\bgm}_{\textnormal{balancing rows}}\big) =  \wm.
        \end{equation*}
        The \textit{core} rows in the formation of minimum-weight codewords belong to a subset $\J$ of the set $\Ki$ defined as follows.
        \[
            \Ki \triangleq \{j \in [i+1,N-1]\colon \w(\bgj)\geq \w(\bgi+\bgj)=\w(\bgi)\}.
        \]
        The rows in the set $\M(\J)$ are called \textit{balancing} rows as their inclusion brings the weight of the sum down to $\wm$ if needed. The set $\MJ$ can be constructed by the $\M$-Construction described in Section~\ref{sec:decomposition}.
        The codewords formed by the leading row $\bgi$ belong to the coset $\CiI$ defined in Definition \ref{def:CI}.
        Note that $\bgi$ itself is a minimum-weight codeword. The information set $\I\subseteq [0,N-1]$ is assumed to satisfy the Partial Order Property. The construction above can be extended to all rows in $[0,N-1]$ (see \eqref{eq:wt_eq_Ci_gf}, Section~\ref{sec:PAC_Admin}).
            
        Since every subset of $\Ki$ corresponds to a different minimum-weight codeword, the total number of such codewords in every coset $\CiI$ equals the total number of subsets of $\Ki$, that is $2^{|\Ki|}$. Hence, the total number of minimum-weight codewords, matching the result in \cite{bardet}, is
        $$\Awm = \sum_{{i \in \I\colon \w(\bgi) = \wm}} 2^{|\Ki|}.$$
        }
    }
    \caption{A summary of the construction of minimum-weight codewords in polar codes.}
    \label{fig:summary_decomp}
\end{figure*}

\subsection{The Connection to the Permutation-Group-Based Approach by Bardet \textit{et al.}~\cite{bardet, bardet_arxiv}} 

Bardet \textit{et al.}~\cite{bardet, bardet_arxiv} use the transpose of $\bG_2$ instead in their constructions of RM/polar codes. Each row in $\bG_N$ indexed by $i\in [0,N-1=2^n-1]$ corresponds to the monomial $g_i\triangleq x_0^{i_0}x_1^{i_1}\cdots x_{n-1}^{i_{n-1}}\in \ff_2[x_0,\ldots,x_{n-1}]/(x_0^2-x_0,\ldots,x_{n-1}^2-x_{n-1})$, where $(i_0i_1\cdots i_{n-1})$ is the binary representation of $i$. The row $\bgi$ of $\bG_N$ is obtained by evaluating the monomial $g_i$ at the binary representations of all the column indices $c \in [0,N-1]$. They also define a partial order $\preceq$ on the monomials, which is equivalent to our partial order given in Definition~\ref{def:PO}. A set of monomials $\I$ (corresponding to our index set $\I\subseteq [0,N-1]$) is called \textit{decreasing} if and only if ($f\in \I$ and $g\preceq f$) implies that $g \in \I$. 
They show that the permutation group of the code $\CI$, which is generated by $\{\bgi\colon i \in \I\}$, contains LTA$(n,2)$, which consists of the transformations of the form $\bx \mapsto \bA\bx+\bb$, where $\bA = (a_{k,h})\in \ff_2^{n\times n}$ is a \textit{lower-triangular} matrix over $\ff_2$ with $a_{k,k}=1$ for all $0\leq k \leq n-1$, and $\bb=(b_0,\ldots,b_{n-1})\in \ff_2^n$ (see \cite[Theorem 2]{bardet_arxiv}). More specifically, under the transformation $(\bA,\bb)$, a monomial $g=x_{k_1}\cdots x_{k_s}$ (for some $0<k_1<k_2<\cdots<k_s\leq n-1$) is mapped into $y_{k_1}\cdots y_{k_s}$, where $y_k = x_k + \sum_{h=0}^{k-1}a_{k,h}x_h + b_k$. 

It is shown in~\cite[Theorem 2, Proposition 12]{bardet_arxiv} that \textit{all} minimum-weight codewords of $\CI$ can be generated by the codewords in the orbits $\O(g_i)$ under LTA$(n,2)$ of the monomials corresponding to the rows $i\in \I$ that have maximum degree $r_+$ (corresponding to the indices $i\in \I$ satisfying $\w(\bgi)=\wm$ in our work). 
Moreover, to count the number of minimum-weight codewords, for such $i$, they demonstrate in~\cite[Propositions 8 and 9]{bardet_arxiv} that $|\O(g_i)|$ is equal to the number of different transformations $(\bA,\bb)$ where $b_k=0$ if $k \notin \texttt{idx}(g_i)$ and $a_{k,h}=0$ if $k\notin \texttt{idx}(g_i)$ or $h \in \texttt{idx}(g_i)$, here $\texttt{idx}(g_i)\triangleq \{k_1,\ldots,k_s\}$. Based on Young diagrams, such $|\O(g_i)|$ can be determined explicitly based on the binary representation of $i$ (Bardet \textit{et al.}~\cite[Propositions 10 and 11]{bardet_arxiv}). It turns out that $|\O(g_i)|$ is exactly the same as our $2^{|\Ki|}$ (see Corollary~\ref{trm:2}). The reason is that the number of free entries (taking either $0$ or $1$) in a valid $(\bA,\bb)$ as described above is precisely equal to $|\Ki|$ (not hard to verify using Lemma~\ref{lma:Ki}(a)). We also give an explanation of how our $\M$-Construction can be extracted from their formulation below. 

For each $i\in [0,N-1]$ satisfying $\deg(g_i)=r_+$, let $g_i=x_{k_1}\cdots x_{k_s}$. The minimum-weight codewords in $\O(g_i)$ are of the form $(\bA,\bb)g_i$ for all valid $\bA$ and $\bb$ as described in the previous paragraph. Such a codeword corresponds to the polynomial $y_{k_1}\cdots y_{k_s}$, which can be written as
\begin{equation*}
\begin{split}
&\big(x_{k_1}+\sum_{h = 0, h\notin \texttt{idx}(g_i)}^{k_1-1}a_{k_1,h}x_h+b_{k_1}\big)
\cdots\\&\cdots \big(x_{k_s}+\sum_{h = 0, h\notin \texttt{idx}(g_i)}^{k_s-1}a_{k_s,h}x_h+b_{k_s}\big),
\end{split}
\end{equation*}

where $a_{k_1,h},\ldots,a_{k_s,h}$ for all relevant $h$, and $b_{k_1},\ldots,b_{k_s}$ can be either $0$ or $1$.
Due to the structure of $\Ki$ (see Lemma~\ref{lma:Ki}(a)), by inspecting the expansion of the product above more closely, we can recover the $\J$ and the $\M$ sets in our $\M$-Construction as follows. Note that our notation is complementary to theirs, and so an appropriate but straightforward transformation will be required for the sets to match exactly.
\begin{itemize}
    \item The term $x_{k_1}\cdots x_{k_s} = g_i$ corresponds to the row $\bgi$ in our construction (see Theorem~\ref{thm:decomposition}).
    \item The terms obtained after expanding the sums 
    \begin{equation}
    \label{eq:expansion}
    \begin{split}
    &\big(\sum_{h = 0, h\notin \texttt{idx}(g_i)}^{k_1-1}a_{k_1,h}x_h\big)x_{k_2}\cdots x_{k_s},\ldots\\& \ldots,x_{k_1}\cdots x_{k_{s-1}}\sum_{h = 0, h\notin \texttt{idx}(g_i)}^{k_s-1}a_{k_s,h}x_h
    \end{split}
    \end{equation}
    and $b_{k_1}x_{k_2}\cdots x_{k_s}$, $\ldots$, $x_{k_1}\cdots x_{k_{s-1}}b_{k_s}$ correspond to the rows $j \in J\subseteq \Ki$. More specifically, the terms including $\bA$-entries  correspond to $j\in \Ki$ with $|\Sj|=|\Si|$, where the terms including $\bb$-entries correspond to the $j\in \Ki$ with $|\Sj|=|\Si|+1$. Depending on whether the entries in $\bA$ and $\bb$ are 0 or 1, we have different subsets $J$ of $\Ki$. 
    \item The remaining terms in the product corresponds to the rows $m \in \M$ in our $\M$-Construction. 
\end{itemize}
From here, it can also be seen that the number of minimum-weight codewords from the orbit of $g_i$ is equal to two to the power of the number of free entries (can be assigned any value in $\ff_2$) in $\bA$ and $\bb$, which can be easily proven to be the same as $2^{|\Ki|}$. In the language of monomials and permutation groups~\cite{bardet, bardet_arxiv}, our code modification procedures in later sections perturb the set $\I$ by, e.g., removing a row/monomial $g_j$ that contributes to the formation of a large number of orbits, while adding back a row/monomial $g_i$ that has a small orbit, which is further reduced by half due to the removal of $j$. Note that $g_j$ contributes to the orbit formation of $g_i$ if it appears as a term in \eqref{eq:expansion}, e.g. $g_j=x_h x_{k_2}\cdots x_{k_s}$ for some valid index $h$, and hence, the removal of $j$ will effectively eliminate one free entry, e.g. $a_{k_1,h}$, making at least half of the minimum-weight codewords in the orbit of $g_i$ disappear.
On the other hand, as $I$ is decreasing, $g_i$ should not contribute to the orbit formation of any other $g_{i'}$, $i'\in \I$. The modification can be applied further to achieve extra reduction on $\Adm$. 
Our explicit formulation of the set $\Ki$ makes this modification process more transparent.

Before moving on, we summarize the construction of minimum-weight codewords and its application in the numeration of such codewords in Fig.~\ref{fig:summary_decomp}.


\subsection{Applications of the Minimum-Weight Codewords Characterization for Polar Codes} 
So far, we have explicitly characterized the row combinations involved in the formation of minimum-weight codewords and then used them to enumerate minimum-weight codewords as one of the potential applications. In the following sections, we shall see how this knowledge can help to improve the error coefficient of polar codes by a simple modification. This is not the only way to employ the minimum-weight codeword characterization in code design. For instance, instead of modifying polar codes, one can start from a low-order Reed-Muller code as a polar subcode and obtain a different polar-like code while considering the number of minimum-weight codewords. 

In rate-compatible polar coding, we use a pattern $\mathcal{P}$ (a certain set of bit indices) to shorten the codewords. One can easily explain and count the reduction in the number of minimum-weight codewords of shortened polar codes by considering the intersection of the shortening pattern, set $\mathcal{P}$, and set $\MJ$, that is, by checking $\mathcal{P}\cap\MJ\neq\emptyset$ for every $\J\subseteq\Ki$. This approach was used in \cite{gu23rate} to analyze the impact of shortening on the error coefficient of the PAC codes. 

When precoding is performed before polar coding, what we learned from the formation of minimum-weight codewords can be used to explain the impact of precoding on the weight distribution of a polar code, as will be discussed in Section \ref{sec:PAC_Admin}. This understanding was further used for a semi-closed-form enumeration of PAC codes in \cite{rowshan22fast} and for two different approaches to design a precoder for PAC codes in \cite{gu23improved,rowshan23min}. 

Hence, we can classify the applications of the main contribution of this work into three categories: 1) deterministic enumeration of polar codes and their variants, 2) design of polar-like codes and precoding, and 3) analysis of the impact of any code modification (such as shortening) or precoding on the weight distribution and error correction performance. 
The next section focuses on the application of minimum-weight codewords characterization in code design, followed by its application to explain the reduction of minimum-weight codewords in PAC coding in Section \ref{sec:PAC_Admin}.

\section{Error Coefficient-improved Codes}\label{sec:Ad_improv}
In this section, leveraging what we know about the structure of minimum-weight codewords of a polar code in Section~\ref{sec:decomposition}, we propose a procedure to construct new codes with fewer minimum-weight codewords.

Consider a polar code $\CI$, where $\I$ is constructed by the conventional methods such as density evolution (DE) or those used to approximate the DE.
We define the set $\B$ (which was used in Corollary~\ref{trm:2} as well) and $\Bc$ as follows:
\begin{equation}
    \B\triangleq\{ i\in \I: \w(\mathbf{g}_i)= \wm\},
\end{equation}
\begin{equation}
    \Bc\triangleq\{ i\in \I^c: \w(\mathbf{g}_i)= \wm\}.
\end{equation}
For each $j\in \B$, let us also define the set $\Ej$ and $\Dj$ as follows. 
\begin{equation}\label{eq:Ej}
    \begin{split}
        \Ej & \triangleq\left\{ i\in[0,j-1]: j\in \Ki, |\Si| =|\Sj| \right\} \\ &=\{i \in \B\cup \Bc \colon j \in \Ki\},
    \end{split}
\end{equation}
and
\begin{equation}
   \Dj \triangleq \Ej\cap\B = \{ i\in\B: j\in\mathcal{K}_i \}.
\end{equation}

\begin{remark}
The set $\mathcal{E}_j$ is formed by the \emph{right-swap operation} on $\bin(j)$ which is the opposite of the operation performed on $\bin(j)$ to form set $\mathcal{K}_j$. 
\end{remark}

Our first idea to start from an existing code $\CI$ and generate a new code $\CIp$, where $\I'$ is obtained from $\I$ by removing an index $j$ while adding a new index $i\notin \I$. The key point is to select $j$ and $i$ so that $\bgj$ contributes to the formation of more minimum-weight codewords in $\CI$ than $\bgi$ does in $\CIp$. Note that $\bgj$ can contribute to a minimum-weight codeword as a coset leader, as a row in the $\J$-part, or as a row in the $\M$-part (see Theorem~\ref{thm:decomposition}). Additionally, we choose $i\in \Ej$, or equivalently, $j \in \Ki$, to further reduce the number of minimum-weight codewords emerging due to the addition of $i$: with the removal of $j$ from $\I$, $|\Ki\cap \I^\prime|\leq |\Ki\setminus \{j\}| = |\Ki|-1$. 

\begin{proposition}\label{prop:Admin_reduction}
Suppose that $\I\subseteq [0,N-1]$ satisfies the Partial Order Property.
Given $j\in\B$ 
and $i\in(\Ej\cap\Bc)$
satisfying 
\begin{equation}\label{eq:cond_prop}
\bigg(\sum_{x\in\mathcal{D}_j} 2^{|\mathcal{K}_x|-1}\bigg)+2^{|\mathcal{K}_j|}>2^{|\mathcal{K}_i|-1},
\end{equation}
then
\begin{equation}\label{eq:A_dmin_reduction}
    \Adm(\I^\prime) \leq \Adm(\I) - \bigg(\bigg(\sum_{x\in\mathcal{D}_j} 2^{|\mathcal{K}_x|-1}\bigg)+2^{|\mathcal{K}_j|}-2^{|\mathcal{K}_i|-1}\bigg),
\end{equation} 
where $\I^\prime$ is $\I^\prime=\{i\}\cup \big(\I\setminus \{j\}\big)$.
\end{proposition}
\begin{proof}
First, after removing $j$ from $\I$ to obtain a new set of indices of the information bits $\I^{\prime\prime}=\I\setminus \{j\}$, the number of minimum-weight codewords satisfies the following inequality.
\begin{equation}\label{eq:new_A_dmin1}
    \Adm(\I^{\prime\prime})\leq\sum_{x\in\mathcal{D}_j} 2^{|\mathcal{K}_x|-1} + \sum_{y\in\B\setminus(\mathcal{D}_j\cup \{j\})} 2^{|\mathcal{K}_y|}.
\end{equation}
Hence, the number of minimum-weight codewords is reduced by at least $\big(\sum_{x\in\mathcal{D}_j} 2^{|\mathcal{K}_x|-1}\big)+2^{|\mathcal{K}_j|}$, which is the left-hand side of \eqref{eq:cond_prop}. The first term of the sum, $\sum_{x\in\mathcal{D}_j} 2^{|\mathcal{K}_x|-1}$, reflects the reduction of $\Adm$ due to the contribution of $\bgj$ (as the $\J$-part) to the cosets $\C_x(\I)$ for $x\in \Dj$, while the second term, $2^{|\mathcal{K}_j|}$, is the contribution of $\bgj$ (as the coset leader) to $\C_j(\I)$. 
The equality holds in \eqref{eq:new_A_dmin1} when $\bgj$ does not contribute as the $\M$-part to any cosets. 

On the other hand, we claim that by adding $i$ to the set $\I^{\prime\prime}$, the total number of minimum weight codewords increases by at most $2^{|\mathcal{K}_i|-1}$, which is the right-hand side of inequality \eqref{eq:cond_prop}. 
Indeed, we note that as $i\notin\I$, the row $\bgi$ will only contribute to the formation of minimum-weight codewords of $\C(\I^\prime)$ as a coset leader because all $i'\in \I$ already have their sets $\K_{i'}$ (hence their $\J$-parts) in $\I$ and their $\M$-parts in $\I$ as well. 
Here, we are applying Proposition~\ref{pro:upper_bound} on $i$ and the set $\I^\prime=\I^{\prime\prime}\cup \{i\}$, noting that this proposition doesn't require the set to satisfy the Partial Order Property. Furthermore, since $j\in\mathcal{K}_i$ (because $i \in \Ej$), and $j$ has been removed from $\I$, the row $\bgi$ contributes to at most $2^{|\mathcal{K}_i|-1}$ minimum-weight codewords in $\C(\I^\prime)$. 

Thus, as more minimum-weight codewords are lost than gained when going from $\CI$ to $\C(\I^\prime)$ due to \eqref{eq:cond_prop}, the inequality \eqref{eq:A_dmin_reduction} follows.
\end{proof}

According to Proposition~\ref{prop:Admin_reduction}, we can modify $\I$ to improve $\Adm$ given that there exists $i\in\I^c$ such that $\w(\mathbf{g}_i)=\wm$ and \eqref{eq:cond_prop} holds.  
It is clear that such a modification is impossible for Reed-Muller codes because $\I$ already contains all $i\in [0,N-1]$ with $\w(\bgi)=\wm$. 
For polar codes, to reduce the error coefficient, as a fast rule (which could be sub-optimal), one can look for $j\in\B$ with a large $|\mathcal{D}_j|=|\Ej\cap\B|$ and $i\in(\mathcal{E}_j\cap\Bc)$ with the smallest $|\mathcal{K}_i|$ that satisfies \eqref{eq:cond_prop}.

\begin{example}\label{ex:Admin(64,32,8)}
Let us take the polar code of $(64,32,8)$ where 
\begin{equation}
    \B=\{26,28,38,41,42,44,49,50,52,56\},
\end{equation}
\begin{equation}
    \Bc=\{7,11,13,14,19,22,25,35,37\}.
\end{equation}
Then with $j=56$, $\Dj=\B\setminus \{38\}$ has the largest size $|\Dj|=9$.
Observe that $\bin(38)=(100110)_2$ is not the result of the right-swap operation on $\bin(56)=(111000)_2$. On the other hand, we have $i=25$ where $i\in\Bc$ and $i\in\mathcal{E}_{56}$. Furthermore, $|\mathcal{K}_{25}|=8$ for $\bin(25)=(011001)_2$,  which is the smallest among $|\mathcal{K}_{i'}|$ for every $i'\in\Bc$ (actually, the alternative choice is 37). That is, by adding $i=25$ to set $\I$, the total number of codewords of minimum weight increases by $2^8$,  assuming no other changes in $\I$. Now, if we remove $j=56$ from $\I$, not only all $2^{|\mathcal{K}_j|}=2^{3}$ minimum-weight codewords from the coset $\C_j(\I)$ disappear, but also the number of minimum-weight codewords in every coset $\C_x(\I)$ for $x\in\mathcal{D}_{56}$ (and $x=i=25$) is reduced by half. 
Therefore, the reduction in the number of minimum-weight codewords going from $\I$ to $\I\cup\{i\}\setminus \{j\}$ is at least 
\begin{equation}
\begin{split}
&\big(\sum_{x\in\mathcal{D}_j} 2^{|\mathcal{K}_x|-1}\big)+2^{|\mathcal{K}_j|}-2^{|\Ki|-1} \\&=(64+32+64+32+16+32+16+8) + 8 - 128\\& = 264 + 8 - 128 = 144,
\end{split}
\end{equation}
as stated in Proposition~\ref{prop:Admin_reduction}. 
However, it turns out that we have achieved a larger reduction by this modification, which is $192 = 664-472>144$ (see Table~\ref{tab:modification}). The difference $48=192-144$ is due to the further loss of minimum-weight codewords in the coset led by the row 38. More specifically, by removing $j=56$ from $\I$, the number of minimum-weight codewords generated by this coset also reduces from 128 (as $|\mathcal{K}_{38}|=7$) to 80 because $j$ is a row in the $\M$-part corresponding to several $\J\subseteq \K_{38}$. This extra reduction is reflected (by the $\leq$ sign in \eqref{eq:A_dmin_reduction}) but not quantified in the statement of Proposition~\ref{prop:Admin_reduction}. 

\begin{table}
\centering
\begin{tabular}{ | c || c | c | }   
\hline
 $i$ & $A_{i,8}(\I)$ & $A_{i,8}(\I')$\\
\hline
  \cellcolor{green!50}25 &   & 128\\ 
  \hline
  26 & 128  & 64\\ 
  \hline
  28 & 64  & 32\\ 
  \hline
  38 & 128  & 80\\ 
  \hline
  41 & 128  & 64\\ 
  \hline
  42 & 64  & 32\\ 
  \hline
  44 & 32  & 16\\ 
  \hline
  49 & 64  & 32\\ 
  \hline
  50 & 32  & 16\\ 
  \hline
  52 & 16  & 8\\ 
  \hline
  \cellcolor{red!50}56 & 8   & \\ 
  \hline
   \textbf{Total} & \textbf{664}  & \textbf{472}\\ 
  \hline
\end{tabular}
\caption{The number of minimum-weight codewords of $\CI$ and $\CIp$ where $\I'=\I\cup\{25\}\setminus \{56\}$. The numbers of minimum-weight codewords in the cosets $\CiI$ and $\C_i(\I')$ are given in each row for $i\in \{25,26,\ldots,56\}$. By removing $j=56$ and adding $i=25$ to $\I$, 320 minimum-weight codewords are removed while 128 are added, resulting in a reduction of 192.}
\label{tab:modification}
\end{table}
\end{example}

\begin{remark}
Given $\I'=\{i\}\cup \big(\I\setminus \{j\}\big)$, the contribution of row $i$ where $\w(\bgi)=\wm$ and  $i\not\in\E_j$ is  $A_{i,d_{\min}}(\I')\leq A_{i,d_{\min}}(\I)=2^{|\Ki|}$ because $j$ could be in the set $\M$ associated with some $\J$ in the coset $\Ci(\I)$. For example, $A_{38,8}(\I')=80<128$ in Example \ref{ex:Admin(64,32,8)}.
\end{remark}
\begin{example}
In Example \ref{ex:Admin(64,32,8)}, $38\not\in\E_{56}$, as a result $A_{38,w_{min}}=80<2^{|\K_{38}|}$ where $2^{|\K_{38}|}=2^7=128$. Take $\J=\{42,52\}$ for example, observe that $m=56$ for this $\J$ but since $56\not\in\I'$, then $\w(\bg_{38}+\bg_{42}+\bg_{52})=12$, however,  $\w(\bg_{38}+\bg_{42}+\bg_{52}+\bg_{56})=8$. Note that $m\in\M$ for every $\J$ consists of $\{42,52\}$ and any subset of $\K_{38}\setminus\{42,52\}$, hence we expect to sabotage the formation of $2^{|K_{38}|-2}=2^5$ codewords at least.
\end{example}

\begin{corollary}\label{cor:gi_gr_wmin_no_incr}
Suppose that $\I \subseteq [0,N-1]$ satisfies the Partial Order Property. Pick an $i\in\I^c$ with $\w(\mathbf{g}_i)>\wm$ and set $\I^{\prime\prime}\triangleq\I\cup\{i\}$. Then $\Adm(\I)=\Adm(\I^{\prime\prime})$.
\end{corollary}
\begin{proof}
According to Corollary \ref{cor:geq_wi}, if $\w(\bgi)>\wm$, then $$\w(\bgi\oplus\bigoplus_{h\in\mathcal{H}}\mathbf{g}_h)> \wm,$$ where $\mathcal{H}\subseteq [i+1,2^{n}-1]$, therefore, no codewords with weight $\wm$ are introduced in the coset $\C_i$. Therefore,  $\Adm(\I^{\prime\prime})=\Adm(\I^{\prime})$, where $\I^\prime=\{i\}\cup \big(\I\setminus \{j\}\big)$.
\end{proof}


\section{Constructing New Codes: Procedure}
We can further reduce the error coefficient, $\Adm$, by repeating the process suggested in Proposition \ref{prop:Admin_reduction} for more pairs $(i,j)$. We propose a procedure\footnote{Python script available at https://github.com/mohammad-rowshan/Error-Coefficient-reduced-Polar-PAC-Codes}, detailed in Algorithm \ref{alg:code_design}, 
to find the pairs $(i,j)$ to modify the set $\I$ with the objective of reducing the number of minimum weight codewords.  That is, we are looking for $(i_1,j_1), \dots, (i_{\pi_{\max}},j_{\pi_{\max}})$ to modify the code $\C(\I)$ to obtain $\CIp$, where
\[
    \I^\prime = \{i_1,\dots, i_{\pi_{\max}}\} \cup (\I\setminus \{j_1,\dots, j_{\pi_{\max}}\}).
\]

This iterative procedure occurs in a loop in lines 4-38. 
As described before, the first step is to find the index $j_1\in\B$ that reduces the error coefficient $\Adm$ the most, or in mathematical notation, 
\begin{equation}
    j_1 = \argmax_{x\in\B} |\E_x\cap\B|.
\end{equation}
Recall that the reduction in the number of minimum-weight codewords due to the removal of $j_1$ (see Proposition~\ref{prop:Admin_reduction}) is at least 
\begin{equation}\label{eq:minus}
    \texttt{minus} \triangleq \bigg(\sum_{x\in\E_{j_1}\cap\B} 2^{|\mathcal{K}_x|-1}\bigg)+2^{|\mathcal{K}_{j_1}|}.
\end{equation}
One may want to find $j_1\in \B$ that maximizes the $\texttt{minus}$, which could be expensive. An alternative way is to simply look for the largest $|\D_j|=|\E_x\cap\B|$, as discussed in Section \ref{sec:Ad_improv} and in the proof of Proposition \ref{prop:Admin_reduction}. Note that both approaches may still be sub-optimal because as shown in Example~\ref{ex:Admin(64,32,8)}, $\texttt{minus}$ does not fully capture the possible reduction in the number of minimum-weight codewords when removing $j$. 
In lines 6-8, $j$ and $|\D_j|$ are collected in $\D$ and in line 11, the index $j$ corresponding to the largest $|\D_j|$ is obtained. Then the calculation of \texttt{minus} according to \eqref{eq:minus} is implemented in lines 12-14. 
Observe that for such $j_1$, we have $2^{|\mathcal{K}_{j_1}|}=\min_{x\in\B} 2^{|\mathcal{K}_x|}$ or $j_1=\max(\B)$. Further details and the application of this property are the subject of Section \ref{ssec:simpl_proced}. 
After finding such $j$, we remove it from the set $\I$. We denote the new set as the set $\I'=\I\setminus\{j\}$ (lines 33-35).  
The next step is to find a row $i$ that contributes the least to the error coefficient $\Adm$. The contribution of $i$ depends on whether it belongs to the set $\E_j\cap\Bc$ or $\Bc\setminus\E_j$ as follows:
\begin{equation}\label{eq:plus_cases}
\texttt{plus}=
    \begin{dcases*}
        2^{|\mathcal{K}_i|-1} & \text{ if } $i\in\E_j\cap\Bc$,\\
        2^{|\mathcal{K}_i|} & \text{ if } $i\in\Bc\setminus\E_j$.
    \end{dcases*}
\end{equation}
Lines 19-23 and 24-29 implement the two cases in \eqref{eq:plus_cases}, respectively. In subsequent iterations $\pi:1 \rightarrow \pi_{\max}$, we subtract $\pi$ from the exponents of $\texttt{minus}$ and $\texttt{plus}$ to account for the removal of $j$'s in lines 14 and 23. 
Note that since we removed $j$ from the set $\I$ in the first stage, i.e., $j\not\in\I'$, and on the other hand, we have $i\in\E_j\cap\Bc$, as a result, $j\not\in(\K_i\cap\I)$. Thus, the contribution of $i$ will be reduced to $2^{|\mathcal{K}_i|-1}$. This is the reason for choosing $i$ from $\E_j\cap\Bc$. We can check whether there exists some $i'\in\Bc\setminus\E_j$ such that $2^{|\mathcal{K}_{i'}|}<2^{|\mathcal{K}_i|-1}$. However, this is generally not the case. 

We can repeat this procedure a limited number of times, up to a suitable $\pi_{\max}$. However, the following iterations do not exactly follow Proposition \ref{prop:Admin_reduction} because the set $\I'$ no longer satisfies partial order property. Needless to mention that in each iteration, the sets $\B$ and $\Bc$ are changing due to updating the set $\I'$. This results in a difference in the set $\K_x\cap\I'$ for identical $x$ in every iteration (making $\K_x\cap\I'$ smaller due to the removal of larger indices from $\I'$). Note that the reduction estimated by removing $i$ is the lower bound. The reason is that the removed $j$ from the set $\I$ could be in the set $\M(\J)$ of some $\J\subseteq\K_i$. Therefore, as Theorem~\ref{thm:decomposition} suggests, in the absence of such $\M$, some of the codewords with minimum-weight in the coset $\C_i(\I')$ cannot be generated. This results in an additional reduction in the minimum weight codewords introduced by the coset $\C_i$, smaller than what is expected and consequently in a smaller error coefficient, $\Adm$. 

Also, if there exists some $i\in\I^c$ such that $\w(\bg_i)>\wm$, then this would have priority over an $i$ with $\w(\bg_i)=\wm$ because according to Corollary \ref{cor:gi_gr_wmin_no_incr}, adding this coordinate to $\I'$ will not contribute to $\Adm$. This is implemented in lines 15-17.

\begin{algorithm}
\caption{Code Modification}
\label{alg:code_design} 
\linespread{1.0}\selectfont 
\DontPrintSemicolon
\SetKwInOut{Input}{input}
\SetKwInOut{Output}{output}
\SetKwRepeat{Repeat}{do}{while} 
\SetKwFunction{func}{Subroutine}
\Input{Set of non-frozen indices $\I$, $\pi_{\max}$}
\Output{$\I$ (modified)}
    $\B \gets$ \text{extract all $i\in\I$ where $\w(\bgi)=\wm$}\;
    $\Bc \gets$ \text{extract all $i\in\I^c$ where $\w(\bgi)=\wm$}\;
    $\Bs \gets$ \text{extract all $i\in\I^c$ where $\w(\bgi)>\wm$}\;
    \For{$\pi$ \textnormal{in} $[1:\pi_{\max}]$}{
        $\D \gets \varnothing$, $\K \gets \varnothing$\;
        \For{$x$ \textnormal{in} $\B$}{
            \If{$|\E_x\cap\B|>0$}{ 
                $\D\gets \D\cup \{(x,|\E_x\cap\B|)\}$ \tcp*{Cf. \eqref{eq:Ej}}
            }
        }
        \If{$D=\varnothing$}{
            \text{Break}\;
        }
        $j \gets $ \text{Find $x$ associated with the largest} $|\E_x\cap\B|$ in $\D$\;
        $\texttt{minus} \gets 2^{|\mathcal{K}_j|-(\pi-1)}$ \tcp*{Cf. Lemma \ref{lma:Ki}.a} 
        \For{$x$ \textnormal{in} $\E_j\cap\B$}{
            $\texttt{minus} \gets \texttt{minus}+2^{|\mathcal{K}_x|-\pi}$\;
        }
        \uIf{$|\Bs|>0$}{
            $i\gets \max(\Bs)$\;
            \texttt{plus} $\gets$ 0, \texttt{paired} $\gets$ True\;
        }\Else{
            \If{$|\E_j\cap\Bc|>0$}{ 
                \For{$x$ \textnormal{in} $\E_j\cap\Bc$}{
                    $\K\gets  \K \cup \{(x,|\K_x|)\}$\;
                }
                $i \gets $ \text{Find $x<\min(\B)$ with the smallest} $|\K_x|$ in $\K$\;
                $\texttt{plus} \gets 2^{|\mathcal{K}_i|-\pi}$\;
            }\ElseIf{$|\Bc|>0$}{ 
                \For{$x$ \textnormal{in} $\Bc$}{
                    $\K\gets \K \cup \{(x,|\K_x|)\}$\;
                }
                $i' \gets $ \text{Find $x$ associated with the smallest} $|\K_x|$ in $\K$\;
                \If{\textnormal{\texttt{plus }}$>2^{|\mathcal{K}_{i'}|}$}{
                    \texttt{plus} $\gets$ $2^{|\mathcal{K}_{i'}|}$, $i \gets i'$\;
                }
            }
            \If{$\textnormal{\texttt{plus}} < \textnormal{\texttt{minus}}$}{
                \text{Remove $i$ from $\Bc$}\;
                \texttt{paired} $\gets$ True\;
            }
        }
        \uIf{\textnormal{\texttt{paired}} $=$ \textnormal{True}}{
            \text{Remove $j$ from $\B$}\;
           $\I \gets(\I\cup \{i\} \setminus \{j\})$
        }\Else{
            \text{Break}\;
        }
    }
\end{algorithm}

It is worth mentioning that the resulting information set $\I'$ using this procedure or the simplified procedure in the next section does not necessarily satisfy the partial order property. 
Algorithm \ref{alg:code_design} illustrates the procedure discussed above. In this procedure, the iterations are limited to $\pi_{\max}$. Additionally, in lines 30-32, we have a stopping criterion of $\texttt{plus}<\texttt{minus}$. This could be useful when $\pi_{\max}$ is considered large. In Section \ref{sec:Pe_vs_Admin}, we will discuss the need to balance reliability and error coefficient. The parameter $\pi_{\max}$ is chosen at the turning point where further improvement of the error coefficient does not improve the error correction performance of the code but degrades it.

\subsection{Simplified Procedure for Code Design}\label{ssec:simpl_proced}
The procedure introduced above requires operations on the sets and finding the largest or smallest elements in the sets. Some of these operations can be replaced with simpler operations based on prior knowledge of the polar transform and partial ordering. Here, we review Algorithm~\ref{alg:code_design} and find equivalent operations that are simpler. The procedure in general can be divided into two operations: 1) remove the indices that contribute the most to the error coefficient in the set $\I$ and 2) add the indices that contribute the least to the error coefficient in the set $\I$. These two operations are performed interactively by index pairs (up to $\pi_{\max}$ pairs) in Algorithm~\ref{alg:code_design}. In the following, we find equivalents for these two operations. 

We start by finding the indices that contribute the most to the error coefficient. 
In the first iteration of this algorithm, finding an $x$ that gives the largest $|\E_x\cap\B|$ in $\D$ (line 11 of Algorithm~\ref{alg:code_design}) is straightforward. By definition, the set $\B$ includes every $i\in \I$ with $\w(\mathbf{g}_i)= \wm$, then $\E_x$ intersects most of the elements in $\B$ when $\bin(x)$ has the form $\{1\}^q+\{0\}^r$ where $q=\log_2 \wm$ and $r=n-q$. In this notation, $\{1\}^q$ denotes a string of 1's repeated $q$ times, and $+$ is used for concatenation. This $x$ is $x \succeq i$ for every $i\in\B\setminus\{x\}$. That is, the rest of the elements in $\B$ can be obtained by single or multiple right-swap operations on $\bin(x)$. Hence, the $x$ that gives the largest $|\E_x\cap\B|$ in $\D$ is basically the element in $\B$ that has the largest index. The other candidates to be removed from $\I$ in the following iterations can be approximated by choosing the second and third largest indices in $\B$. Our observation shows that in the case of $\pi_{\max}=3$, we get identical index candidates to remove from $\I$. Therefore, the $\pi_{\max}$ largest indices in the set $\B$ are chosen. 

For the second operation, that is, selecting the least contributing indices of the set $\I^c$ that will play the role of coset leader, we choose a different approach. Assuming that the reduction (minus) in the error coefficient is significantly larger than the addition (plus), that is,
$$\texttt{minus}\gg\texttt{plus},$$
then instead of finding the least contributing indices, we can simply choose the most reliable bit-channels in set $\Bc$ to be added to the set $\I'$ if $\Bs$ is empty. Obviously, if $|\Bs|>0$, we prioritize adding the elements of the set $\Bs$. 

This simplified approach can be summarized in Algorithm \ref{alg:simplified_code_design}. 
Suppose that we have a reliability-ordered sequence $\Q_0^{N-1}$ such that $W_N^{(\Q_0)}\leq W_N^{(\Q_1)}\leq\cdots\leq W_N^{(\Q_{N-1})}$. This sequence can be obtained from any method discussed in the Introduction as long as the partial ordering property is maintained. Then, we select $K+\pi_{\max}$ most reliable indices from the sequence $\Q_0^{N-1}$, that is,is, from $\Q_{N-K-\pi_{\max}-1}$ to $\Q_{N-1}$ given the minimum distance $d_{\min}$ of $\Q_{N-K-\pi_{\max}-1}^{N-1}$ and $\Q_{N-K-1}^{N-1}$ is identical. If not, we can reduce $\pi_{\max}\in[1,3]$. The rest of the procedure follows the approach discussed above, as illustrated in the algorithm.

\begin{algorithm}
\caption{Simplified Code Design Procedure}
\label{alg:simplified_code_design} 
\linespread{1.0}\selectfont 
\DontPrintSemicolon
\SetKwInOut{Input}{input}
\SetKwInOut{Output}{output}
\SetKwRepeat{Repeat}{do}{while} 
\SetKwFunction{func}{Subroutine}
\Input{reliability ordered sequence $\Q_0^{N-1}$, $\pi_{\max}$}
\Output{$\I'$}
    $\Bs \gets$ find up to $\pi_{\max}$ elements $i$ in $\{\Q_{N-K-2}, \cdots, \Q_{1}, \Q_{0}\}$ where $\w(\bin(i))=2\log_2(w_{min})$\;
    
    $\I' \gets \{\Q_{N-K-\pi_{\max}-1+|\Bs|}, \cdots, \Q_{N-2}, \Q_{N-1}\}\cup\Bs$\;
    
    \For{$\pi$ \text{in} $[1:\pi_{\max}]$}{
        $j \gets \max \{j\in\I':\w(\bin(j))=w_{min}\}$\; 
        $\I' \gets \I'\setminus\{j\}$\;
    }
\end{algorithm}

Table \ref{tbl:two-algs} compares the output of Algorithms \ref{alg:code_design} and \ref{alg:simplified_code_design}. As can be seen, the selected non-frozen rows to be frozen are identical in both algorithms.

\begin{table}[ht]
\addtolength{\tabcolsep}{-0.3pt} 
\centering
\caption{Comparison of Algorithms \ref{alg:code_design} and \ref{alg:simplified_code_design} in terms of the indices added to and removed from set $\I$.}
\begin{tabular}{ |@{} c @{}|@{} c @{}|}   
\hline

\begin{tabular}{ c | c  }
&\\
N & Alg.\\
&\\
\hline
  \multirow{2}{*}{64} 
   & 1 \\ 
   & 2 \\
\hline
  \multirow{2}{*}{256} 
   & 1 \\ 
   & 2 \\
\hline
  \multirow{2}{*}{512} 
   & 1 \\ 
   & 2 \\
\end{tabular}
&
\begin{tabular}{ c | c |  c | c }
\multicolumn{4}{c}{Code Rate (R)}\\
 \hline
 \multicolumn{2}{c|}{$1/4$} &  \multicolumn{2}{c}{$3/4$}\\
 \hline
Removed & Added & Removed & Added \\
\hline
   60,58,57 & 30,29,{\color{red}27}  & 48,40 &  18,{\color{red}12}\\ 
   60,58,57 & {\color{red}39},30,29  & 48,40 & {\color{red}33},18\\ 
\hline
   248,244 & {\color{red}118},{\color{blue}63}  & 224,208,200 & 74,{\color{blue}23,15}\\ 
   248,244 & {\color{red}173},{\color{blue}63}  & 224,208,200 & 74,{\color{blue}23,15}\\ 
\hline
   496,488,484 & 335,315,311 & 448,416,400 & {\color{red}135},83,78\\ 
   496,488,484 & 335,315,311 & 448,416,400 & 83,78,{\color{red}58}\\ 
\end{tabular}\\
\hline
\end{tabular}
\label{tbl:two-algs}
\end{table}



As can be seen in Table \ref{tbl:two-algs}, for both algorithms, the largest index for each code in the `Removed' 
columns has the form of $\{1\}^q+\{0\}^r$ and the other indices are the result of the right-swap operation of the least significant bit (LSB) on the binary representation of the largest index. For example, $60=(111100)_2$ is the largest, and the second and the third are $58=(111010)_2$ and $57=(111001)_2$ where they all have a Hamming weight of 4. 
Furthermore, the indices added to set $\I$ in both algorithms are also similar (identical or different in one element) and according to our observation, the slight difference in some of the codes does not significantly change the error coefficient as illustrated in  Table \ref{tbl:two-algs-err-coeff}. Note that P+ and PAC+ denote polar codes and PAC codes, respectively, constructed by Algorithms \ref{alg:code_design} and \ref{alg:simplified_code_design}. 
As a result, the block error rate will remain almost the same. 
The indices highlighted in blue in Table \ref{tbl:two-algs} have a Hamming weight of $2\log_2(w_{min})$ and those shown in red highlight the differences between the results of the two algorithms. The codes with rate $1/2$ also follow this similarity in both algorithms; however, due to the limit of column width, we omitted them from the table.

\begin{table}[ht]
\addtolength{\tabcolsep}{-0.67pt} 
\centering
\caption{Comparison of Algorithms \ref{alg:code_design} and \ref{alg:simplified_code_design} in terms of resulting error coefficients.}
\begin{tabular}{ |@{} c @{}|@{} c @{}|}   
\hline

\begin{tabular}{ c | c  }
&\\
$N$ & Alg.\\
&\\
\hline
  \multirow{2}{*}{64} 
   & 1 \\ 
   & 2 \\
\hline
  \multirow{2}{*}{256} 
   & 1 \\ 
   & 2 \\
\hline
  \multirow{2}{*}{512} 
   & 1 \\ 
   & 2 \\
\end{tabular}
&
\begin{tabular}{ c | c |  c | c | c | c }
\multicolumn{6}{c}{Code Rate (R)}\\
 \hline
 \multicolumn{2}{c|}{$1/4$} &  \multicolumn{2}{c|}{$1/2$} &  \multicolumn{2}{c}{$3/4$}\\
 \hline
    P+ & PAC+ & P+ & PAC+ & P+ & PAC+ \\
\hline
   196 & 24  & 408 & 112 & 272 &  108\\ 
   220 & 44  & 408 & 184 & 304 &  216\\ 
\hline
   5912 & 568  & 77104 & 13904 & 272 &  216\\ 
   6424 & 608  & 77104 & 13904 & 272 &  216\\ 
\hline
   4048 & 748  & 18720 & 4412 & 13504 &  4832\\ 
   4048 & 748 & 18720 & 4412 & 13504 &  4832\\ 
\end{tabular}\\
\hline
\end{tabular}
\label{tbl:two-algs-err-coeff}
\end{table}
\section{Impact of Precoding on Error Coefficient}\label{sec:PAC_Admin} 
In this section, we consider convolutional precoding. The recently introduced polarization-adjusted convolutional (PAC) coding scheme can reduce the number of minimum-weight codewords. This reduction is a result of the inclusion of rows in $\I^c$ in the generation of codewords in the cosets \cite{rowshan-precoding}. Note that the convolutional precoding does not change the set $\B$. That is, the leaders of cosets $\C_i,i\in\B$ remain unchanged in the PAC coding. In this section, we study how precoding further reduces the number of minimum-weight codewords.

The input vector $\mathbf{u}=[u_0,\ldots,u_{N-1}]$ in PAC codes unlike polar codes is obtained by a convolutional transformation using the binary generator polynomial of degree $m$, with coefficients $\mathbf{p}=[p_0,\ldots,p_m]$ as follows: 
\begin{equation}\label{eq:precoding}
    u_i = \sum_{j=0}^m p_j v_{i-j},
\end{equation}

This convolutional transformation combines $m$ previous input bits stored in a shift register with the current input bit $v_i$ to calculate $u_i$. The parameter $m$ is known as the {\em memory} of the shift register, and by including the current input bit we have the {\em constraint length} $m+1$ of the convolutional code. 
Note that the convolutional precoding does not reduce the minimum distance of a polar code (and thus the minimum weight of non-zero codewords) due to Corollary \ref{cor:geq_wi}. This was shown in \cite[Lemma 1]{rowshan23min}. 

From a polar coding perspective, the vector $\mathbf{u}$ is equivalent to the vector $\mathbf{v}$ in the PAC coding by $\mathbf{p}=[1]$. 
To obtain similar combinations of rows in $\bGN$ to form a minimum-weight codeword, we need to have $u_a=1$ for every  $$a\in\{i\}\cup\J\cup\M,$$ hence we have 
\begin{equation}\label{eq:wt_gi_gk1_u}
w\big(\bgi\oplus\bigoplus_{j\in\mathcal{J}}\bgj \oplus \bigoplus_{m\in\MJ}\bgm\big) =  \wm,
\end{equation} 
where $\mathcal{J}\subseteq \mathcal{K}_i$ and $\M(\J) \subseteq \I\setminus \Ki$. Obviously, we need $u_b=0$ for any $$b\in\big(\I\cap[i,N-1]\big)\setminus\big(\{i\}\cup\J\cup\M\big).$$ Note that the values of elements in the vector $\mathbf{v}$ are not important as long as we get the desired vector $\mathbf{u}$ as a result of transmission in \eqref{eq:precoding}. That is, a different message vector $\mathbf{d}=\left[d_0, d_1, \ldots, d_{K-1}\right]$ (see Section II in \cite{rowshan-precoding} for more details) in the PAC coding may result in the same code as in the polar coding if they both have the same vector $\mathbf{u}$. If we represent the convolution operation in the form of a Toeplitz matrix, where the rows of a {\em convolutional generator matrix} $G$ are formed by shifting the vector $\mathbf{p} = (p_0,p_1,\ldots p_m)$ one element at a row, as shown in (\ref{eq:conv_gen}). 
\begin{equation}
\label{eq:conv_gen}
\mathbf{P}=\begin{bNiceMatrix}
p_0 & p_1 & \Cdots & p_m & 0 & \Cdots & & 0 \\
0 & p_0 & p_1 & \Cdots & p_m & & & \\
\Vdots & \Ddots & \Ddots & \Ddots & & \Ddots & \Ddots &\Vdots \\
 & & & & & & & 0 \\
 & & & & p_0 & p_1 & \Cdots & p_m \\
\Vdots & & & & \Ddots & \Ddots & \Ddots &\Vdots \\
 & & & & & & p_0 & p_1 \\
0 & \Cdots & & & & & 0 & p_0
\end{bNiceMatrix}
\end{equation}

Note that $p_0$ and $p_m$ by convention are always 1, hence it is an upper-triangular matrix. 
Then, we can obtain $\mathbf{u}$ by matrix multiplication as $\mathbf{u}=\mathbf{v}\mathbf{P}$. 
As a result of this pre-transformation, $u_f$ for $f\in\I^c$ may no longer be frozen (i.e., $u_f=0$) as in polar codes. Therefore, $u_f\in\{0,1\}$. 

The important point to note is that $v_f=0$ for $f\in\I^c$. To have codewords similar to those of the polar codes in \eqref{eq:wt_gi_gk1_u}, we need $u_a=1$ for $a\in\{i\}\cup\J\cup\M(\J)$. 

In general, depending on the convolution of the $m$ previous inputs, that is, $V=\sum_{k=1}^m p_k v_{x-k}$, we have $u_x=V+v_x$. Therefore, we can obtain $u_x=1$ by setting the current input $v_x$ as 
\begin{equation}\label{eq:v_x_based_on_V}
        v_x=
\begin{dcases*}
        1 & \text{if } $\sum_{k=1}^m p_k v_{x-k}=0$,\\
        0 & \text{otherwise},\\
\end{dcases*}
\end{equation}
and for $u_x=0$, we can do the opposite.

Hence, to get $u_x=1$ for $x\in\{i\}\cup\J\cup\M(\J)$ and $u_x=0$ for $x\in\big(\I\cap[i,N-1]\big)\setminus\big(\{i\}\cup\J\cup\M(\J)\big)$, we can set $v_x$ according to the general rules mentioned above. 
However, we have no control over the values of $u_f$ for $f\in\I^c$ as a result of \eqref{eq:precoding} knowing that $v_f=0$ by default.  Consequently, there will be another term as $\bigoplus_{f\in\F(\J)}\bgf$ for $\F(\J)\subseteq\I^c\cap[i,N-1]$ in \eqref{eq:wt_gi_gk1_u} which is inevitable. This term may increase the weight of the generated codeword:

\begin{equation}\label{eq:wt_gi_gk_gf}
w\big(\bgi\oplus\bigoplus_{j\in\mathcal{J}}\bgj \oplus \bigoplus_{f\in\F(\J)}\bgf \oplus \bigoplus_{m\in\mathcal{M}(\J\cup\F(\J))}\bgm  \big) \geq  \wm.
\end{equation}

Note that the equality in \eqref{eq:wt_gi_gk_gf} is our conjecture based on observations and requires a rigorous proof by extending the $\M$-construction as discussed in this section, and it is an open problem. 

\begin{example}\label{ex:Admin_PAC_pi_1_2}
Suppose that we have the polar code of (64,32,8). If we modify it as
$\I'=\big(\{25\}\cup\I\setminus\{56\}\big)$, the number of minimum-weight codewords broken into cosets will be as shown in the table below before precoding and after precoding.
\begin{table}[ht]\label{tbl:one-pair}
\begin{center}
\caption{Number of minimum-weight codewords in the coset $\Ci(\I\cap[i,N-1])$ and $\Ci(\I'\cap[i,N-1])$ for the code (64,32,8) and $\pi=1$, with and without precoding where $\mathbf{c}=[1,0,1,1,0,1,1]$.}
\begin{tabular}{ | c || c | c | c | c |}   
\hline
 & \multicolumn{2}{c|}{Polar} & \multicolumn{2}{c|}{PAC}\\
\hline
 $i$ & $A_{i,8}(\I)$ & $A_{i,8}(\I')$ & $A_{i,8}(\I)$ & $A_{i,8}(\I')$\\
\hline
  \cellcolor{green!50}25 &   & 128 &  & 0\\ 
  \hline
  26 & 128  & 64 & 0 & 0\\ 
  \hline
  28 & 64  & 32 & 0 & 0\\ 
  \hline
  38 & 128  & 80 & 128 & 64\\ 
  \hline
  41 & 128  & 64 & 128 & 64\\ 
  \hline
  42 & 64  & 32 & 64 & 32\\ 
  \hline
  44 & 32  & 16 & 32 & 16\\ 
  \hline
  49 & 64  & 32 & 64 & 32\\ 
  \hline
  50 & 32  & 16 & 32 & 16\\ 
  \hline
  52 & 16  & 8 & 16 & 8\\ 
  \hline
  \cellcolor{red!50}56 & 8   &  & 8 & \\ 
  \hline
   Total & 664  & 472 & 472 & 232\\ 
  \hline
\end{tabular}
\end{center}
\label{tab:enumerat_PAC2}
\end{table}

\begin{table}[ht]\label{tbl:two-pairs2}
\begin{center}
\caption{Number of minimum-weight codewords in the coset $\Ci(\I\cap[i,N-1])$ and $\Ci(\I'\cap[i,N-1])$ for the code (64,32,8) and $\pi=2$, with and without precoding where $\mathbf{c}=[1,0,1,1,0,1,1]$. }
\begin{tabular}{ | c || c | c | c | c |}   
\hline
 & \multicolumn{1}{c|}{Polar} & \multicolumn{3}{c|}{PAC}\\
\hline
 $i$ & $A_{i,8}(\I)$ & $A_{i,8}(\I)$ & $A_{i,8}(\I')$ & $A_{i,8}(\I')$\\
\hline
  \cellcolor{green!50}22 &  & & & 0\\ 
  \hline
  \cellcolor{green!50}25 & &  & 0 & 0\\ 
  \hline
  26 & 128 & 0 & 0 & 0\\ 
  \hline
  28 & 64 & 0 & 0 & 0\\ 
  \hline
  38 & 128 & 128 & 64 & 32\\ 
  \hline
  41 & 128 & 128 & 64 & 32\\ 
  \hline
  42 & 64 & 64 & 32 & 16\\ 
  \hline
  44 & 32 & 32 & 16 & 8\\ 
  \hline
  49 & 64 & 64 & 32 & 16\\ 
  \hline
  50 & 32 & 32 & 16 & 8\\ 
  \hline
  \cellcolor{red!50}52 & 16 & 16 & 8 & \\ 
  \hline
  \cellcolor{red!50}56 & 8 & 8 & & \\ 
  \hline
   Total & 664 & 472 & 232 & 112\\ 
  \hline
\end{tabular}
\end{center}
\label{tab:enumerat_PAC3}
\end{table}

\end{example}



For any set $\J\subseteq \K_i$ and its associated set $\M(\J) \subseteq \I\setminus \Ki$, if as a result of precoding in \eqref{eq:precoding} there exists set $\F(\J)\subseteq\I^c\cap[i,N-1]$ such that  
$|\Sf\setminus\Si|>1$ for at least one $f\in\F(\J)$, then we have

\begin{equation}\label{eq:wt_gr_Ci_gf_geq}
w\big(\bgi \oplus \bigoplus_{j\in\J}\bgj \oplus \bigoplus_{f\in\F(\J)}\bgf \oplus \bigoplus_{m\in\M(\J\cup\F(\J))}\bgm  \big) >  \wm.
\end{equation} 
Recall from Lemma \ref{lma:Ki} (properties of $\K_i$)  and Theorem~\ref{thm:decomposition} that in order to get a codeword with weight $\wm$, the members of $\Ki$, here $\K_i\cup\F(\J)$, should satisfy the condition $|\Sj\setminus\Si|=1$ for $j\in\K_i\cup\F(\J)$. The set $\M$ also is defined based on this property for every set $\J\subseteq\K_i\cup\F(\J)$. Now, if there exists an element $j\in\K_i\cup\F(\J)$ such that $|\Sj\setminus\Si|>1$, Theorem~\ref{thm:decomposition} is no longer valid. Hence, 

\begin{equation}\label{eq:wt_gr_Ci_gf_geq2}
w\big(\bgi \oplus \bigoplus_{j\in\J\cup\F(\J)}\bgj \oplus \bigoplus_{m\in\M(\J\cup\F(\J))}\bgm \big) >  \wm.
\end{equation} 

\begin{example}
In the coset $\C_{26}$ and $\C_{28}$ of the polar code (64,32,8) discussed in Example \ref{ex:Admin_PAC_pi_1_2}, there exists some $\F(\J)\subset\{32,33,34,35,36,37\}\cup\{40\}\cup\{48\}$ where $\bin(26)=(011010)_2$, and $\bin(28)=(011100)_2$. Observe that $|\S_{f}\setminus\S_{26}|>1$ for $f\in\{33,35,36,37\}$ and $|\S_{f}\setminus\S_{28}|>1$ for $f\in\{33,34,35,37\}$. Hence, as can be seen, $A_{d_{\min},i}$ for $i=\{26,28\}$ reduced to zero after precoding. This reduction occurs as a result of the inevitable combination of at least one $f$ with the aforementioned condition with $\J_1\subseteq\{26,28\}$ along with or without $\J_2\subseteq\Ki\setminus\J_1$. For this specific $\mathbf{p}$, there is always such an $f$ however, with shorter $\mathbf{p}$, i.e., smaller $m$, this may not be the case.
\end{example}

The opposite of \eqref{eq:wt_gr_Ci_gf_geq} is expected to be true if $|\Sf\setminus\Si|=1$ for every $f\in\F(\J)$, then we have

\begin{equation}\label{eq:wt_eq_Ci_gf}
w\big(\bgi \oplus \bigoplus_{j\in\J}\bgj \oplus \bigoplus_{f\in\F(\J)}\bgf \oplus \bigoplus_{m\in\M(\J\cup\F(\J))}\bgm  \big) =  \wm.
\end{equation} 
Fig. \ref{fig:venn_diag_pac} shows the Venn diagram associated with various sets defined so far in relation to the formation of minimum weight codewords after precoding. 
\begin{figure}[ht] 
    \centering
    \includegraphics[width=0.9\columnwidth]{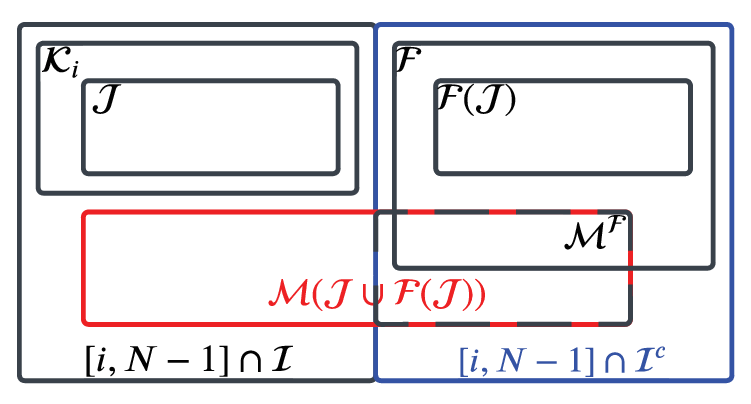}
    \caption{Venn diagram of the sets defined for indices in $[i,N-1]$ including the frozen indices as a result of precoding. Here, $\F$ is $\F\triangleq\{f\in\I^c\cap[i,N-1]:u_f=1\}$. } 
    \label{fig:venn_diag_pac}
\end{figure}


\begin{remark}
Observe that although every $f\in\F(\J)$ satisfies condition $|\S_f\setminus\Si|=1$ for inclusion in the set $\Ki$, it is linearly dependent on some $\J\subseteq\Ki$, so any $f\in\F(\J)$ satisfying the condition will not increase the number of row combinations that give codewords with weight $\wm$.
\end{remark}

Note that although the weight in \eqref{eq:wt_eq_Ci_gf} remains the same, the generated codewords are not the same as the combination without involving $\bigoplus_{f\in\F(\J)}\bgf$ of which we see in polar codes.

\begin{example}
In the coset $\C_{41}$, $\C_{42}$ and $\C_{44}$ of the polar code (64,32,8) discussed in Example \ref{ex:Admin_PAC_pi_1_2}, there exists $\F(\J)=\{48\}$ where $\bin(41)=(101001)_2$, $\bin(42)=(101010)_2$, $\bin(44)=(101100)_2$, and $\bin(48)=(110000)_2$. Observe that $|\S_{48}\setminus\S_{41}|=|\S_{48}\setminus\S_{42}|=|\S_{48}\setminus\S_{44}|=1$. Hence, as can be seen, $A_{d_{\min},i}$ for $i=\{41,42,44\}$ remains unchanged after precoding. This is the case for $\C_{38}$ where $\F(\J)=\{40,48\}$ the property $|\Sf\setminus\Si|=1$ follows.
\end{example}

\begin{remark}
Observe that if $\F(\J)=\I^c\cap[i,N-1]=\varnothing$ for the coset $\Ci$, then precoding has no impact on $A_{i,\wm}$ as there is no $f\in\F(\J)$ to follow \eqref{eq:wt_gr_Ci_gf_geq}. Therefore, it will be the same as \eqref{eq:wt_gi_gk1_u}.
\end{remark} 

\begin{example}
In the coset $\C_{49}$, $\C_{50}$, $\C_{52}$, and $\C_{56}$ of the polar code (64,32,8) discussed in Example \ref{ex:Admin_PAC_pi_1_2}, there exists no set $\F(\J)$. Therefore, as can be seen, $A_{d_{\min},i}$ for $i=\{49,50,52,56\}$ remains unchanged after precoding.
\end{example}

\section{Reliability vs Error Coefficient}\label{sec:Pe_vs_Admin} 

As discussed in Section \ref{sec:Ad_improv}, we freeze the bit-channel(s) with the highest contribution(s) into $\Adm$. These bit-channels have relatively higher reliability compared to those not in the set $\I$. Then, to keep the code rate constant, we have to unfreeze the bit-channels with the lowest contribution into $\Adm$. Obviously, these bit-channels have  relatively lower reliability. Recall that we denote the index set of the non-frozen bits by $\I'$. 
For the decision on the entire codeword to be correct,  all individual decisions must be correct. 
If we decode such a polar code formed by set $\I$ with successive cancellation (SC) decoder, the block error event $E$ is a union over $\I$ of the event that the first bit error occurs, denoted by $E_i \triangleq \{\hat{u}_i \neq u_i \mid \hat{u}_1^{i-1}=u_1^{i-1}\}$ where $E=\bigcup_{i\in\I} E_i$. Let $E^c$ denote the event that the block is decoded correctly, that is $\hat{u}_1^N=u_1^N$, then the probability of block error is obtained by \cite{mori}
\begin{equation}
    P_{SC}(\I) = P(E)=1-P(E^c)=1-\prod_{i \in \I}(1-P(E_i)),
\end{equation}
where $P(E_i)$ is the probability of error at bit-channel $i$ at a particular noise power or SNR assuming that bits 0 to $i-1$ are decoded successfully. Note that $P(E_i)=0$ for any $i\in\I^c$. As a result of modifying the set $\I$ by swapping high-reliability bit-channels with low-reliability ones, the error coefficient improves as we saw in the previous sections; however, we will have 
\begin{equation}
    P_{SC}(\I) \leq P_{near-ML}.
\end{equation}

If we use a near ML decoder, the block error rate depends more on $d_{\min}$ and $\Adm$ for linear codes as discussed in Appendix~\ref{ssec:BLER_Admin}. However, in the context of polar codes, if we continue to improve the error coefficient, $\Adm$, of the code, assuming that we can do it by the aforementioned process multiple times, we will be weakening the block code in terms of reliability. There is a turning point where further improvement of the error coefficient not only does not improve the block error rate further but it results in the error correcting performance degradation. That is, the gain due to the improvement of the error coefficient cannot overcome the degradation due to the loss of block reliability, $1-P_{SC}(\I) \geq 1-P_{SC}(\I')$. Fig. \ref{fig:BLER_Admin} illustrates the block error rate (BLER) versus the error coefficient. The turning point at $E_b/N_0=3.5$ is $\pi=5$ bit-pairs, while for $E_b/N_0=2.5$, it is $\pi=3$ bit-pairs. This shows the importance of the error coefficient at relatively high SNR regimes. Unfortunately, this turning point cannot be found analytically. As a general rule, we can agree that as long as the error coefficient increases $\pi_{\max}$, the resulting gain can overcome the loss due to the block reliability. Note that if we design a code solely based on error coefficient, as union bound implies, at very high SNR regimes where the noise is small, the error coefficient plays the dominant role, and the power gain appears; although, at low and medium SNR regimes, the performance may not be competitive. In this work, we aim to target the medium and low SNR regimes; hence, we consider both reliability and error coefficient in code design by limiting $\pi_{\max}$.

\begin{figure}[ht] 
    \centering
    \includegraphics[width=1\columnwidth]{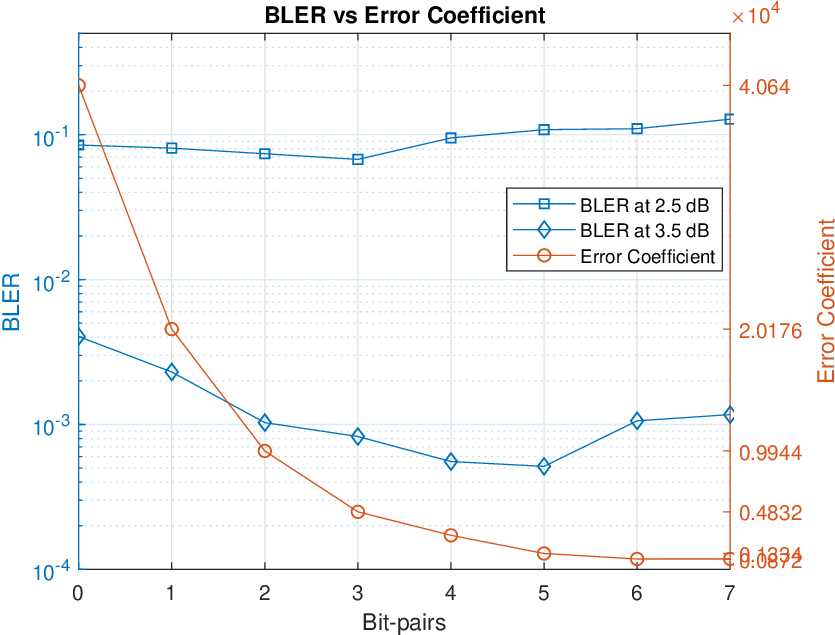}
    \caption{BLER at two $E_b/N_0$'s versus error coefficient $\Adm$ of PAC code of (512,384) under SC list decoding with L=16. The bit-pairs are the number of bit indices added to/removed from $\I$, denoted by $\pi_{\max}$ in the proposed procedure.} 
    \vspace{-5pt}
    \label{fig:BLER_Admin}
\end{figure}


\section{Numerical Results} \label{sec:results}
In this Section, we assess the amount of reduction in $\Adm$ based on the proposition and the method we proposed, and we observe the power gain resulting from the error correction performance of the improved codes. For this purpose, we choose three block-lengths ($N=64,256,512$), and three code rates ($R=1/4,1/2,3/4$) totaling nine different codes. We will construct new codes based on these nine codes and present the error coefficients and block error rates for the new codes with and without convolutional precoding. Therefore, we will compare 4x9=36 codes in total. 

Table \ref{tbl:Admin_for_P_PAC_plus} illustrates the error coefficient, $\Adm$, of the 36 aforementioned codes. Note that P+ and PAC+ represent the codes resulting from the rate profile $\I'$ corresponding to polar codes and PAC codes with the rate profile $\I$. As can be seen, the modified polar codes, P+, have a smaller error coefficient than PAC codes for all codes except for (256,64). This code and code (256,128) have special distributions of indices in $\I$ in the range $[0,N-1]$. In code (256,128), we have only two indices in $\I$ where $\w(g_i)=\wm$. Therefore, the new code has a larger $d_{\min}$.

To optimize the performance for BLER $10^{-2}-10^{-3}$, the density evolution method with Gaussian approximation (DEGA)  \cite{trifonov} is used to construct the base codes, that is, to obtain the set $\I$. Precoding in all PAC codes is performed with polynomial coefficients $\mathbf{p}=[1,0,1,1,0,1,1]$. 

\begin{table}[ht]
\addtolength{\tabcolsep}{-0.6pt} 
\centering
\caption{The minimum Hamming distance and the associated error coefficient, $\Adm$, of polar and PAC codes before and after modification. P+ and PAC+ represent the modified polar and PAC codes with rate profile $\I'$ while polar code (P) and PAC codes use the rate profile $\I$.}
\begin{tabular}{ |@{} c @{}|@{} c @{}|}   
\hline

\begin{tabular}{ c | c  }
&\\
N & Code\\
&\\
\hline
  \multirow{4}{*}{64} 
   & P \\ 
   & P+ \\
   & PAC \\ 
   & PAC+ \\ 
\hline
  \multirow{4}{*}{256} 
   & P \\ 
   & P+ \\
   & PAC \\ 
   & PAC+ \\ 
\hline
  \multirow{4}{*}{512} 
   & P \\ 
   & P+ \\
   & PAC \\ 
   & PAC+ \\ 
\end{tabular}
&
\begin{tabular}{ c | c | c | c | c | c }
\multicolumn{6}{c}{Code Rate (R)}\\
 \hline
 \multicolumn{2}{c|}{$1/4$} & \multicolumn{2}{c|}{$1/2$} & \multicolumn{2}{c}{$3/4$}\\
 \hline
 $d_{\min}$ & $\Adm$ & $d_{\min}$ & $\Adm$ & $d_{\min}$ & $\Adm$\\
\hline
   16 & 364 & 8 & 664 & 4 & 432\\ 
   16 & 196 & 8 & 408 & 4 & 304\\ 
   16 & 236 & 8 & 472 & 4 & 320\\ 
   16 & 24 & 8 & 112 & 4 & 108\\ 
\hline
   32 & 13336 & 8 & 96 & 8 & 82016\\
   32 & 5912 & 16 & 77104 & 8 & 28448\\ 
    32 & 2200 & 8 & 96 & 8 & 53456\\ 
    32 & 568 & 16 & 13904 & 8 & 6704\\ 
\hline
   32 & 13616 & 16 & 61024 & 8 & 49344\\ 
   32 & 4048 & 16 & 18720 & 8 & 13504\\ 
   32 & 6496 & 16 & 36256 & 8 & 40640\\ 
   32 & 748 & 16 & 4412 & 8 & 4832\\ 
\end{tabular}\\
\hline
\end{tabular}
\label{tbl:Admin_for_P_PAC_plus}
\end{table}

To evaluate the block error rate (BLER) of the codes, we transmit the modulated codewords using binary phase-shift keying (BPSK) scheme through additive white Gaussian noise (AWGN) channel. 
The lower bound for maximum likelihood (ML) decoding is obtained by counting the list decoding failures ($L=32$) when the Euclidean distance between the received signals $\mathbf{y}$ and the modulated transmitted signals $\mathbf{c}$ is greater than the Euclidean distance between the received signals $\mathbf{y}$ and the estimation of transmitted signals through decoding $\hat{\mathbf{c}}$, that is, when $||\mathbf{y}-\mathbf{c}||>||\mathbf{y}-\hat{\mathbf{c}}||$. Clearly, in this situation, ML decoding cannot be successful. 

As Fig. \ref{fig:FER64_1} to Fig. \ref{fig:FER512_3} show, the power gain of polar+ codes over polar codes and, in particular, PAC + codes over PAC codes is significant. Observe that polar+ codes are outperforming PAC codes for most of the codes evaluated in this section. When comparing PAC+ codes with CRC polar codes, we still observe that PAC+ codes outperform CRC-polar codes at the practical BLER range of $10^{-2}-10^{-3}$. For comparison purposes, we compare the BLERs with the lower bound for the BLER of finite block length codes called dispersion bound. The dispersion bounds with normal approximation \cite{polyanskiy} are obtained by the Laplace transform-based integration proposed in \cite{erseghe}.

Let us analyze the numerical results for each code rate separately. 
Fig. \ref{fig:FER64_1},\ref{fig:FER256_1},\ref{fig:FER512_1} illustrate block error rates for codes with block lengths $N=64,$ 256, and 512 and rate $R=1/4$. At this code rate, we observe a significant power gain of 0.2 to 0.5 dB for PAC+ over CRC-polar codes and PAC codes depending on the code length. The gain at short codes is larger. For  $N=64$, the BLER reaches the dispersion bound.  Note that as $N$ increases, appending a relatively short CRC does not cost in terms of code rate as much as it costs at short block lengths. Therefore, the gain relative to the CRC-polar reduces as $E_b/N_0$ increases. 

We can observe a similar performance gain for code rate $R=3/4$ in Fig. \ref{fig:FER64_3}, \ref{fig:FER256_3}, and \ref{fig:FER512_3}.

\begin{figure}[ht] 
    \centering
    \includegraphics[width=1\columnwidth]{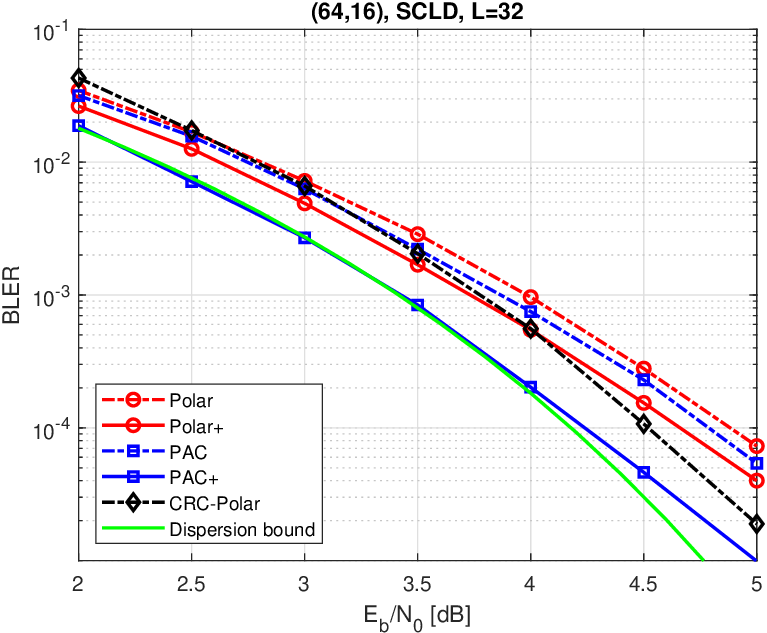}
    \caption{BLER Comparison of various (64,16)-codes. Parameters: design-SNR=4 dB, CRC: 0xA5, $\pi_{\max}=3$,  $\I'=\{30,29,27\}\cup\I\setminus\{60,58,57\}$.} 
    \vspace{-5pt}
    \label{fig:FER64_1}
\end{figure}
\begin{figure}[ht] 
    \centering
    \includegraphics[width=1\columnwidth]{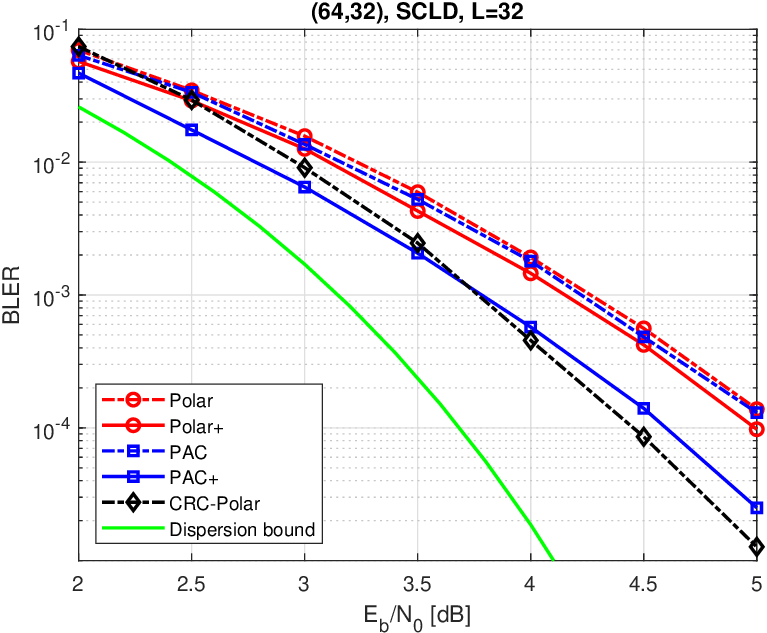}
    \caption{BLER Comparison of various (64,32)-codes. Parameters: design-SNR=4 dB, CRC: 0xA5, $\pi_{\max}=2$,  $\I'=\{25,22\}\cup\I\setminus\{56,52\}$.} 
    \vspace{-5pt}
    \label{fig:FER64_2}
\end{figure}
\begin{figure}[ht] 
    \centering
    \includegraphics[width=1\columnwidth]{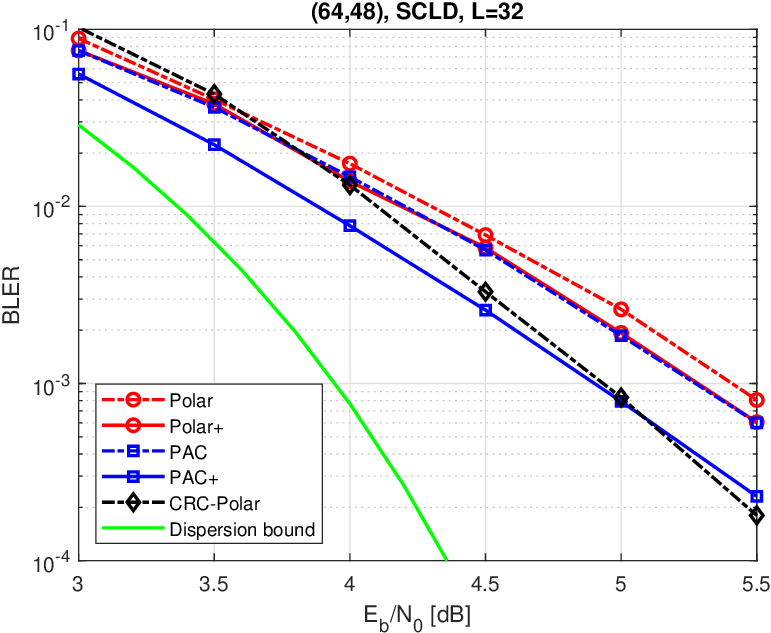}
    \caption{BLER Comparison of various (64,48)-codes. Parameters: design-SNR=2 dB, CRC: 0xA5, $\pi_{\max}=2$,  $\I'=\{22,18\}\cup\I\setminus\{48,40\}$.} 
    \vspace{-5pt}
    \label{fig:FER64_3}
\end{figure}

\begin{figure}[ht] 
    \centering
    \includegraphics[width=1\columnwidth]{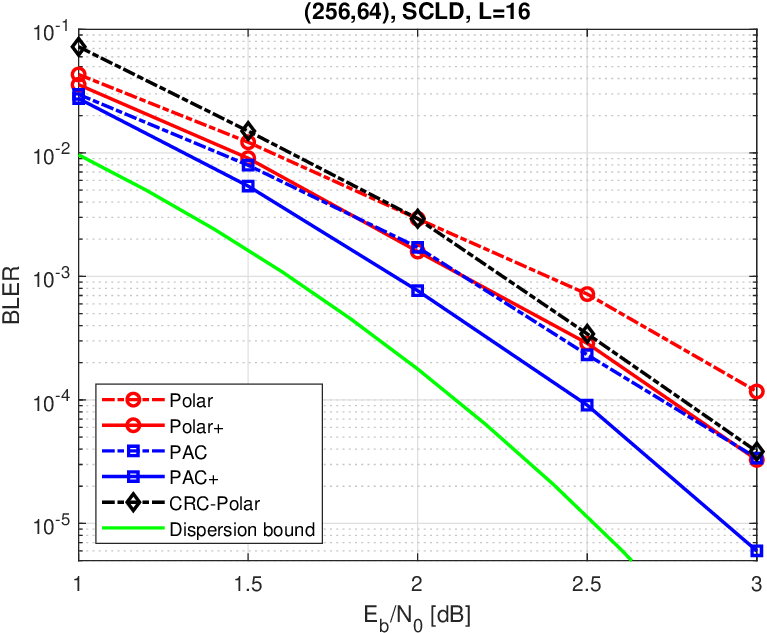}
    \caption{BLER Comparison of various (256,64)-codes. Parameters: design-SNR=4 dB, CRC: 0xA5, $\pi_{\max}=2$,  $\I'=\{118,63\}\cup\I\setminus\{248,244\}$.}
    \vspace{-5pt}
    \label{fig:FER256_1}
\end{figure}
\begin{figure}[ht] 
    \centering
    \includegraphics[width=1\columnwidth]{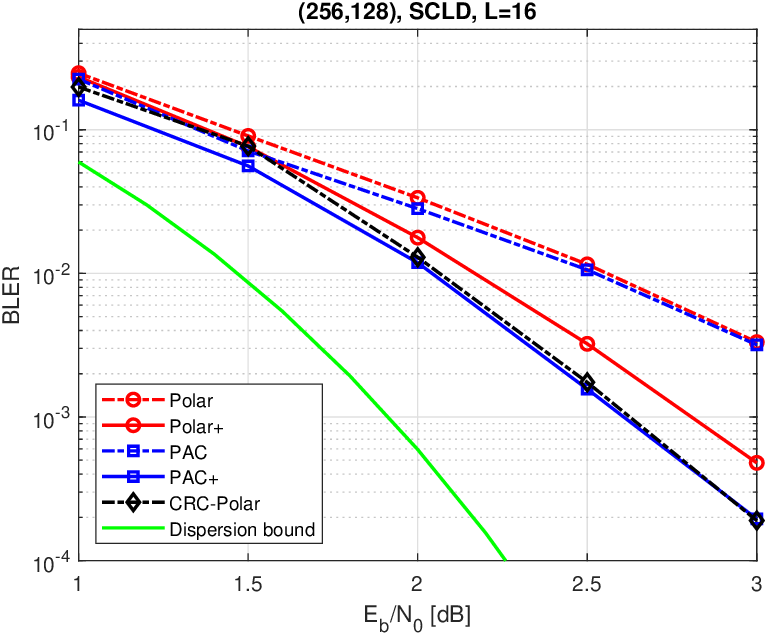}
    \caption{BLER Comparison of various (256,128)-codes. Parameters: design-SNR=2 dB, CRC: 0xA5, $\pi_{\max}=2$,  $\I'=\{149,147\}\cup\I\setminus\{224,208\}$.}
    \vspace{-5pt}
    \label{fig:FER256_2}
\end{figure}
\begin{figure}[ht] 
    \centering
    \includegraphics[width=1\columnwidth]{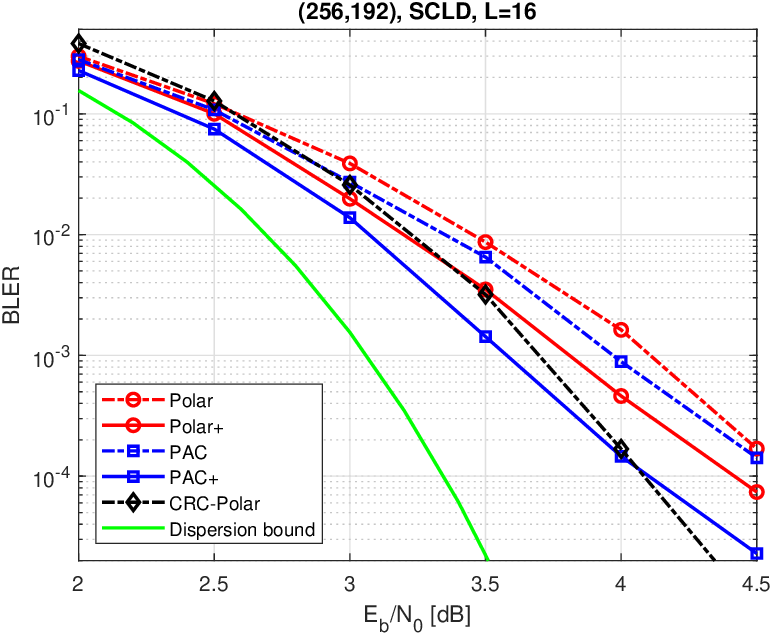}
    \caption{BLER Comparison of various (256,192)-codes. Parameters: design-SNR=4 dB, CRC: 0xA5, $\pi_{\max}=3$,  $\I'=\{74,23,15\}\cup\I\setminus\{224,208,200\}$.}
    \vspace{-5pt}
    \label{fig:FER256_3}
\end{figure}

\begin{figure}[ht] 
    \centering
    \includegraphics[width=1\columnwidth]{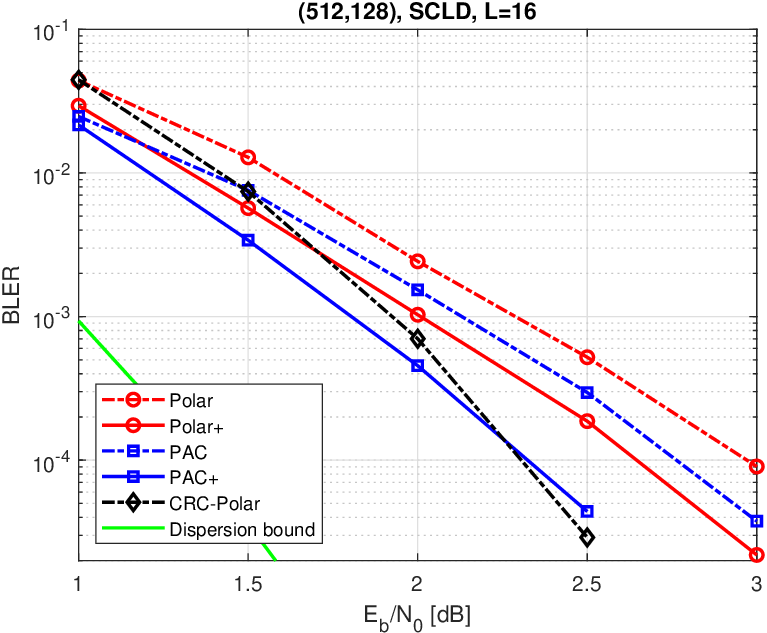}
    \caption{BLER Comparison of various (512,128)-codes. Parameters: design-SNR=2 dB, CRC: 0xC06, $\pi_{\max}=3$,  $\I'=\{335,315,311\}\cup\I\setminus\{496,488,484\}$.}
    \vspace{-5pt}
    \label{fig:FER512_1}
\end{figure}
\begin{figure}[ht] 
    \centering
    \includegraphics[width=1\columnwidth]{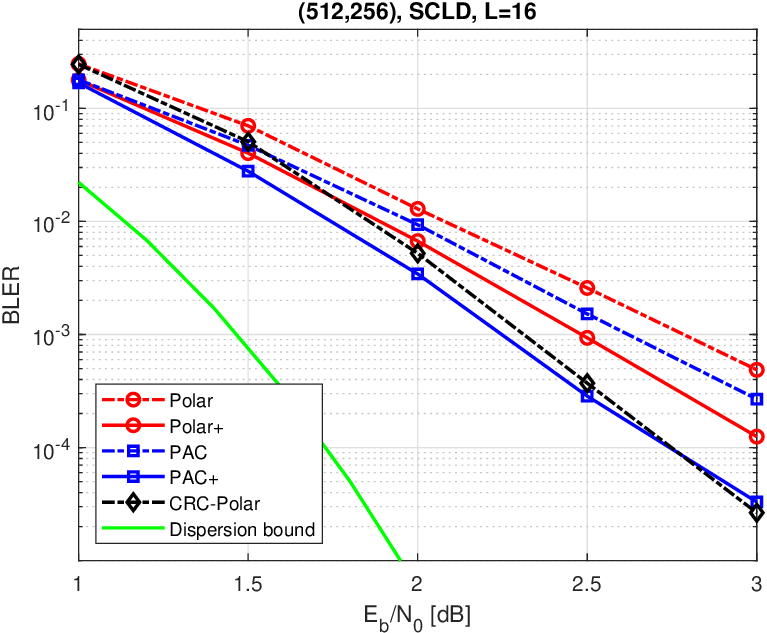}
    \caption{BLER Comparison of various (512,256)-codes. Parameters: design-SNR=2 dB, CRC: 0xC06, $\pi_{\max}=3$,  $\I'=\{283,279,271\}\cup\I\setminus\{480,464,456\}$.}
    \vspace{-5pt}
    \label{fig:FER512_2}
\end{figure}
\begin{figure}[ht] 
    \centering
    \includegraphics[width=1\columnwidth]{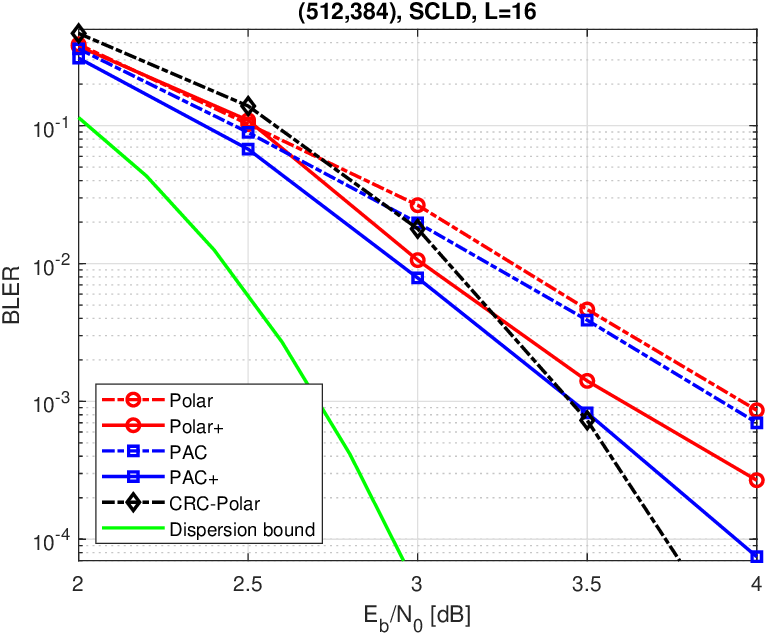}
    \caption{BLER Comparison of various (512,384)-codes. Parameters: design-SNR=4 dB, CRC: 0xC06, $\pi_{\max}=3$,  $\I'=\{135,83,78\}\cup\I\setminus\{448,416,400\}$.}
    \vspace{-5pt}
    \label{fig:FER512_3}
\end{figure}

For the code rate $R=1/2$ in Fig.  \ref{fig:FER64_2}, \ref{fig:FER256_2}, and \ref{fig:FER512_2}, it is observed that the power gain of PAC+ over CRC-polar at the practical BLER range of $10^{-2}-10^{-3}$ is smaller than at other rates. A careful observer would find a similar power gain of 0.4-0.6 dB for PAC+ codes over PAC codes as other code rates. Hence, CRC-polar might be performing better at this rate than at other rates. To explain this observation, let us divide the block errors that occur in the list decoding into two types: 1) Elimination error: when the correct sequence is eliminated before decoding the last bit, 2) Miss error: when the correct sequence remains in the list until decoding the last bit, but it is not the sequence with the highest likelihood. Clearly, the CRC mechanism as a genie that finds the correct sequence can reduce the miss error rate, but not the elimination error. On the other hand, we know that all information bits are mapped to high-reliability bit-channels at a low code rate while we will have a significant number of information bits transmitted through low-reliability bit-channels. Hence, at low code rates, both miss error rate and elimination error rate are relatively low because of exploiting high-reliability bit-channels hence, CRC will not have a significant impact at our target range. However, the elimination error rate is relatively high compared with the miss error rate at high code rates because the correct sequences may not survive due to the overall low reliability of exploited bit-channels. In this case, CRC cannot also be helpful. In the medium code rates, the CRC mechanism seems to be more effective, as the overall reliability of non-frozen bit-channel is at a moderate level. 

Note that the minimum distance of the new construction for the code (256,125), as Table \ref{tbl:Admin_for_P_PAC_plus} indicates, increased from 8 to 16. This is the main reason for the higher power gain of polar+ and PAC+ codes  as shown in Fig. \ref{fig:FER256_2} compared to other codes.

\section{Conclusion}
In this paper, we discover the combinatorial properties of polar transform $\mathbf{G}_N$ based on the row and column indices and then characterize explicitly all the row combinations involved in the formation of the minimum-weight codewords. In other words, we explicitly provide the decomposition of minimum-weight codewords into the rows of polar transform. The decomposed rows are classified into core and balancing rows. 
First, this characterization gives an elementary enumeration of the minimum-weight codewords based on core rows. Unlike other methods, it is based on explicitly counting all the row combinations resulting in minimum-weight codewords. The core application of this characterization is to explain how the error coefficient is reduced after convolutional precoding. Furthermore, we propose an exact and approximate method to significantly reduce the error coefficient of polar and PAC codes. Evaluation of the BLER of various codes shows that the designed codes can outperform CRC-polar codes and PAC codes in the practical BLER regime of $10^{-2}-10^{-3}$. The decomposition of minimum-weight codewords gives a significant insight into more analytical and practical works related to polar code modifications. Finally, in this work, we only considered the core rows in the code construction, taking the balancing rows into consideration seems to be a promising future direction. 



\appendix
\subsection{MATLAB Script for Enumeration}
\label{app:matlab}

The function \emph{err\textunderscore coeff} in MATLAB\textsuperscript{\texttrademark} language in the following listing can be used to obtain $d_{\min}$ and $\Adm$ given the inputs; the index set of the non-frozen bits, $\I$, and the code length $N$. Note that the index of non-frozen bits starts from $0$ and consequently the largest index is $N-1$. In line 2, the function finds the minimum $|\supp(i)|$ or sum of 1's in $\bin(i)$ for $i\in\I$. Then, we can get the minimum Hamming distance of the code, that is 
$d_{\min}=2^{|\supp(i)|}$ for every $i$ that gives the minimum $|\supp(i)|$. We collect all such $i$ in set $\B$ in line 4. Then, in the outer loop, we find $|\Ki|$ for every $i\in\B$. According to \eqref{eq:Ki_size}, the size of $|Ki|$ is the sum of $|\Ti|=\log_2(N)-|\supp(i)|$, in line 6, and $\sum_{k\in \Si}\sum_{\ell>k} \bar{i}_\ell$, in lines 7-10 or inner loop. Recall that $\Adm=\sum_{i\in\B} 2^{|\Ki|}$. This is being accumulated in line 11 in the outer loop.

\begin{lstlisting}
function [dmin, A_dmin] = err_coeff(I,N)
    d = min(sum(dec2bin(I)-'0',2));
    dmin = 2^d; n = log2(N); A_dmin = 0;
    B = find(sum(dec2bin(I)-'0',2)==d);
    for i = B'
        Ki_size = n - d;
        for x = find(dec2bin(I(i),n)-'0'==1)
            ii = dec2bin(bitxor(N-1,I(i)),n)-'0';
            Ki_size = Ki_size + sum(ii(1:x-1));
        end
        A_dmin = A_dmin + 2^Ki_size;
    end
end
\end{lstlisting}

Moreover, you may find a Python script for Algorithm \ref{alg:code_design} on https://github.com/mohammad-rowshan/Error-Coefficient-reduced-Polar-PAC-Codes. 

\subsection{Block Error Probability and the Number of minimum-weight Codewords of Polar Codes}\label{ssec:BLER_Admin}

The Hamming distance between two non-identical codewords $\mathbf{v}, \mathbf{w}$ in $\mathcal{C}$ is defined as $d(\mathbf{c},\mathbf{c}')=\w(\mathbf{c}+\mathbf{c}')$. It is known that the linear block codes can correct up to $\lfloor (\dist(\C)-1)/2\rfloor$ errors, where $\dist(\C)$ is the minimum Hamming distance of code $\dist(\C)$.  
In the linear block codes, $\mathbf{c}+\mathbf{c}'$ in $\ft$ gives another codeword in $\cC$, let us call it $\mathbf{c}''$, then
\begin{equation}\label{eq:d_min_w_min}
    \dist(\C)=\min\{\w(\mathbf{c}''),\mathbf{c}''\in\cC,\mathbf{c}''\neq \mathbf{0}\}=\wm.
\end{equation}
The minimum Hamming weight, in this paper we use its short form as minimum-weight, defines the error correction capability of a code.  
Besides minimum Hamming weight, the number of minimum-weight codewords is also important. 

It was shown in \cite[Sect. 10.1]{lin_costello} that for a binary input additive white Gaussian noise (BI-AWGN) channel at high $E_b/N_0$, the upper bound for block error probability of linear codes under soft-decision maximum likelihood (ML) decoding can be approximated by 
\begin{equation*}\label{eq:union_bound}
    P_e^{ML} \approx \Awm Q(\sqrt{2\dist(\C)\cdot R \cdot E_b/N_0}),
\end{equation*} 
where $\Awm$ denotes the number of minimum-weight codewords, a.k.a error coefficient, $Q(\cdot)$ is the tail probability of the normal distribution $\mathcal{N}(0,1)$, and $R$ is the code rate. As $\Awm$ is directly proportional with the upper bound for the error correction performance of a code, it can be used as a measure to anticipate the direction of change in the block error rate when $\Awm$ changes or in general it is a measure for relative performance of the codes under the same decoding. 

\subsection{Polar Transform and its Properties}
\label{app:polar_transform}

\hg{In this appendix, for self-completeness, we develop a few useful properties regarding the polar transform, some of which were known under equivalent formulations in the literature.} 

The polar transform matrix $\bGN$ is defined as the $n$-th Kronecker power of
\begin{equation}
\mathbf{G_2} = 
\begin{bmatrix}
1 & 0 \\
1 & 1
\end{bmatrix}=
\begin{bmatrix}
\mathbf{g}_0 \\
\mathbf{g}_1
\end{bmatrix},
\end{equation}
where $\mathbf{g}_{0}$
and $\mathbf{g}_{1}$ are the rows of $\mathbf{G_2}$. Hence,
\begin{equation}
\bGN=\mathbf{G}_2^{\otimes n}=
\begin{bmatrix}
1 & 0 \\
1 & 1
\end{bmatrix}^{\otimes n}.
\end{equation}
 


Lemma~\ref{lem:main} provides a criterion to determine the value of an entry in $\bGN$ based on the supports of the binary expansions of its row and column indices. An equivalent statement of the lemma was established earlier in~\cite{bardet} under the language of polynomials and their evaluations. 

\begin{lemma}\label{lem:main}
Let $g_{i,c}$ be the $(i,c)$-th entry of $\bGN$ for indices $i, c$ in $[0,N-1]$. Then the following holds. 
\begin{equation}
\label{eq:g_i_c}
g_{i,c} = \begin{dcases*}
        1, & \text{ if } $\mathcal{S}_c\subseteq \mathcal{S}_i$,\\
        0, & \text{ if } \text{otherwise}.
\end{dcases*}
\end{equation}
\end{lemma}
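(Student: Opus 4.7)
The plan is to prove Lemma~\ref{lem:main} by induction on $n$, exploiting the recursive Kronecker product structure $\bGN = \mathbf{G}_2 \otimes \mathbf{G}_{N/2}$.

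For the base case $n=1$, I would simply verify the four entries of $\mathbf{G}_2$ directly: the index set of $0$ is $\emptyset$ and of $1$ is $\{0\}$, and one checks that $(\mathbf{G}_2)_{a,b}=1$ precisely when $\mathcal{S}_b\subseteq\mathcal{S}_a$, i.e.\ in positions $(0,0)$, $(1,0)$, and $(1,1)$, matching the stated pattern.

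For the inductive step, assume the claim holds for length $N/2=2^{n-1}$. The Kronecker product identity
\[
(\mathbf{G}_2\otimes \mathbf{G}_{N/2})_{(a,i'),(b,c')} = (\mathbf{G}_2)_{a,b}\cdot (\mathbf{G}_{N/2})_{i',c'}
\]
lets me split any pair of indices $i,c\in[0,N-1]$ according to their most significant bit. Writing $i = i_{n-1}\cdot 2^{n-1} + i'$ and $c = c_{n-1}\cdot 2^{n-1} + c'$ with $i', c' \in [0,2^{n-1}-1]$, the entry $g_{i,c}$ factors as $(\mathbf{G}_2)_{i_{n-1},c_{n-1}}\cdot(\mathbf{G}_{N/2})_{i',c'}$. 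The key observation is that $\mathcal{S}_i = \mathcal{S}_{i'}\cup\{n-1\}$ if $i_{n-1}=1$ and $\mathcal{S}_i=\mathcal{S}_{i'}$ otherwise, and likewise for $c$. Hence $\mathcal{S}_c\subseteq\mathcal{S}_i$ is equivalent to the conjunction of two statements: $\mathcal{S}_{c'}\subseteq\mathcal{S}_{i'}$ (controlling the lower $n-1$ bits), and $c_{n-1}=1$ implying $i_{n-1}=1$ (controlling the top bit). By the inductive hypothesis, the first condition is equivalent to $(\mathbf{G}_{N/2})_{i',c'}=1$; by the base case, the second is equivalent to $(\mathbf{G}_2)_{i_{n-1},c_{n-1}}=1$. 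Both factors being $1$ is in turn equivalent, over $\mathbb{F}_2$, to their product $g_{i,c}$ being $1$.

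The one subtle point is the bookkeeping between the index splitting convention and the Kronecker product convention: one must make sure that the most significant bit of the row/column index of $\bGN$ really corresponds to the $\mathbf{G}_2$ factor on the left of the tensor product (versus the right), since this determines which coordinate of $\mathcal{S}_i$ we strip off at each recursive step. Once the convention is fixed consistently (either always peeling off the most significant bit, as above, or always the least significant bit, which would correspond to $\mathbf{G}_{N/2}\otimes \mathbf{G}_2$), the two equivalences above compose cleanly and the induction closes. This is the main obstacle: it is not a mathematical difficulty but a notational one that is easy to get backwards.
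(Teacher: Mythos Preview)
Your proposal is correct and takes essentially the same approach as the paper: induction on $n$ via the recursive block structure of $\bGN$. The paper's proof splits the inductive step into two cases (upper-right zero block versus the three $\mathbf{G}_{2^n}$ blocks) and uses a separate claim to track how $\Si$ and $\Sc$ change under $\bmod\ 2^n$, whereas you use the Kronecker product factorization $(\mathbf{G}_2)_{i_{n-1},c_{n-1}}\cdot(\mathbf{G}_{N/2})_{i',c'}$ to handle all positions uniformly; this is a cleaner packaging of the same argument.
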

\begin{proof}
We show this by induction on $n$.

\textbf{Base case}. When $n=1$ and $N=2^1=2$, one can observe that (\ref{eq:g_i_c})  holds trivially. Note that supp$(0)=\varnothing$ and $\varnothing\subseteq\varnothing$ for entry (0,0), hence $g_{0,0}=1$.
\begin{equation}
\mathbf{G}_2=
\begin{pNiceArray}{cc}[first-row,first-col]
  & 0 & 1 \\
0 & 1 & 0 \\
1 & 1 & 1
\end{pNiceArray}.
\end{equation}

\textbf{Inductive step}. Suppose \eqref{eq:g_i_c} holds for $n$, we need to prove that it also holds for $n+1$. Let us use the notations $g_{i,c}^{n}$, $\mathcal{S}_i^{n}$, and $\mathcal{S}_c^{n}$ for $n$ and $g_{i,c}^{n+1}$, $\mathcal{S}_i^{n+1}$, and $\mathcal{S}_c^{n+1}$ for $n+1$. We have 

\begin{equation}\label{Pnplus1}
\mathbf{G}_{{2^{n+1}}}=
\begin{pNiceArray}{cc}
\mathbf{G}_{2^n} & \mathbf{0} \\
\mathbf{G}_{2^n} & \mathbf{G}_{2^n}
\end{pNiceArray}
\end{equation}

Now, we consider two cases:
\begin{itemize}
    \item \textbf{Case 1:} $0\leq i\leq2^n-1$ and $2^n\leq c\leq2^{n+1}-1$. 
    In this case we have $g_{i,c}=0$.
    Since $n\notin \mathcal{S}_i^{n+1}$ and $n\in \mathcal{S}_c^{n+1}$, we deduce that $\mathcal{S}^{n+1}_c \not\subseteq \mathcal{S}^{n+1}_i $. Thus, \eqref{eq:g_i_c} holds. 
    \item \textbf{Case 2:}  $2^n\leq i\leq2^{n+1}-1 $ or $0\leq c\leq2^{n}-1$.
    From (\ref{Pnplus1}) we have
    \begin{equation} \label{Old_Claim2}
    g_{i,c}^{n+1}=g_{(i\bmod 2^n),(c\bmod 2^n)}^n.
    \end{equation}
    
    \begin{claim} $\mathcal{S}_c^{n+1}\subseteq\mathcal{S}_i^{n+1} \iff \mathcal{S}^n_{c\bmod 2^n}\subseteq\mathcal{S}^n_{i\bmod 2^n}$.
    \end{claim}
    \begin{proof}
    Since
    \begin{equation}
    S_i^{n+1}=
    \begin{cases}
    \{n\}\cup\mathcal{S}_{i\bmod 2^n}^n, & \text{if } 2^n\leq i\leq2^{n+1}-1,\\
    \mathcal{S}_{i\bmod 2^n}^n=\mathcal{S}_{i}^n, & \text{if } 0\leq i\leq2^n-1,
    \end{cases}
    \end{equation}
    and
    \begin{equation}
    S_c^{n+1}=
    \begin{cases}
    \{n\}\cup\mathcal{S}_{c\bmod 2^n}^n, & \text{if } 2^n\leq c\leq2^{n+1}-1,\\
    \mathcal{S}_{c\bmod 2^n}^n=\mathcal{S}_{c}^n, & \text{if } 0\leq c\leq2^n-1.
    \end{cases}
    \end{equation}
    Under the assumption that $2^n\leq i\leq2^{n+1}-1 $ or $0\leq c\leq2^{n}-1$, if $n$ belongs to $S_c^{n+1}$ then it must also belong to $S_i^{n+1}$.
    From this, it is easy to see that Claim 1 holds.
    \end{proof}

    From Claim 1 and \eqref{Old_Claim2}, we can conclude that 
    \begin{equation}
    \begin{split}
    g_{i,c}^{n+1} &\underrel{\text{\eqref{Old_Claim2}}}{=} g_{(i\bmod 2^n),(c\bmod 2^n)}^n\\& \underrel{\text{Induction}}{=}
    \begin{dcases*}
        1 & \text{if }$\mathcal{S}_{c\bmod 2^n}^n\subseteq \mathcal{S}_{i\bmod 2^n}^n$,\\
        0 & \text{otherwise}\\
    \end{dcases*}
    \\&\underrel{\text{Claim 1}}{=} 
    \begin{dcases*}
        1, & \text{if }$\mathcal{S}_{c}^{n+1}\subseteq \mathcal{S}_{i}^{n+1}$,\\
        0, & \text{otherwise}\\
    \end{dcases*}
    \end{split}
    \end{equation}
\end{itemize}
Thus, the relation ($\ref{eq:g_i_c}$) holds for $n+1$ in Case 2 as well.\qedhere
\end{proof}

Lemma \ref{lem:main} is a fundamental tool that we rely on throughout this paper. 
Based on this lemma, we can easily determine the weight of a row $\bgi$ and the weight of a sum of two rows $\bgi+\bgj$ based on the supports of the binary representations of $i$ and $j$ as follows.

\begin{corollary}\label{cr:weight}
For $i$ and $j$ in $[0,2^n-1]$, we have
\[
\begin{split}
    \w(\bgi)&=2^{|\Si|},\\
    \w(\bgi+\bgj) &= 2^{|\Si|}+2^{|\Sj|} - 2\times 2^{|\Si\cap \Sj|}.
\end{split}
\]
\end{corollary}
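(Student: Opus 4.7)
The plan is to derive both weight formulas as essentially immediate consequences of Lemma~\ref{lem:main}, by counting column indices $c \in [0,2^n-1]$ at which the relevant row vectors take the value $1$.

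First I would tackle $\w(\bgi)$. By Lemma~\ref{lem:main}, the entry $g_{i,c}$ equals $1$ precisely when $\Sc \subseteq \Si$. Hence
\[
\w(\bgi) \;=\; |\{c \in [0,2^n-1] : \Sc \subseteq \Si\}|.
\]
Since the map $c \mapsto \Sc$ is a bijection between $[0,2^n-1]$ and the subsets of $[0,n-1]$, the right-hand side is exactly the number of subsets of $\Si$, namely $2^{|\Si|}$.

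For the second identity, I would observe that since we are working over $\ft$, the $c$-th coordinate of $\bgi + \bgj$ is $1$ iff exactly one of $g_{i,c}$, $g_{j,c}$ is $1$, which by Lemma~\ref{lem:main} means exactly one of $\Sc \subseteq \Si$ and $\Sc \subseteq \Sj$ holds. Let
\[
A = \{c : \Sc \subseteq \Si\}, \quad B = \{c : \Sc \subseteq \Sj\}.
\]
Then $\w(\bgi+\bgj) = |A \triangle B| = |A| + |B| - 2|A \cap B|$. From the first part, $|A| = 2^{|\Si|}$ and $|B| = 2^{|\Sj|}$. The intersection $A \cap B$ consists of those $c$ with $\Sc$ contained in $\Si \cap \Sj$, and a second application of the bijection $c \mapsto \Sc$ gives $|A \cap B| = 2^{|\Si \cap \Sj|}$. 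Combining these yields the stated formula.

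There is no real obstacle here; everything follows directly from Lemma~\ref{lem:main} plus an elementary set-theoretic identity. The only minor care needed is noting the factor $2$ in $2|A \cap B|$, which arises because coordinates where both rows are $1$ cancel in $\ft$ and must be subtracted from both $|A|$ and $|B|$.
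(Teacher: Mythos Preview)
Your proposal is correct and mirrors the paper's proof almost exactly: both identify $\supp(\bgi)$ with $\{c:\Sc\subseteq\Si\}$ via Lemma~\ref{lem:main}, count it as $2^{|\Si|}$, and then compute $\w(\bgi+\bgj)$ as $|A|+|B|-2|A\cap B|$ with $|A\cap B|=2^{|\Si\cap\Sj|}$. The only cosmetic difference is that you name the bijection $c\mapsto\Sc$ explicitly, whereas the paper leaves it implicit.
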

\begin{proof}

From Lemma~\ref{lem:main}, we have
\[
\begin{split}
\supp(\bgi) &= \{c \in [0,N-1] \colon g_{i,c} = 1\}\\
&= \{c \in [0,N-1] \colon \Sc \subseteq \Si\},
\end{split}
\]
which implies that 
\[
\w(\bgi) = |\supp(\bgi)| = |\{c \in [0,N-1] \colon \Sc \subseteq \Si\}| = 2^{|\Si|}.
\]

We also have
\[
\supp(\bgi)\cap \supp(\bgj) = \{c \in [0,N-1] \colon \Sc \subseteq \Si \cap \Sj\},
\]
which implies that 
\[
|\supp(\bgi)\cap \supp(\bgj)| = 2^{|\Si\cap \Sj|}.
\]
Therefore,
\[
\begin{split}
\w(\bgi+\bgj) &= \w(\bgi) + \w(\bgj) - 2\times |\supp(\bgi)\cap \supp(\bgj)|\\
&= 2^{|\Si|} + 2^{|\Sj|} - 2\times 2^{|\Si\cap \Sj|},
\end{split}
\]
which proves the second equality.
\end{proof}

Recall that throughout this work we use the index subscript $i \in [0,2^n-1]$ and its $\Si = \supp(\bin(i))\subseteq [0,n-1]$ interchangeably. For instance, when $n = 5$, instead of $c_{10}$, we may write $c_{\{1,3\}}$ as $\S_{10}=\supp(01010) = \{1,3\}$.  

\begin{lemma} \label{lma:basic1}
For $i\in[0,2^n-1]$ and $\mathcal{J}\subseteq [i+1,2^n-1]$, let 
\begin{equation}
    \mathbf{c}=\bgi\oplus\bigoplus_{j\in\mathcal{J}}\bgj.
\end{equation}
Then for every subset $\S\subseteq\Si$ there exists at least one subset  $\T\subseteq \Ti \triangleq [0,n-1] \setminus \Si$  such that 
\begin{equation}
    c_{\S\cup\T}=1.
\end{equation}
\end{lemma}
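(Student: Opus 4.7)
The plan is to fix an arbitrary $\S \subseteq \Si$ and prove the stronger parity statement
\[
\sum_{\T \subseteq \Ti} c_{\S \cup \T} \equiv 1 \pmod 2,
\]
from which the existence of at least one $\T \subseteq \Ti$ with $c_{\S \cup \T} = 1$ follows immediately. This kind of ``sum the row over a subcube'' argument is natural here because the indices $\S \cup \T$ with $\T \subseteq \Ti$ form exactly the coordinates of the support of $\bgi$ that lie in a specific combinatorial slice, and Lemma~\ref{lem:main} gives us tight control over which rows contribute to such coordinates.

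First I would expand the left-hand side using $\mathbf{c} = \bgi \oplus \bigoplus_{j \in \mathcal{J}} \bgj$ and swap the order of summation:
\[
\sum_{\T \subseteq \Ti} c_{\S \cup \T} \;\equiv\; \sum_{\T \subseteq \Ti} g_{i,\,\S \cup \T} \;+\; \sum_{j \in \mathcal{J}} \sum_{\T \subseteq \Ti} g_{j,\,\S \cup \T} \pmod 2.
\]
Applying Lemma~\ref{lem:main}, $g_{k,\S\cup\T} = 1$ iff $\S \cup \T \subseteq \S_k$. For $k = i$: since $\S \subseteq \Si$ is already given, this forces $\T \subseteq \Si \cap \Ti = \emptyset$, so the $i$-contribution is exactly $1$ (from $\T = \emptyset$). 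For each $k = j$ with $j \in \mathcal{J}$: the number of valid $\T$ is $0$ if $\S \not\subseteq \Sj$, and equals $2^{|\Sj \cap \Ti|}$ if $\S \subseteq \Sj$.

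The crux, and the only step that requires thought, is showing that $2^{|\Sj \cap \Ti|}$ is even for every $j \in \mathcal{J}$, so that every term in the $\mathcal{J}$-sum vanishes modulo $2$. Suppose toward a contradiction that $|\Sj \cap \Ti| = 0$, i.e., $\Sj \subseteq \Si$. Since $j \neq i$ (as $j > i$), this inclusion is strict, and removing $1$'s from the binary representation of $i$ can only produce a strictly smaller integer, so $j < i$, contradicting $\mathcal{J} \subseteq [i+1, 2^n-1]$. Hence $|\Sj \cap \Ti| \geq 1$ for every $j \in \mathcal{J}$, the entire $\mathcal{J}$-sum is even, and the total reduces to $1 \pmod 2$, completing the argument.
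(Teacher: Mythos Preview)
Your proof is correct and takes a genuinely different route from the paper's. The paper fixes $\S\subseteq\Si$, defines $\J'=\{j\in\J:\Sj\supseteq\S\}$, and splits into two cases based on the parity of $|\J'|$: when $|\J'|$ is even it takes $\T=\emptyset$ directly, and when $|\J'|$ is odd it uses a double-counting argument on the set $\{(\T,j):\emptyset\neq\T\subseteq\Sj\setminus\Si,\,j\in\J'\}$ to locate a nonempty $\T$ with $|\{j\in\J':\T\subseteq\Sj\setminus\Si\}|$ odd. Your approach instead sums $c_{\S\cup\T}$ over all $\T\subseteq\Ti$ and shows the total is $1\pmod 2$, so no case split is needed and you actually prove the slightly stronger statement that an \emph{odd} number of such $\T$ exist. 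Both arguments ultimately hinge on the same observation, namely that $j>i$ forces $\Sj\not\subseteq\Si$ (equivalently $|\Sj\cap\Ti|\geq 1$), but your packaging via the parity sum is more uniform and shorter. The paper's case analysis, on the other hand, is more constructive in spirit: in Case~1 it names the witness $\T=\emptyset$ explicitly, which foreshadows the later explicit $\M$-construction in Section~\ref{sec:construction}.
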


\begin{proof}
For every $\S\subseteq\Si$, we define 
\begin{equation}\label{eq:Jprime}
    \J'(\S)\triangleq\{j\in\mathcal{J}:\S_j\supseteq\S\}.
\end{equation} 
We consider the following two cases.

\noindent {\bf Case 1: $|\J'(\S)|$ is even}. 
We pick $\T=\varnothing$, then $\S \cup \T = \S$, which is contained in both $\Si$ and $\Sj$. Therefore, $(\bgi)_{\S\cup\T}=(\bgj)_{\S\cup\T}=1$ for every $j\in \J'(\S)$, and $(\bgj)_{\S\cup\T}=0$ for every $j\in\mathcal{J}\setminus \J'(\S)$. Since $|\{i\}\cup\J'(\S)|$ is odd for every $\S\subseteq\Si$, we have $c_{\S\cup\T}=1$.

\noindent
{\bf Case 2: $|\J'(\S)|$ is odd.} 
We use double counting technique to count the elements of the set 
\[
\mathcal{P}=\{(\mathcal{T},j) \colon \mathcal{T}\subseteq\Sj\setminus\Si,\mathcal{T}\neq\varnothing,j\in\J'(\S)\}.
\]
First,
\[
|\mathcal{P}|=\sum_{j\in\J'(\S)} (2^{|\Sj\setminus\Si|}-1),
\]
which is odd as explained below. Note that the number of subsets of $\Sj\setminus\Si$ excluding the empty set is $2^{|\Sj\setminus\Si|}-1$, and $|\Sj\setminus\Si|\geq 1$ due to $\Sj\not\subseteq\Si$. On the other hand, $|\J'(\S)|$ is also odd. 
Therefore, $|P|$, which is the sum of an odd number of all odd terms, is odd.

Second, 
\[
|\mathcal{P}|=\sum_{\T\subseteq\Ti,\mathcal{T}\neq\varnothing} |\{j\in\J'(\S) \colon \T\subseteq\Sj\setminus\Si\}|.
\]
Therefore, since $|P|$ is odd, there exists at least one $\T\subseteq\Ti$ such that $|\{j\in \J'(\S) \colon \T \subseteq \Sj \setminus \Si\}|$ is odd. For this $\T$, we have $(\bgi)_{\S\cup\T}=0$ and 
\[
(\bgj)_{\S\cup\T}= \begin{dcases*}
        1, & \text{if } $j\in\J'(\S)\text{ and }\mathcal{T}\subseteq\Sj$,\\
        0, & \text{if } $j\in \J'(\S)\text{ and }\mathcal{T}\not\subseteq\Sj$,\\
        0, & \text{if } $j\in\mathcal{J}\setminus\J'(\S)$.\\
\end{dcases*}
\]
Therefore, $c_{\S\cup\T}=1$. 
\end{proof}


The following useful result, which was also established in~\cite[Corollary~1]{li2} via an induction proof, is a simple corollary of Lemma~\ref{lma:basic1}.

\begin{corollary}\label{cor:geq_wi}
For any $i\in[0,2^n-1]$ and $\mathcal{H}\subseteq [i+1,2^{n}-1]$, we have
\begin{equation}\label{eq:geq_wi}
    \w(\bgi\oplus\bigoplus_{h\in\mathcal{H}}\mathbf{g}_h)\geq \w(\bgi).
\end{equation}
\end{corollary}
\begin{proof}
According to Lemma \ref{lma:basic1}, for every subset $\S\subseteq\Si$, there exists at least one subset $\T\subseteq \Ti$ so that $c_{\S\cup\T}=1$. Since the total number of subsets of $\Si$ is $2^{|\Si|}$, we deduce that
\[
\w(\mathbf{c})= \w(\bgi\oplus\bigoplus_{h\in\mathcal{H}}\mathbf{g}_h)\geq 2^{|\Si|}\underset{\text{Lemma}~\ref{cr:weight}}{=}\w(\bgi). \qedhere
\]
\end{proof}


From Corollary \ref{cor:geq_wi}, the $d_{\min}$ of the code $\CI$ (including RM and polar codes) can be easily determined.
Different proofs of this result (via RM codes containing $\CI$) could be found in~\cite[Lemma~3]{hussami} (for polar codes) and \cite[Proposition~3]{bardet} (for decreasing monomial codes).

\begin{corollary} \label{cr:d_min}
The minimum distance of the code $\CI$ (see Section~\ref{sec:decomposition}), which includes RM and polar codes, is 
\begin{equation*}
    d_{\min}  = \min_{i\in\mathcal{I}} \w(\bgi),
\end{equation*}
where $\bgi$ denotes the $i$-th row in $\bGN$.
\end{corollary}
\begin{proof}
According to \eqref{eq:d_min_w_min}, the minimum distance of a linear code is the minimum-weight of any no-nzero codeword. 
From Corollary \ref{cor:geq_wi}, we know that the weight of every codeword in the coset $\CiI$, which has the form $\bgi\oplus\bigoplus_{j\in\mathcal{J}}\bgj$, is at least $\w(\bgi)$. Hence, it follows that $\dm = \wm  = \min_{i\in\mathcal{I}} \w(\bgi)$.
\end{proof}



\subsection{Proof of Theorem~\ref{thm:decomposition}}
\label{app:construction}

Assume that $\I\subseteq [0,N-1]$ satisfies the Partial Order Property, and $i\in \I$ satisfying $\w(\bgi)=\wm$. We first show in Lemma~\ref{lma:M_in_I} that for any $\varnothing \neq \J\subseteq \Ki$, the set $\MJ$ constructed by the $\M$-Construction satisfies $\MJ \subseteq \I\setminus [0,i]$. We then prove in Lemma~\ref{lem:Mwm} that $\MJ$ also satisfies \eqref{eq:decomposition}, that is, the sum of $\bgi$, $\bgj$ with $j\in \J$, and $\bgm$ with $m\in \MJ$ has weight $\wm$. These two lemmas together prove Theorem~\ref{thm:decomposition}. 

We need a simple auxiliary result to prove the lemma~\ref{lma:M_in_I}.

\begin{lemma}
\label{lem:extPO}
Suppose that for $i$ and $m$ in $[0,N-1]$, $\Si\setminus \Sm = \{a_1,\ldots,a_{\ell}\}$, $\Sm \setminus \Si = \{b_1,\ldots,b_u\}$, $\ell \leq u$, and moreover, $a_t<b_t$ for all $t \in [1,\ell]$. Then $i \preceq m$.
\end{lemma}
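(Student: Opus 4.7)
The plan is to exhibit $i \preceq m$ by constructing an explicit chain from $\Si$ up to $\Sm$ using only the two primitive moves of Definition~\ref{def:PO}, and then gluing everything together by transitivity. Concretely, I would first perform $\ell$ single left-swaps, each replacing some $a_t\in\Si$ by $b_t\notin\Si$, landing in an intermediate set that is already a subset of $\Sm$; a single subset-inclusion step would then close the chain.

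First, using the pairing provided by the hypothesis, I would define
\[ \mathcal{S}^{(0)}=\Si,\qquad \mathcal{S}^{(t)}=\bigl(\mathcal{S}^{(t-1)}\setminus\{a_t\}\bigr)\cup\{b_t\}\quad (t=1,\ldots,\ell), \]
and let $i^{(t)}$ denote the integer with $\mathcal{S}_{i^{(t)}}=\mathcal{S}^{(t)}$. For each $t$ I need to verify the three conditions of the swap clause of Definition~\ref{def:PO}: $a_t<b_t$ (given), $a_t\in \mathcal{S}^{(t-1)}$, and $b_t\notin \mathcal{S}^{(t-1)}$. These last two are pure bookkeeping: the only elements removed from $\Si$ before step $t$ are $a_1,\ldots,a_{t-1}$, which are distinct from $a_t$ because the $a_j$'s are distinct elements of $\Si\setminus\Sm$; similarly, the only elements inserted so far are $b_1,\ldots,b_{t-1}$, all distinct from $b_t$, and none equal to $b_t$ since $b_t\notin\Si$ to start with. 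Hence each step $i^{(t-1)}\preceq i^{(t)}$ qualifies as a valid swap move.

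Next, a direct calculation gives
\[ \mathcal{S}^{(\ell)}=\bigl(\Si\setminus\{a_1,\ldots,a_\ell\}\bigr)\cup\{b_1,\ldots,b_\ell\}=(\Si\cap\Sm)\cup\{b_1,\ldots,b_\ell\}, \]
and this is contained in $\Sm=(\Si\cap\Sm)\cup\{b_1,\ldots,b_u\}$ because $\ell\leq u$. Hence by the subset clause of Definition~\ref{def:PO}, $i^{(\ell)}\preceq m$. Chaining the swap steps with this inclusion and applying the transitivity clause repeatedly yields $i=i^{(0)}\preceq i^{(1)}\preceq\cdots\preceq i^{(\ell)}\preceq m$, i.e.\ $i\preceq m$.

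The argument is essentially bookkeeping; the only place where any care is required is in checking swap validity at each intermediate step, and this rests entirely on the distinctness of the $a_t$'s and the $b_t$'s together with the pairing condition $a_t<b_t$. Notably, no step depends on the $a_t$'s or $b_t$'s being arranged in any particular order among themselves, so any labelling of $\Si\setminus\Sm$ and $\Sm\setminus\Si$ satisfying the hypothesis will drive the construction through.
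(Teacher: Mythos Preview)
Your proof is correct and follows essentially the same approach as the paper: both construct a chain starting from $\Si$ by performing $\ell$ single swaps $(a_t\mapsto b_t)$ and then finishing with one subset-inclusion step to reach $\Sm$. Your version is in fact more carefully written, since you explicitly verify that each intermediate swap satisfies the conditions of Definition~\ref{def:PO}, which the paper leaves implicit.
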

\begin{proof}
We define the indices $h_0,h_1,\ldots,h_{\ell+1}$ as follows.
\begin{itemize}
    \item $h_0 \triangleq i$,
    \item $\S_{h_t} \triangleq \big(\S_{h_{t-1}} \setminus \{a_t\}\big) \cup \{b_t\}$, $1\leq t \leq \ell$,
    \item $\S_{h_{\ell+1}} \triangleq \S_{h_{\ell}} \cup \{b_{\ell+1},\ldots,b_u\}$.
\end{itemize}
Clearly, $h_{\ell+1}=h_u$. Furthermore, by Definition~\ref{def:PO}, we have
\[
i = h_0 \preceq h_1 \preceq \cdots \preceq h_{\ell} \preceq h_{\ell+1} = m,
\]
which implies that $i \preceq m$.
\end{proof}

\begin{lemma} \label{lma:M_in_I} If $\I\subseteq [0,N-1]$ satisfies the Partial Order Property, $i\in \I$ satisfying $\w(\bgi)=\wm$, and $\J \subseteq \Ki$, then the set $\MJ$ generated by the $\M$-construction is a subset of $\I\setminus [0,i]$.
\end{lemma}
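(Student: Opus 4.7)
The plan is to prove that $i \preceq m_{\mathcal{J}'}$ in the partial order of Definition~\ref{def:PO} for every valid $\mathcal{J}' \subseteq \mathcal{J}$ (with $|\mathcal{J}'| \geq 2$) contributing to $\M(\J)$. Once this is shown, the Partial Order Property of $\I$ together with $i \in \I$ immediately yields $m_{\mathcal{J}'} \in \I$. Moreover, $|\mathcal{J}'| \geq 2$ forces $\mathcal{S}_{m_{\mathcal{J}'}} \setminus \Si \neq \emptyset$ and hence $m_{\mathcal{J}'} \neq i$, so combining $i \preceq m_{\mathcal{J}'}$ with the implication $i \preceq m_{\mathcal{J}'} \Rightarrow i \leq m_{\mathcal{J}'}$ noted in Definition~\ref{def:PO} yields $m_{\mathcal{J}'} > i$. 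Therefore $m_{\mathcal{J}'} \in \I \setminus [0,i]$, as required.

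First I would unpack the structure of $\mathcal{S}_{m_{\mathcal{J}'}}$. Split $\mathcal{J}'$ into the swap part $\mathcal{J}'_s = \{j \in \mathcal{J}' : w(\bgj) = w(\bgi)\}$ and the addition part $\mathcal{J}'_a = \mathcal{J}' \setminus \mathcal{J}'_s$. By Lemma~\ref{lma:Ki}(b), together with the left-swap observation $k_j < r_j$ forced by $j > i$, every $j \in \mathcal{J}'_s$ satisfies $\Sj = (\Si \setminus \{k_j\}) \cup \{r_j\}$ with $k_j \in \Si$, $r_j \notin \Si$, $k_j < r_j$, while every $j \in \mathcal{J}'_a$ satisfies $\Sj = \Si \cup \{r_j\}$ with $r_j \notin \Si$. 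The defining distinctness of $\mathcal{J}'$ (at most one $j \in \mathcal{J}'$ has $j_k = 1$ for each $k \in \mathcal{R}$) makes the $r_j$'s pairwise distinct. A direct computation from the definition of $\mathcal{S}_{m_{\mathcal{J}'}}$ then gives
\[
\Si \setminus \mathcal{S}_{m_{\mathcal{J}'}} = K \quad\text{and}\quad \mathcal{S}_{m_{\mathcal{J}'}} \setminus \Si = R,
\]
where $K := \{k_j : j \in \mathcal{J}'_s\} \subseteq \Si$ and $R := \{r_j : j \in \mathcal{J}'\} \subseteq [0,n-1]\setminus \Si$, with $|R| = |\mathcal{J}'| \geq |K|$.

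The heart of the argument is to apply Lemma~\ref{lem:extPO} with a carefully chosen enumeration of $K$ and $R$. Specifically, I would invoke Hall's marriage theorem on the bipartite graph with parts $K$ and $R$ and edge set $\{(k,r) : k < r\}$ to produce an injection $\sigma \colon K \hookrightarrow R$ satisfying $\sigma(k) > k$. To verify Hall's condition for a subset $K' \subseteq K$, set $c = \min K'$; for each $k \in K'$ choose any $j \in \mathcal{J}'_s$ with $k_j = k$, so that $r_j > k_j = k \geq c$. Distinct $k$'s in $K'$ correspond to distinct such $j$'s (since $k_j$ is a function of $j$), and the distinctness of the $r_j$'s then delivers at least $|K'|$ elements of $R$ strictly greater than $c$. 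With $\sigma$ obtained, enumerate $K = \{a_1,\dots,a_\ell\}$ in any order, set $b_t := \sigma(a_t)$ for $t \in [1,\ell]$, and list the remaining elements of $R$ as $b_{\ell+1},\dots,b_u$; then $a_t < b_t$ for every $t \in [1,\ell]$, so Lemma~\ref{lem:extPO} delivers $i \preceq m_{\mathcal{J}'}$.

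The main obstacle, and the reason the result is not completely routine, is that pairing $K$ and $R$ naively in sorted order can violate the inequality $a_t < b_t$: for instance, $K = \{0,3\}$ and $R = \{1,2,4\}$ produces the sorted pairs $(0,1)$ and $(3,2)$, the second of which fails, whereas the Hall matching $0 \mapsto 1$, $3 \mapsto 4$ succeeds. So one genuinely needs a system of distinct representatives rather than a direct sorted comparison; once the matching is in place the remainder of the proof is bookkeeping from the $\M$-construction.
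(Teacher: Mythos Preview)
Your proof is correct and follows the same overall arc as the paper: compute $\Si\setminus\Sm$ and $\Sm\setminus\Si$, then invoke Lemma~\ref{lem:extPO} to conclude $i\preceq m$, and finish with the Partial Order Property. The difference lies in how you produce the required injection $K\hookrightarrow R$ with $\sigma(k)>k$. You appeal to Hall's marriage theorem; the paper constructs the injection directly. Specifically, the paper picks, for each distinct element $a_t\in K$, one representative swap $j_t\in\Jp$ with $\Si\setminus\S_{j_t}=\{a_t\}$, and pairs $a_t$ with the unique element $b_t$ of $\S_{j_t}\setminus\Si$; then $a_t<b_t$ is automatic from $j_t>i$, and the $b_t$'s are distinct because the $\M$-construction forces all the $r_j$'s to be pairwise distinct. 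So the obstacle you flag (that sorted pairing can fail) is real for a naive enumeration, but the natural pairing $(k_j,r_j)$ coming from the swap structure already furnishes a valid system of distinct representatives without Hall. Your route is slightly heavier machinery for the same conclusion; the paper's is a bare-hands construction that exploits the structure of $\Ki$ more directly.
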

\begin{proof}
Since $\I$ satisfies the Partial Order Property, it suffices to show that $i\preceq m$ for every $m \in \MJ$. Note that $i\preceq m$ and $i\neq m$ imply $i < m$.

Take $m = \mJp$ created as in the $\M$-Construction.
Due to Lemma~\ref{lma:Ki}~(a), for each $j \in \Jp$, it holds that $|\Sj\setminus \Si|=1$ and either $|\Si\setminus \Sj| = 1$ or $|\Si\setminus \Sj|=0$ (i.e., $\Si \subseteq \Sj$).
Let $\Jp = \{j_1,j_2,\ldots,j_{|\Jp|}\}$. 
Rearranging, if necessary, let $1\leq \ell\leq p\leq u=|\Jp|$ be such that 
\begin{itemize}
    \item[(C1)] $|\Si\setminus \S_{j_1}|=|\Si\setminus \S_{j_2}|=\cdots=|\Si\setminus \S_{j_p}| = 1$,
    \item[(C2)] $|\Si\setminus \S_{j_{p+1}}|=|\Si\setminus \S_{j_{p+2}}|=\cdots=|\Si\setminus \S_{j_u}| = 0$,
     \item[(C3)] $\Si\setminus \S_{j_t}$, $t \in [1,\ell]$, are pair-wise disjoint (singleton) sets,
     \item[(C4)] $\cup_{t\in [\ell+1,p]}(\Si\setminus\S_{j_t}) \subseteq \cup_{t \in [1,\ell]}(\Si\setminus\S_{j_t})$.
\end{itemize}
Let $\Si\setminus \S_{j_t} = \{a_t\}$, $1\leq t\leq \ell$. Note that all these $\ell$ (singleton) sets are pair-wise disjoint according to (C3). From the $\M$-Construction, (C2), and (C4), we deduce that 
\[
\begin{split}
\Si \setminus \Sm &= \Si \setminus \big(\cap_{j \in \Jp}\Sj\big)\\
&= \cup_{j\in \Jp} \big(\Si\setminus\Sj)
= \cup_{t \in [1,u]} \big(\Si\setminus\S_{j_t})\\
&= \cup_{t \in [1,\ell]} \big(\Si\setminus\S_{j_t}) = \{a_1,\ldots,a_{\ell}\},
\end{split}
\]
where the second equality is due to De Morgan's laws and the fourth is due to (C2) and (C4). 
Let $\S_{j_t}\setminus \Si = \{b_t\}$, $1 \leq t \leq u$. Then due to the $\M$-Construction, all these (singleton) sets are pair-wise disjoint as well. 
Moreover, as $i \preceq j_t$, we have $a_t < b_t$ for all $1 \leq t \leq \ell$. 
We also have
\[
\begin{split}
\Sm \setminus \Si &= \cup_{j \in \Jp} (\Sj\setminus \Si)\\
&= \cup_{t \in [1,u]} (\S_{j_t}\setminus \Si)=\{b_1,\ldots,b_u\}.
\end{split}
\]
Applying Lemma~\ref{lem:extPO} to $\Si$ and $\Sm$, we conclude that $i \preceq m$ as desired.  
\end{proof}

Before proving our key Lemma~\ref{lem:Mwm}, we need the following important result.

\begin{lemma}\label{lma:SUT_0_1}
If $\I\subseteq [0,N-1]$ satisfies the Partial Order Property, $i\in \I$ satisfies $\w(\bgi)=\wm$, and $\J \subseteq \Ki$, then
for each subset $\mathcal{S}\subseteq \Si$, there exists a unique subset $\mathcal{T}^*(\S)\subseteq \R$, where $\R \triangleq \cup_{j\in\mathcal{J}}\big(\Sj\setminus\Si\big)$, such that
\begin{equation*}
\label{eq:SUT_0_1}
c_{\mathcal{S}\cup \T} \triangleq \begin{dcases*}
        1, & \text{if } $\mathcal{T} = \mathcal{T}^*(S)$,\\
        0, & otherwise,
\end{dcases*}
\end{equation*}
where 
\[
\mathbf{c} \triangleq \mathbf{g}_{i} \oplus \bigoplus_{j \in \mathcal{J}} \mathbf{g}_{j} \oplus \bigoplus_{m \in \mathcal{M}} \mathbf{g}_{m}.
\]
and $c_{\mathcal{S}\cup \T}$ denotes a coordinate of $\mathbf{c}$ indexed by ${\S\cup \T}$. 
\end{lemma}
\begin{proof}
We prove the lemma by providing an explicit construction of the set $\mathcal{T}^*(S)$ for every $\S\subseteq \Si$. First, let $\J^*(\S)$ denote the set of rows in $\mathcal{J}$ such that $\S\subseteq \Sj$. 
\begin{equation*}
\mathcal{J}^*(\mathcal{S})=\{j\in\mathcal{J} \colon \S\subseteq\Sj\}.
\end{equation*}
We define $\T^*(\S)$ as the set consisting of indices in $\R = \cup_{j\in\mathcal{J}}\big(\Sj\setminus\Si\big)$ that belong to an odd number of $\Sj$, $j\in \J^*(\S)$.

We divide the remainder of the proof into two parts, showing that $c_{\S\cup \T} = 1$ if $\T = \T^*(\S)$ in Lemma \ref{lma:c=1} and $c_{\S\cup \T} = 0$ if $\T \neq \T^*(\S)$ in Lemma \ref{lma:c=0}. Both lemmas can be found at the end of this appendix. 
\end{proof}

We illustrate in Example~\ref{ex:T*} how $\T^*(\S)$ discussed in the proof of Lemma~\ref{lma:SUT_0_1} can be found.

\begin{example}
\label{ex:T*}
We consider $n = 4$, $N = 16$, $i = 3$, and $\J$ as given in Example \ref{ex:MJ1}: 
\[
\begin{split}
\mathcal{J} &=\{5,6,7,9,10\}\\ 
&=\{(0101)_2,(0110)_2,(0111)_2,(1001)_2,(1010)_2\}.
\end{split}
\]
Note that $\R = \cup_{j\in\mathcal{J}}\big(\Sj\setminus\Si\big) = \{2,3\}$.
The subset $\mathcal{S}\subseteq\Si=\{0,1\}$ can be $\varnothing, \{0\}$, $\{1\}$, or $\{0,1\}$.
Let us consider $\mathcal{S}=\{{\color{blue}1}\}$. Then, 
\[
\mathcal{J}^*(S)= \{6,7,10\} = \{(01{\color{blue}1}1)_2,(01{\color{blue}1}0)_2,({\color{red}1}0{\color{blue}1}0)_2\}.
\]
Since 2 appears twice and 3 appears once among $\S_6$, $\S_7$, and $\S_{10}$, we have $\mathcal{T}^*(\S)= \{{\color{red}3}\}$. Consequently, $c_{\mathcal{S}\cup \T^*(\S)}=c_{(1010)_2}=c_{10}=1$. 

To determine all the `1' coordinates in the codeword, we find the sets $\T^*(\S)$ corresponding to other subsets $\mathcal{S}$ of $\Si$, which are $\mathcal{T}^*(\{0\})=\{3\}$, $\mathcal{T}^*(\{0,1\})=\{2\}$, and $\mathcal{T}^*(\varnothing)=\{2\}$. Then, $c_{\mathcal{S}\cup \T^*(\S)}=c_{(1001)_2}=c_{9}=1$, $c_{(0111)_2}=c_{7}=1$, and $c_{(0100)_2}=c_{4}=1$. The remaining coordinates in the codeword are `0'. 
\end{example}

\begin{lemma} \label{lem:Mwm} If $\I\subseteq [0,N-1]$ satisfies the Partial Order Property, $i\in \I$ satisfies $\w(\bgi)=\wm$, and $\J \subseteq \Ki$, then the set $\MJ$ generated by the $\M$-construction satisfies 
\[
\w\big(\bgi\oplus \bigoplus_{j\in\J}\bgj \oplus  \bigoplus_{m\in\M(\J)}\bgm\big) = \wm.
\]
\end{lemma}
\begin{proof} 
Let $\bc = \bgi\oplus \bigoplus_{j\in\J}\bgj \oplus  \bigoplus_{m\in\M(\J)}\bgm$. 
Recall that we use the set $\S_h$, $h \in [0,N-1]$, to index the coordinate $c_h$ of $\bc$, where $\S_h\triangleq \supp(\bin(h))\subseteq [0,n-1]$. Due to Lemma~\ref{lem:main}, $\bgi$ and $\bgj$, $j\in \J$, have a zero entry at every index $h$ with $\S_h \not\subset \Si \cup \R$, where $\R\triangleq \big(\cup_{j\in\mathcal{J}}\big(\Sj\setminus\Si\big)\big)$. Moreover, due to the way we construct $\MJ$ in the $\M$-Construction, since $\Sm \subseteq \Si \cup \R$, the row $\bgm$ with $m\in \MJ$ also has a zero at such indices. Therefore, to determine the Hamming weight of $\bc$, we only need to consider the coordinates of $\bc$ indexed by subsets of $\Si\cup \R$ and ignore the rest, because they are all zeros.


Let $\T^*(\S)$ be defined as in the statement of Lemma~\ref{lma:SUT_0_1} for each set $\S \subseteq \Si$. Since $\Si \cap \R =\varnothing$, the collection of subsets of $\Si \cup \R$ can be written as
\begin{multline*}
\{\mathcal{S}\cup\mathcal{T} \colon   \mathcal{S}\subseteq\mathcal{S}_i, \mathcal{T}\subseteq\R\} = \\ 
\{\mathcal{S}\cup\T^*(\S) \colon \S \subseteq \Si\}\cup\{\mathcal{S}\cup\mathcal{T} \colon \S \subseteq \Si, \mathcal{T}\neq\T^*(\S)\}.
\end{multline*}
According to Lemma~\ref{lma:SUT_0_1}, $c_{\S\cup\T}=1$ if $\T = \T^*(\S)$ and $0$ otherwise.
Therefore,
\[
\begin{split}
\w(\mathbf{c})&=|\{\mathcal{S}\cup\T^*(\S) \colon    \mathcal{S}\subseteq\mathcal{S}_i\}| \\ 
&=|\{\mathcal{S}\subseteq\mathcal{S}_i\}|=2^{|\mathcal{S}_i|}=\wm.
\end{split}
\]
This proves the lemma.
\end{proof}

The next two lemmas settle the remaining parts in the proof of Lemma~\ref{lma:SUT_0_1}.

\begin{lemma}\label{lma:c=1}
If $\I\subseteq [0,N-1]$ satisfies the Partial Order Property, $i\in \I$ satisfies $\w(\bgi)=\wm$, $\J \subseteq \Ki$, and $\MJ$ is created by the $\M$-Construction, then for every $\mathcal{S}\subseteq\mathcal{S}_i$,
\[
c_{\mathcal{S}\cup \mathcal{T}^*(\S)} = \bigoplus_{m\in \{i\}\cup \mathcal{J} \cup \MJ} g_{m,\mathcal{S}\cup \T^*(\S)}= 1.
\]
where $\T^*(\S)$ is defined as in the proof of Lemma~\ref{lma:SUT_0_1} and $c_{\mathcal{S}\cup \mathcal{T}^*(\S)}$ is the coordinate indexed by $\mathcal{S}\cup \mathcal{T}^*(\S)$ of
\[
\mathbf{c}\triangleq\mathbf{g}_{i} \oplus \bigoplus_{j \in \mathcal{J}} \mathbf{g}_{j} \oplus \bigoplus_{m \in \mathcal{M}} \mathbf{g}_{m}.
\]
\end{lemma}
\begin{proof}
Recall that from Lemma \ref{lem:main}, $\bg_{m,c}=1$ if and only if $\S_c\subseteq\S_m$. To prove Lemma \ref{lma:c=1}, it suffices to show that the number of $m \in \{i\}\cup \mathcal{J} \cup \mathcal{M}$ satisfying $\mathcal{S}\cup \T^*(\S) \subseteq \S_m$ is odd (so that $c_{\mathcal{S}\cup \mathcal{T}^*(\S)}=1$). 

Note that in the $\M$-Construction, the rows of $\MJ$ are $m_{\Jp}$ with $|\Jp|\geq 2$. 
To facilitate the proof, we also include $\Jp$ with $|\Jp| < 2$ by setting $\S_{m_{\varnothing}} \triangleq \S_i$ and $\S_{m_{\{j\}}} \triangleq \Sj$. In this way, any row index $m \in \{i\}\cup \J \cup \MJ$ corresponds to an element $m_{\Jp}$ for some $\Jp \subseteq \J$. 

According to the definition of $\J^*(\S)$ in the first part of the proof of Lemma~\ref{lma:SUT_0_1},
for any $\J'\not\subseteq\mathcal{J}^*(\mathcal{S})$, we have $\S \not\subseteq \Sj$ for some $j\in \Jp$, and hence, $\mathcal{S}\not\subseteq\Si\cap\big(\cap_{j\in\J'}\Sj\big)$. 
Furthermore, since $\S \subseteq \Si$, we have $\S \not\subseteq \cup_{j\in \Jp}(\Sj\setminus \Si)$. 
Therefore, 
\[
\S\cup\T^*(\S) \not\subseteq \cup_{j\in \Jp}(\Sj\setminus \Si) \cup \big(\Si\cap\big(\cap_{j\in\Jp}\Sj\big)\big) =\mathcal{S}_{m_{\J'}}.
\]
Thus, we only need to consider $m=m_{\J'}$ where $\Jp\subseteq\mathcal{J}^*(\mathcal{S})$. 
Note that if $\Jp \subseteq \J^*(\S)$ then $\S \subseteq \Si\cap\big(\cap_{j\in\Jp}\Sj\big) \subseteq \S_{m_{\Jp}}$.
To have $\S \cup \T^*(\S)\subseteq \S_{m_{\Jp}}$, we only need $\T^*(\S) \subseteq \cup_{j\in \Jp}(\Sj\setminus \Si)$.
According to the $\M$-Construction, the sets $\Jp$ of interest should also satisfy that the sets $\Sj\setminus \Si$, $j \in \Jp$, are pairwise disjoint. Note that we now can safely ignore the requirement that $|\Jp| \geq 2$ in the $\M$-Construction as we have set a convention for the cases $\Jp=\varnothing$ and $\Jp=\{j\}$.

From the above discussion, it suffices to show that the number of sets $\Jp\subseteq \J^*(\S)$ satisfying the following conditions (C1) and (C2) is odd.
\begin{itemize}
    \item (C1) $\Sj\setminus \Si$, $j \in \Jp$, are pairwise disjoint.
    \item (C2) $\T^*(\S) \subseteq \cup_{j\in \Jp}(\Sj\setminus \Si)$.
\end{itemize}
To this end, for each $j\in \JsS$ let $r_j$ be the unique element in $\Sj\setminus \Si$, and for each $r\in \R \triangleq \cup_{j\in\mathcal{J}}\big(\Sj\setminus\Si\big)$ denote $\J_r\triangleq \{j\in \JsS\colon r_j=r\}$. Moreover, let $\O\triangleq \{r \in \R\colon |\J_r|\text{ is odd}\}$ and $\E\triangleq \{r \in \R\colon |\J_r|\text{ is even}\}$. Then $\T^*(\S)=\O$ and $\R=\O\cup \E$.

Recall that we aim to prove that the number of $\Jp\subseteq \JsS$ satisfying (C1) and (C2) is odd. We note that these conditions are satisfied if and only if 
\[
|\J'\cap\J_r|= 
\begin{dcases*}
        1, & \text{for} $r \in \O$,\\
        0 \text{ or } 1, & \text{for} $r \in \E$.
\end{dcases*}
\]
In fact, to satisfy (C1), $\Jp$ must contain \textit{at most} one element from each $\J_r$, for every $r \in \R$. To also satisfy (C2), because $\T^*(\S)=\O$, $\Jp$ must contain \textit{exactly} one element from each $\J_r$ for every $r\in \O$.
Note that there are $|\J_r|$ ways to pick an element from $\J_r$, $r\in \O$, and $|\J_r|+1$ ways to choose no element or one from $\J_r$, $r \in \E$.
Thus, the number of $\Jp\subseteq \JsS$ that meets both (C1) and (C2) is equal to 
\[
\prod_{r\in \O} \underbrace{|\mathcal{J}_r|}_{\text{Odd}}\times \prod_{r \in \E} \underbrace{(\underbrace{|\mathcal{J}_r|}_{\text{Even}}+1)}_{\text{Odd}},
\]
which is an odd number as desired.  
\end{proof}

\begin{lemma}\label{lma:c=0}
If $\I\subseteq [0,N-1]$ satisfies the Partial Order Property, $i\in \I$ satisfies $\w(\bgi)=\wm$, $\J \subseteq \Ki$, and $\MJ$ is created by the $\M$-Construction, then for every $\mathcal{S}\subseteq\mathcal{S}_i$ and $\mathcal{T}\neq \mathcal{T}^*(\S)$,
\[
c_{\mathcal{S}\cup T} = \bigoplus_{m\in \{i\}\cup \mathcal{J} \cup \MJ} g_{m,\mathcal{S}\cup \mathcal{T}}= 0,
\]
where $c_{\mathcal{S}\cup T}$ is the coordinate indexed by $\S\cup\T$ of 
\[
\mathbf{c}\triangleq\mathbf{g}_{i} \oplus \bigoplus_{j \in \mathcal{J}} \mathbf{g}_{j} \oplus \bigoplus_{m \in \mathcal{M}} \mathbf{g}_{m}.
\]
\end{lemma}
\begin{proof}
We follow the same proof strategy as in Lemma~\ref{lma:c=1} and also define the sets $\J_r$, $\O$, and $\E$, noting that $\R=\O\cup \E$ and $\T^*(\S)=\O$. 
For all $\S\subseteq \Si$ and $\T\neq \T^*(\S)$, our objective is to show that the number of $\Jp\subseteq \JsS$ satisfying both (C1) and (C3) is even (so that $c_{\S\cup\T}=0$), with
\begin{itemize}
    \item (C1) $\Sj\setminus \Si$, $j \in \Jp$, are pairwise disjoint.
    \item (C3) $\T \subseteq \cup_{j\in \Jp}(\Sj\setminus \Si)$.
\end{itemize}
We observe that a set $\Jp\subseteq \JsS$ satisfies (C1) if and only if $\Jp$ contains \textit{at most} one element of each $\J_r$, for every $r\in \R$. Furthermore, the set $\Jp$ also satisfies (C3) if and only if it contains \textit{exactly} one element of each $\J_r$ with $r \in \T$. Therefore, the number of $\Jp\subseteq \JsS$ that meets both (C1) and (C3) is
\begin{equation}
\label{eq:number_Jp}
\prod_{r\in \T} |\J_r|\times \prod_{r \in \R\setminus \T}(|\J_r|+1),
\end{equation}
which is an even number. Indeed, as $\T \neq \T^*(\S)=\O$, either $\T\cap \E \neq \varnothing$ or $\R \setminus \T \supseteq \O\setminus \T \neq \varnothing$. 
If the former holds, then the first product in \eqref{eq:number_Jp} contains an even factor $|\J_r|$ for some $r \in \E\cap \T$ and is therefore even. If the latter holds, then the second product contains an even factor $(|\J_r|+1)$ for some $r \in \O\setminus \T$. In either case, the number of $\Jp$ as given by \eqref{eq:number_Jp} is even as desired
\end{proof}

\end{document}